\newcommand {\figwidth} {100mm}
\newif\ifpdf
\newcommand\N{\ensuremath{\mathbb{N}}}
\newcommand\R{\ensuremath{\mathbb{R}}}
\newcommand\Z{\ensuremath{\mathbb{Z}}}
\let\implies\Rightarrow
\let\epsilon\varepsilon
\let\phi\varphi
\pgfplotsset{compat=1.18}
\theoremstyle{definition}
\newtheorem*{note}{Note}
\newtheorem*{claim}{Claim}
\newtheoremstyle{enumstyle}{0}{}{}{}{\bfseries}{:}{ }{\thmname{#1} \thmnumber{#2}\thmnote{#3}}
\theoremstyle{enumstyle}
\newmdtheoremenv[nobreak=true,skipabove=10pt,skipbelow=\topsep]{definition}{Definition}
\newmdtheoremenv[nobreak=true]{theorem}{Theorem}
\newmdtheoremenv[nobreak=true]{corollary}{Corollary}[section]
\newmdtheoremenv[nobreak=true]{lemma}{Lemma}[section]
\def\thm@space@setup{%
  \thm@preskip=\parskip \thm@postskip=0pt
}
\newcommand{\too}{\longrightarrow}
\renewenvironment{proof}
{ 
    \vspace{6 pt}
    \begin{mdframed}[
        linewidth=1pt,
        skipabove=\topskip,
        skipbelow=0pt,
        innertopmargin=0pt,
        innerbottommargin=6pt,
        bottomline=false,
        topline=false,
        rightline=false]%
    \noindent \textit{\textbf{Proof.}}  
}
{%
    \qed
\end{mdframed}
    \vspace{6 pt}
}
\author{Cian Luke Martin}
\begin{document}

\frontmatter

\logo{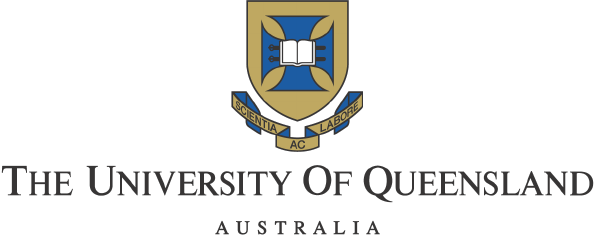}
\logoscale{1}

\title{Perturbative Quantization of Modified Maxwell Electrodynamics}
\author{Cian Luke Martin}
\authorqual{}
\supervisor{Dr Gabriele Tartaglino Mazzucchelli}
\department{Physics}
\school{Mathematics and Physics}

\titlepage

\chapter{Abstract}

The standard model of particle physics is incomplete, with problems such as dark matter, quantum gravity and matter-antimatter asymmetry remaining unsolved despite extensive research. The most common modifications of the standard model aim to account for undiscovered physics through the addition of hypothetical particles to the standard model. A less common route in this endeavour is the introduction of new interactions between known particles. Further, the photon does not directly interact with itself in the standard model. However, a new nonlinear model called Modified Maxwell electrodynamics, or \textit{ModMax} for short, predicts a photon capable of interacting with itself without breaking the notable symmetries of Maxwell's theory: conformal invariance and electromagnetic duality.

ModMax has been studied extensively at the classical level with applications in strongly coupled condensed matter systems, however remains largely untouched in a quantum context. As such, it was the central aim of this project to perform the perturbative quantization of this theory. Using the background field method and dimensional regularization, I obtained novel corrections beyond what the classical theory predicts. By calculating the effective action, I showed that these corrections vanish in a constant background field, and are not of the form of the classical theory for a varying background field. 

Motivated by the form of the corrections obtained for ModMax, I applied the method I developed to quantize ModMax to its two dimensional analogue theory. This was the secondary aim of the project, and I obtained the effective action for a general background by evaluating all one loop Feynman diagrams, as well as the separate infinite series of two vertex diagrams. Lastly, I considered an alternative approach to quantization using auxiliary fields that captured the nonlinearity, and demonstrated that it is not possible to study ModMax in this fashion without breaking Lorentz symmetry. As few theories of nonlinear electrodynamics have been explored on the quantum domain, this quantization of ModMax represents a forward step in this endeavour. While ModMax's theoretical effects remain on a scale unreachable by current experimental techniques, I nonetheless characterized ModMax's predictions and its analogue's behaviour on the quantum domain.

\chapter{Acknowledgements}

I would like to thank Gabriele Tartaglino Mazzucchelli, my supervisor, for his guidance and support throughout this project. My development as an undergraduate physics student was undoubtedly shaped by Gabriele's teaching, direction and collaboration. I'd also like to thank Christian Ferko for invaluable feedback and discussions in the editing of this thesis.
I am endlessly grateful to my family and friends for their support throughout my studies.

\newpage
\textbf{Midnight's Burden}\\ \\

    It's Midnight's lonesome burden, \\
    to hold here those awake. \\
    Some whose sleep they stirred in, \\
    and some who did not break. \\

    To carry us bound waking, \\
    entrapped by our own thoughts. \\
    A prison of self making, \\
    we lie awake distraught. \\

    Midnight said to me last time, \\ 
    \textit{I see you far too much.} \\
    \textit{Let sleep take you, in your prime,} \\
    \textit{and enjoy dawn's soft touch.} \\

    Next time I meet you, Midnight, \\
    I'll tell you how it felt, \\
    To see that first golden light, \\
    and let my worries melt.

\tableofcontents

\listoffigures
\chapter{List of Symbols}


The following list is neither exhaustive nor exclusive, but may be helpful.
\begin{list}{}{%
\setlength{\labelwidth}{24mm}
\setlength{\leftmargin}{35mm}}
\item[$d$] number of spacetime dimensions (either $1+3$ or $1+1$)
    \item[$\mu$, $\nu$, $\rho$, $\tau$\dotfill] $d$-vector (usually $4$-vector) indices running $(0,1,2,3)$ 
    \item[$i$, $j$, $k$ \dotfill] in $d=4$, 3-vector indices running $(1,2,3)$ 
        \item[$i$, $j$, $k$ \dotfill]  in $d=2$, boson indices running $(1,\cdots,N)$, $N\in \N$
    \item[$g^{\mu \nu}$] metric tensor (always Minkowski or Euclidean)
        \item [$\R$, $\N$, $\Z$] real numbers, natural numbers, and integers respectively
        \item [$\mathcal{O}\left( \gamma^{n} \right) $] up to order $\gamma^{n}$, higher powers are discarded
\end{list}

\mainmatter

\chapter{Review: Introduction}

\section{Preface}

While common extensions to the standard model in the search for new physics often add new theoretical particles, a less common route in this endeavour is the addition of novel interactions between known particles. An emerging category of such extensions is models of \textit{nonlinear electrodynamics}, that is, adding self-interactions of photons. Such interactions break the principle of superposition that arises from the linearity of Maxwell's equations and thus are phenomenologically significant only at extreme scales.

These models have been extensively studied at the classical level to solve problems in cosmology and supergravity as well as in strongly coupled condensed matter systems where photon self-interactions would contribute significantly \cite{Sorokin_2022}. However, they remain largely unstudied in the quantum domain due to the difficulty the nonlinearity introduces to quantization procedures.

There are a number of ways to extend electrodynamics while preserving different properties of interest within classical electromagnetism. The prototypical example of such an extension is the Born-Infeld theory, proposed in 1934 to solve the infinite self-energy of an electron \cite{Born1934, Born1934_2, Born1935_3, bb1988}. Born and Infeld achieved this by introducing a maximum possible electric field strength in their theory. However this modification introduces a characteristic energy scale (the maximum field strength) which breaks the scale invariance present in Maxwell electrodynamics. Note that Born-Infeld theory still preserves a symmetry of Maxwell's equations called electromagnetic duality.

It was long thought that there were no possible extensions to Maxwell electrodynamics that would preserve both of the notable present symmetries: scale invariance and electromagnetic duality. In recent literature \cite{Bandos_2020, Sorokin_2022, Lechner_2022} however, a novel modification to Maxwell's theory of electromagnetism (electrodynamics) was discovered that achieves this: \textit{Modified Maxwell electrodynamics} or \textit{ModMax} for short. It was further proved that ModMax is the only theory that achieves this, namely, it is the unique nonlinear extension that preserves both of the notable symmetries of Maxwell's original theory: conformal invariance and electromagnetic duality. The beauty inherent in the unique preservation of these symmetries aside, ModMax is also of particular interest as such symmetries can lead to novel observable implications when the theory is quantized.  

Additionally, as we expect such symmetries to be respected in classical limits, it is of great interest whether such symmetries are fundamental or broken at the quantum level. While the domain of effect of such extensions is beyond current experimental techniques, the presence of nonlinear effects represents a conceptual shift in how we describe electromagnetism worthy of our study.

\section{Introduction}

In this thesis, we perform the perturbative quantization of ModMax and calculate such first quantum corrections that arise within this theory. We also generalize our argument to other higher derivative theories in $1+1$ spacetime dimensions.

The process of quantization, the transfer of a classical theory to the quantum domain, begins with formulating a Lagrangian, an object which completely specifies the theory and the equations of motion it predicts. If the Lagrangian describes non-interacting particles, then it is often able to be solved exactly for the equations of motion. However, among interacting theories, very few admit an exact solution and thus we must employ the use of perturbation theory \cite{Schroeder,GA}. To achieve this, we consider the interaction to have a small effect relative to the free evolution of the particles and expand in an increasing number of interactions. This is well suited to nonlinear theories of electrodynamics where the additional interaction term is separable, as the strength of the interaction is necessarily extremely small due to lack of classical observation.

However, due to the nonlinearity of ModMax and the non-analytic nature of the square root it contains, perturbation theory alone cannot yield a quantum version of this theory. Therefore, we must expand the interaction term about some non-zero background, rather than the vacuum. In most nonlinear field theories, the weak field limit reduces to Maxwell's equations, which would allow one to truncate higher powers of fields in such an expansion immediately. However, ModMax only reduces to Maxwell's theory in the \textit{non-interacting limit}, namely when the strength of the self-interaction goes to zero, which is distinct. Instead, we make use of the \textit{background field method}, where we expand about a fixed classical background field. This fixed non-zero classical background provides a valid point to expand about, and will reduce correctly in the limit to Maxwell's equations.

\section{Classical Electromagnetism}

Beginning with Maxwell's theory, Maxwell's Lagrangian is expressible as
\begin{align}
    \mathcal{L} = -\frac{1}{4}F^{\mu \nu} F_{\mu \nu}
,\end{align}
where the field strength $F_{\mu \nu}$ is defined by
\begin{align}
    F_{\mu \nu} &= \partial_\mu A_\nu - \partial_\nu A_{\mu}
,\end{align}
for four vector potential $A_{\mu}$, with $\mu = 0,1,2,3$. We can also write the electric and magnetic fields explicitly with derivatives of this potential such that for $i = 1,2,3$ we have
\begin{align}
    E_i = \partial_0 A_i - \partial_i A_0, && B_i = -\epsilon_{ijk} \partial^j A^k
,\end{align}
where we use Einstein notation in which summation over repeated indices is implied.

\begin{definition}
    The \textbf{Hodge dual} of the field strength tensor $F_{\mu \nu}$ is defined as
    \begin{align}
        \widetilde{F}^{\mu \nu} = \frac{1}{2} \epsilon^{\mu \nu \rho \tau} F_{\rho \tau}
    ,\end{align}
    where $\epsilon^{\mu \nu \rho \tau} = -\epsilon^{\mu \nu \tau \rho}$ is the Levi-Civita symbol that is antisymmetric under all index exchanges.
\end{definition}

Applying the Euler-Lagrange equation,
\begin{align}
    \partial_\mu \pdv{\mathcal{L}}{\left( \partial_\mu A_\nu \right) } &=  \pdv{\mathcal{L}}{A_\nu}
,\end{align}
leads to the equations of motion
\begin{align}
    \partial_\mu F^{\mu \nu} = 0 && \partial_\mu \widetilde{F}^{\mu \nu} = 0,
    \intertext{which can be written in the more familiar form}
    \pdv{\vb{E}}{t} = \grad \times \vb{B}, &&
    \pdv{\vb{B}}{t} = - \grad \times \vb{E}, \nonumber\\
    \grad \cdot \vb{B} = 0, &&
    \grad \cdot \vb{E} = 0 
,\end{align}
which are the (sourceless) Maxwell's equations. 

\section{Symmetries of Maxwell's Equations}
Maxwell's equations have two symmetries of note that are preserved uniquely by ModMax: electromagnetic duality and conformal invariance.

One can notice that under an $SO\left( 2 \right) \simeq U(1)$ transformation (i.e. a 2D rotation) parametrised by an angle $\alpha \in [0,2\pi)$,
\begin{align}
    \mqty( F'^{\mu \nu} \\ \widetilde{F}'^{\mu \nu} )  &=  \mqty( \cos \alpha & \sin \alpha \\ - \sin \alpha & \cos \alpha ) \mqty( F^{\mu \nu} \\ \widetilde{F}^{\mu \nu} ) \label{so2}
,\end{align}
that Maxwell's equations of motion are invariant. This is a symmetry called \textit{electromagnetic duality} (EM-duality) that Maxwell's theory possesses. Notice that this symmetry holds only \textit{on-shell}, that is, it occurs when the equations of motion are applied. ModMax preserves this symmetry at this level as well \cite{Sorokin_2022,Lechner_2022}.

Writing the field strength tensor and its dual explicitly we see the element wise exchange (up to sign) of electric and magnetic fields with
\begin{align}
    F^{\mu \nu} = \mqty( 0 & -E_x & -E_y & -E_z \\ E_x & 0 & -B_z & B_y \\ E_y & B_z & 0 & -B_x \\ E_z & -B_y & B_x & 0 ), && \widetilde{F}^{\mu \nu} = \mqty( 0 & -B_x & -B_y & -B_z \\ B_x & 0 & E_z & -E_y \\ B_y & -E_z & 0 & E_x \\ B_z & E_y & -E_x & 0 )
.\end{align}

Additionally, Maxwell's theory has no dependence on a length or energy scale and thus has a global symmetry of scale invariance. With the addition of special conformal transformations, this becomes \textit{conformal invariance}. This is a global symmetry of the Lagrangian, not just the equations of motion which is important as it allows us to apply Noether's theorem and derive conserved quantities. This group of symmetries includes all transformations that preserve angles and thus includes the Poincar\'e group (which is Lorentz transformations and 3D spatial rotations), dilations (zooming in/out) and special conformal transformations. The latter two transform the coordinates according to
\begin{align}
    x^{\mu} &\to \lambda x^{\mu}, \nonumber \\
    x^{\mu} &\to \frac{x^{\mu} - \lambda^{\mu} x^2}{1 -2 \lambda_\mu x^{\mu}+ \lambda^2 x^2}
,\end{align}
where $\lambda^{\mu}$ parametrizes the transformation.

More generally, in Minkowski space where the metric is given by
\begin{align}
    \eta_{\mu \nu} = \mqty( 1 & 0 & 0 & 0 \\ 0 & -1 & 0 & 0 \\ 0 & 0 & -1 & 0 \\ 0 & 0 & 0 & -1 )
,\end{align}
we can write an infinitesimal distance as
\begin{align}
    \dd{^2s} = \eta_{\mu \nu} \dd{x}^{\mu} \dd{x}^{\nu}
.\end{align}
Under a conformal transformation parametrized by $\Omega \left( x \right) $, this distance transforms as
\begin{align}
    \dd{^2s'} = e^{\Omega\left( x \right) }\dd{^2s}
,\end{align}
which preserves the relative angles of vectors (as shown in \cref{fig:conformal}).


\begin{figure}[h]
    \centering
    \includegraphics[width=\figwidth]{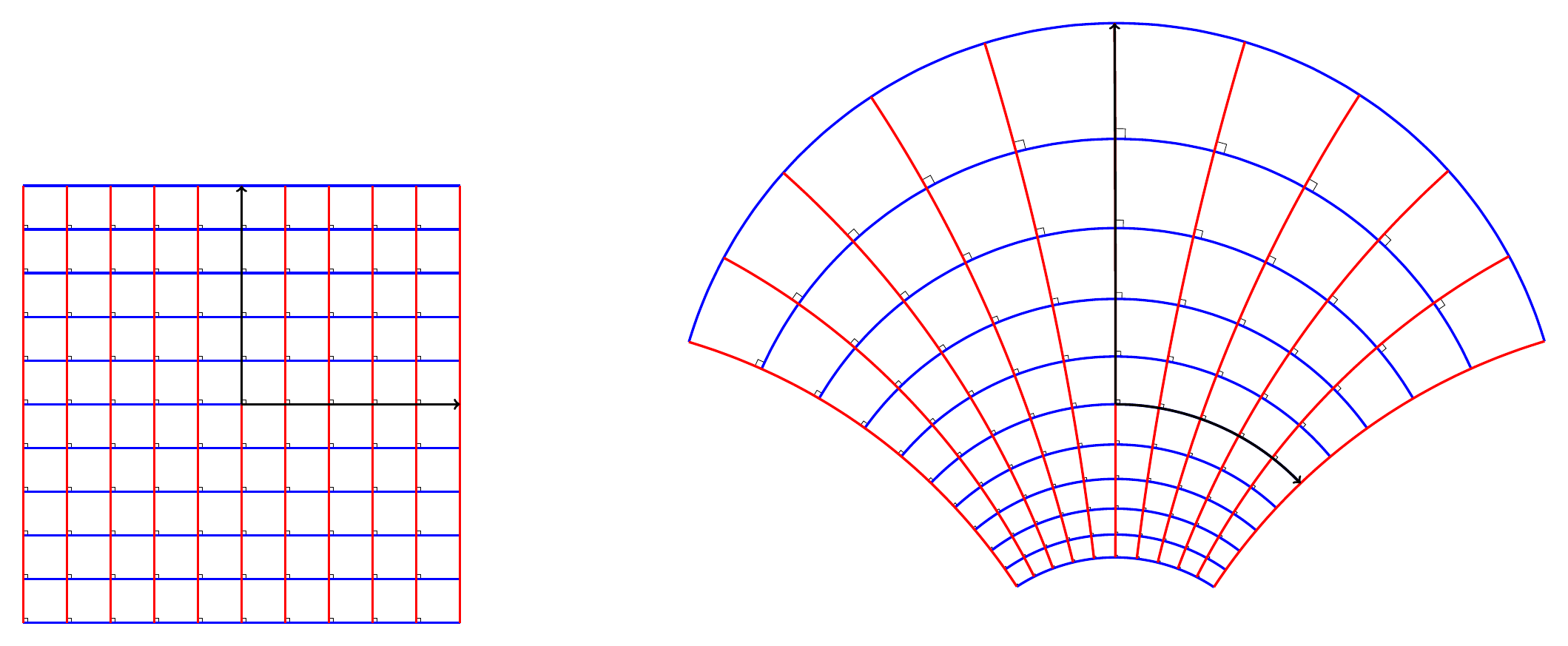}
    \caption{A special conformal transformation of a grid. Notice that the right angle intersections of all grid lines is preserved after the transformation.}
    \label{fig:conformal}
\end{figure}

\begin{note}
    The special conformal transformation can also be interpreted as a coordinate inversion composed with a translation and then a second coordinate inversion.
\end{note}

\section{General Theories of Nonlinear Electrodynamics}

When investigating extensions to Maxwell's Lagrangian, there are strong restrictions on the form of the candidates and their nonlinearity. The theory must be Lorentz invariant to agree with experimental observations, and thus must be built out of Lorentz invariant operators. The only two independent Lorentz invariant combinations of operators that can be made with the field strength are Maxwell's Lagrangian
\begin{align}
    S &\equiv -\frac{1}{4}F_{\mu \nu} F^{\mu \nu} = \frac{1}{2} \left( \vb{E}^2 - \vb{B}^2 \right)  
    \intertext{which is a scalar, and}
    P &\equiv -\frac{1}{4}F_{\mu \nu} \widetilde{F}^{\mu \nu} = \vb{E} \cdot \vb{B}
,\end{align}
which is a pseudo-scalar (i.e. negates under parity transformations). All higher order combinations of $F^{\mu \nu}$, such as $F^{\mu \nu} \tensor{F}{_{\nu}^{\rho}} F_{\rho \mu}$, are expressible in terms of these two invariants, and terms including more derivatives, such as $\partial^{\mu} S \partial_\mu S$ can lead to systems associated with unphysical `ghost fields' \cite{Sorokin_2022,Ferko_2022}. 

Therefore, the most general form of a nonlinear electrodynamics Lagrangian we focus on is some function of these quantities, $\mathcal{L}\left( S,P \right) $. Note that this form contains no restrictions on the symmetries of the theory and in general can break both of the notable symmetries of Maxwell's equations: conformal symmetry and electromagnetic duality. Note that as Maxwell's theory is a free theory, it has infinitely many symmetries. However, in an effort to narrow the search scope, we can identify that if we want our nonlinear extension to maintain conformal invariance, it must transform under a rescaling (by a constant $a$) of $S \to a^{-4} S$ and $P \to a^{-4} P$ as
\begin{align}
    \mathcal{L} \left( a^{-4} S, a^{-4}P \right) &= a^{-4} \mathcal{L}\left( S, P \right)
.\end{align}
This factor of $a^{-4}$ is cancelled by the transformation of $\dd{^{4}x} \to a^{4}\dd{^{4}x}$ in the action integral to leave the theory invariant under this transformation.

Likewise, if we want our nonlinear extension to maintain electromagnetic duality, its equations of motion should be invariant under the generalization of the $SO\left( 2 \right) $ rotation we saw in \cref{so2} for Maxwell's equations,
\begin{align}
    \mqty( -2 \pdv{\mathcal{L}\left( F' \right) }{F'_{\mu \nu}} \\ \widetilde{F}'_{\mu \nu} ) &= \mqty( \cos \alpha & \sin \alpha \\ - \sin \alpha & \cos \alpha ) \mqty( -2  \pdv{\mathcal{L}}{F_{\mu \nu}} \\ \widetilde{F}_{\mu \nu} )
.\end{align}

\section{Classical ModMax}

In recent work \cite{Bandos_2020}, it was shown that there is a unique family of Lagrangians which satisfy these two constraints and thus preserve these two symmetries of Maxwell's equations. This family of Lagrangians is ModMax, and is a family of theories as it satisfies these conditions for any real value of a dimensionless constant $\gamma \in \R$ that parametrizes the family.

As expected, we can write the ModMax Lagrangian in terms of the two Lorentz invariants $S$ and $P$ with
\begin{align}
    \mathcal{L} &= S \cosh \gamma + \sinh \gamma \sqrt{S^2 + P^2},
    \intertext{or equivalently, using the definitions of $S$ and $P$,}
    \mathcal{L} &= -\frac{\cosh \gamma}{4} F_{\mu \nu} F^{\mu \nu} + \frac{\sinh \gamma}{4} \sqrt{\left( F_{\mu \nu} F^{\mu \nu} \right)^2 + \left( F_{\mu \nu} \widetilde{F}^{\mu \nu} \right)^2  }  \nonumber \\
    &= -\frac{\cosh \gamma}{2} \left( \vb{E}^2 - \vb{B}^2 \right)  + \frac{\sinh \gamma}{2} \sqrt{\left( \vb{E}^2 - \vb{B}^2 \right)^2 + 4\left( \vb{E} \cdot \vb{B}\right)^2  }
,\end{align}
where $\gamma$ is interpreted as the dimensionless coupling constant that determines the strength of the nonlinear self-interaction in the second term. Notice that when $\gamma = 0$ we recover Maxwell's Lagrangian as the nonlinear term disappears as $\sinh \left( 0 \right) = 0$.

While this family of Lagrangians possesses the notable symmetries of Maxwell's equations (conformal symmetry and electromagnetic duality) for all values of $\gamma \in \R$, for $\gamma < 0$ the theory predicts faster than light propagation of photons which violate causality. Therefore, we take $\gamma > 0$ for which causality is preserved \cite{Sorokin_2022,Bandos_2020}.

Applying the Euler-Lagrange equations, we find the equations of motion of the theory to be
\begin{align}
    0 &= \cosh \gamma \partial_\mu F^{\mu \nu} + \sinh \gamma \partial_\mu \left( \frac{SF^{\mu \nu} + P \widetilde{F}^{\mu \nu}}{\sqrt{S^2 + P^2}} \right) \nonumber\\
    \implies \partial_\mu F^{\mu \nu} &= \tanh \gamma \partial_\mu \left( \frac{SF^{\mu \nu} + P \widetilde{F}^{\mu \nu}}{\sqrt{S^2 + P^2}} \right)
.\end{align}

Notice that while these equations are nonlinear, if the field satisfies $P = aS$ for constant $a \in \R$, then they linearise.

Further, as ModMax preserves the conformal symmetry of Maxwell's equations, it is also invariant under conformal transformations. However, more commonly, one makes use of the fact that conformal invariance implies that the stress energy tensor of the theory is traceless.

\begin{definition}
    Given a Lagrangian $\mathcal{L}$, the \textbf{stress energy tensor} \cite{ferko2023stress} is defined by
    \begin{align}
        T_{\mu \nu} &= -2 \pdv{\mathcal{L}}{g^{\mu \nu}} + g_{\mu \nu} \mathcal{L}
    .\end{align}
\end{definition}

\begin{note}
    If $\mathcal{L}$ possesses conformal invariance, then the stress energy tensor is traceless such that $\tensor{T}{^\mu_\mu} = 0$.
\end{note}

Applying this to ModMax, we find that the stress energy tensor can be written as
\begin{align}
    T_{\mu \nu} &= -2 \left( \pdv{\mathcal{L}}{S} \pdv{S}{g^{\mu \nu}} + \pdv{\mathcal{L}}{P} \pdv{P}{g^{\mu \nu}} \right) + g_{\mu \nu} \mathcal{L}
,\end{align}
where
\begin{align}
    \pdv{S}{g^{\mu\nu}} = -\frac{1}{2} \tensor{F}{_{\mu}^{\rho}}F_{\nu \rho}, && \pdv{P}{g^{\mu\nu}} = -\frac{1}{4} \left( \tensor{F}{_{\mu}^{\rho}} \widetilde{F}_{\nu \rho} + \tensor{F}{_{\nu}^{\rho}} \widetilde{F}_{\mu \rho} \right) 
.\end{align}

In fact we have
\begin{align}
    T^{\mu \nu} &= \left( \tensor{F}{^{\mu}_\rho} F^{\nu \rho} - \frac{1}{4} \eta^{\mu \nu} \left( F_{\rho \tau} F^{\rho \tau} \right)  \right) \pdv{\mathcal{L}}{S}\label{eq:set},
    \intertext{where}
    \pdv{\mathcal{L}}{S} &= \cosh \gamma - \sinh \gamma \frac{F_{\mu \nu}F^{\mu \nu}}{\sqrt{\left( F_{\mu \nu}F^{\mu\nu} \right)^2 + \left( F_{\mu \nu} \widetilde{F}^{\mu\nu} \right)^2} } 
.\end{align}

From \cref{eq:set}, we see immediately that $\tensor{T}{^{\mu}_{\mu}} = 0$, and thus ModMax is conformal.

\section{Experimental Observability of ModMax}

Despite the lack of study at the quantum level, classical analysis of ModMax indicates that it predicts a refractive index of the vacuum $n \neq 1$\cite{Sorokin_2022}. This is not unexpected as the nonlinearities that arise within the standard model also predict a vacuum refractive index differing from $n = 1$.

The most precise experimental test of the nonlinearity of the vacuum was recently conducted by the PVLAS experiment (\textit{Polarizzazione del Vuoto con LASer}, ``polarization of the vacuum with laser''). Using a cavity with mirrors, their experiment attempts to observe any interaction of light with itself or with the vacuum (i.e. spontaneous pair production). Due to the extremely small scale of any present nonlinearity, they were able to obtain an upper bound of $\gamma \leq 3 \times 10^{-22}$ with lower bound experiments currently underway \cite{ejlli2020pvlas}. This suggests that if ModMax is an accurate description of our universe's electrodynamics, its nonlinear contribution is very small.

Specifically, the PVLAS experiment proceeds by measuring the birefringence of the vacuum through how far its refractive index differs from $n = 1$  with ModMax predicting
\begin{align}
    \Delta n_{\text{ModMax}} = e^{\gamma} - 1 \approx \gamma + \mathcal{O}\left( \gamma^2 \right) 
.\end{align}

While PVLAS observed a difference in refractive index of $\Delta n_{\text{Obs}} \leq 3 \times 10^{-22}$ \cite{ejlli2020pvlas}, QED predicts $\Delta n_{\text{QED}} \sim 4 \times 10^{-24}$ \cite{Sorokin_2022} and thus there remains two orders of magnitude of parameter space for ModMax to have significant observable contributions.

\chapter{Review: Quantization of Quantum Electrodynamics}

Prior to considering ModMax in a quantum context, we review the quantization of \textit{quantum electrodynamics} (QED) and the techniques applicable to such a theory: with and without interactions with matter. We also review the effective action which will prove instrumental in the quantization of ModMax.

\section{Quantization of Maxwell's Lagrangian}

Maxwell's Lagrangian translates smoothly to its quantum counterpart. Considering only free photons (i.e. no matter/electrons), we can describe electromagnetic waves with an identical Lagrangian to Maxwell's with
\begin{align}
    \mathcal{L} &= -\frac{1}{4} F_{\mu \nu} F^{\mu \nu}
    \intertext{where this leads to an action $S$ given by}
    S\left[ A \right]  &= \int \dd{^{4}x} \mathcal{L}  \\
     &= \frac{1}{2} \int \dd{^{4}x} A_{\mu}\left( x \right)  \left( \partial^2 g^{\mu \nu} - \partial^{\mu} \partial^{\nu} \right) A_{\nu}\left( x \right).
     \intertext{Note that with the Fourier transform of this field, given by}
     A_{\mu}\left( k \right) &= \int \dd{^{4}x} e^{i k_\nu x^{\nu}} A_{\mu} \left( x \right),
     \intertext{the action integral can be written as}
     S\left[ A \right] &= \frac{1}{2} \int \dd{^{4}x} \dd{^{4}k_1} \dd{^{4}k_2} e^{k_1^{\alpha} x_{\alpha}}A_{\mu}\left( k_1 \right) \left( \partial^2 g^{\mu \nu} - \partial^{\mu} \partial^{\nu} \right) e^{i k_2^{\beta} x_{\beta}} A_{\nu}\left( k_2 \right),  \nonumber
     \intertext{where evaluating the derivatives on the exponential yields}
    S\left[ A \right]  &= \frac{1}{2} \int \dd{^{4}x} \dd{^{4}k_1} \dd{^{4}k_2} e^{k_1^{\alpha} x_{\alpha}}A_{\mu}\left( k_1 \right) \left( -k_2^2 g^{\mu \nu} + k_2^{\mu} k_2^{\nu} \right) e^{i k_2^{\beta} x_{\beta}} A_{\nu}\left( k_2 \right),  \nonumber \\
    \intertext{and grouping the exponentials gives}
    S\left[ A \right] &= \frac{1}{2} \int \dd{^{4}x} \dd{^{4}k_1} \dd{^{4}k_2} A_{\mu}\left( k_1 \right) \left( -k_2^2 g^{\mu \nu} + k_2^{\mu} k_2^{\nu} \right) e^{i \left( k_1 + k_2 \right) \cdot x  } A_{\nu}\left( k_2 \right)  \nonumber
    \intertext{reveals that with $\displaystyle \delta^{4}\left( k_1 + k_2 \right) = \int \dd{^{4}x} e^{i \left( k_1 + k_2 \right) \cdot x}$, integrating over $x$ yields}
   S\left[ A \right]  &= \frac{1}{2} \int \dd{^{4}k_1} \dd{^{4}k_2} A_{\mu}\left( k_1 \right) \left( -k_2^2 g^{\mu \nu} + k_2^{\mu} k_2^{\nu} \right) A_{\nu}\left( k_2 \right) \delta^{4} \left( k_1 + k_2 \right), \nonumber
    \intertext{where integrating over $k_2$ absorbs the $\delta$ function and enforces $k \equiv k_1 = -k_2$ leaving}
     S\left[ A \right] &= \frac{1}{2} \int \dd{^{4}k}  A_{\mu}\left( k \right) \left( -k^2 g^{\mu \nu} + k^{\mu} k^{\nu} \right) A_{\nu}\left( -k \right) \label{eq:momentum_maxwell_action}
.\end{align}
This form of the action will prove useful.

\section{Functional Integrals}

In quantum field theory, there is an analogue of the partition function $Z$ from statistical mechanics called the generating functional $Z\left[ J \right] $ which depends on an arbitrary external source $J\left( x \right) $. $J\left( x \right) $ is the analogue of an external magnetic field $B(x)$. The generating functional is a convenient albeit abstract method to determine correlation functions.

\begin{definition}
    For the electromagnetic field, the \textbf{generating functional} \cite{Schroeder, GA} is given by
    \begin{align}
        Z\left[ J \right] &= \int \mathcal{D}A \exp \left({i S\left[ A \right] + i \int \dd{^{4}x} J_{\mu}\left( x \right) A^{\mu}\left( x \right) }\right)
    ,\end{align}
    where $\displaystyle\int \mathcal{D}A$ is a \textit{functional integral}, that is, it integrates over all possible functions or field configurations $A^{\mu}\left( x \right)$ can take. One can think of this as the continuous analogue of a sum over all possible states that a system can take as in the partition function.
\end{definition}

Namely, taking derivatives of the generating functional yields \textbf{correlation functions} such that
\begin{align}
    \left\langle A_{\mu_1}\left( x_1 \right) \cdots A_{\mu_n}\left( x_n \right) \right\rangle &= \left( -i \right)^{n} \frac{\delta^n Z\left[ J \right] }{\delta J_{\mu_1}\left( x_1 \right) \cdots \delta J_{\mu_n}\left( x_n \right) } \bigg|_{J = 0}
,\end{align}
where the $n$th order correlation function $\left\langle A_{\mu_1}\left( x_1 \right) \cdots A_{\mu_n}\left( x_n \right) \right\rangle$ can be used to obtain probability amplitudes for a given interaction or decay process. One can also represent correlation functions in terms of Feynman diagrams as we will see.

However, one notices that the action $S\left[ A \right]$ in \cref{eq:momentum_maxwell_action}  vanishes for all potentials $A_{\mu}\left( k \right) = k_{\mu} \alpha \left( k \right)$ where $\alpha\left( k \right)$ is any scalar function as
\begin{align}
S\left[ \alpha(x) k \right] &= -\frac{1}{2} \int \dd{^{4}k} \alpha^2(k) k_{\mu} \left( -k^2 g^{\mu \nu} + k^{\mu} k^{\nu} \right) k_{\nu}    \\
 &=  -\frac{1}{2} \int \dd{^{4}k} \alpha^2(k) \left( -k^{4} + k^{4} \right)  \\
 &= 0
.\end{align}

This is problematic for the theory as the partition function evaluated at $J = 0$ (i.e. no external source) leads to
\begin{align}
    Z\left[ 0 \right] &= \int \mathcal{D}A \exp \left( {i S\left[ A \right] } \right) \\
                      &= \int \mathcal{D}A e^{0} \\
    &= \int \mathcal{D}A 1
,\end{align}
which diverges as there are uncountably infinite different possible field configurations $A^{\mu}\left( x \right)$ can take. This divergence in fact arises due to a lack of uniqueness in this theory's description of a given physical field configuration.
\begin{claim}
    Namely, one can shift $A_{\mu}\left( x \right) $ by
    \begin{align}
        A_{\mu}\left( x \right) &\to A_{\mu}\left( x \right) + \partial_\mu \alpha \left( x \right) \label{eq:gauge_1}
    ,\end{align}
    for an arbitrary function $\alpha \left( x \right) $, without changing the physical implications of the theory. This shift is called a \textbf{gauge transformation}.
\end{claim}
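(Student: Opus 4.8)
The plan is to reduce the somewhat vague phrase ``without changing the physical implications'' to a concrete, checkable statement: every physically observable quantity in the theory is built from the field strength tensor $F_{\mu \nu}$ rather than from $A_{\mu}$ directly, so it suffices to show that $F_{\mu \nu}$ is itself invariant under the shift in \cref{eq:gauge_1}. Indeed the Lagrangian $\mathcal{L} = -\tfrac{1}{4} F^{\mu \nu} F_{\mu \nu}$, the action $S[A]$, the resulting equations of motion $\partial_\mu F^{\mu \nu} = 0$, and the observable electric and magnetic fields $E_i, B_i$ are all functions of $F_{\mu \nu}$ alone. Hence invariance of $F_{\mu \nu}$ immediately propagates to invariance of each of these objects.

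The core of the argument is a one-line substitution. First I would insert the shifted potential $A_\mu \to A_\mu + \partial_\mu \alpha$ into the definition $F_{\mu \nu} = \partial_\mu A_\nu - \partial_\nu A_\mu$, obtaining
\begin{align}
    F_{\mu \nu} &\to \partial_\mu \left( A_\nu + \partial_\nu \alpha \right) - \partial_\nu \left( A_\mu + \partial_\mu \alpha \right) \nonumber \\
    &= \left( \partial_\mu A_\nu - \partial_\nu A_\mu \right) + \left( \partial_\mu \partial_\nu \alpha - \partial_\nu \partial_\mu \alpha \right)
    .\end{align}
The second step is then to invoke the symmetry of mixed partial derivatives (Clairaut's theorem), which requires only that $\alpha$ be sufficiently smooth ($C^2$), to conclude that $\partial_\mu \partial_\nu \alpha = \partial_\nu \partial_\mu \alpha$, so the added term vanishes identically and $F_{\mu \nu}$ is left unchanged.

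Having established $F_{\mu \nu} \to F_{\mu \nu}$, I would close the argument by spelling out the consequences: since $\mathcal{L}$, and therefore the action $S[A] = \int \dd{^4 x}\, \mathcal{L}$ and the Euler--Lagrange equations derived from it, depend on $A_\mu$ only through $F_{\mu \nu}$, they are all invariant; likewise the observable fields $E_i$ and $B_i$, being particular components of $F_{\mu \nu}$, are unaffected. This confirms that the transformation is a genuine redundancy of the description rather than a physical change, justifying the term \emph{gauge transformation}.

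There is no serious obstacle here -- the result is essentially a direct computation -- so the only point demanding care is conceptual rather than technical: I must be explicit that ``physical implications'' is being identified with ``quantities expressible through $F_{\mu \nu}$,'' since it is precisely this identification that makes the invariance of $F_{\mu \nu}$ sufficient. The lone technical caveat worth flagging is the smoothness assumption on $\alpha$ needed for the mixed partials to commute, which I would state alongside the cancellation.
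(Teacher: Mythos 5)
Your proposal is correct and follows essentially the same route as the paper's proof: substitute the shifted potential into $F_{\mu \nu} = \partial_\mu A_\nu - \partial_\nu A_\mu$, use commutativity of mixed partial derivatives to cancel the $\alpha$ terms, and conclude that all physical quantities (equations of motion, $\vb{E}$ and $\vb{B}$) are unchanged because they depend on $A_\mu$ only through $F_{\mu \nu}$. Your explicit flagging of the $C^2$ smoothness assumption and of the identification ``physical implications $=$ quantities built from $F_{\mu \nu}$'' are minor refinements of the same argument, not a different approach.
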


\begin{note}
        We suppress the $x$ dependence of $\alpha$ for brevity.
\end{note}

\begin{proof}
    The field strength tensor, $F_{\mu \nu} = \partial_\mu A_\nu - \partial_\nu A_\mu$ transforms under this shift $A_{\mu} \to A_{\mu} + \partial_\mu \alpha$ to
    \begin{align}
        F_{\mu \nu} \to F'_{\mu \nu} &= \partial_\mu \left( A_\nu + \partial_\nu \alpha \right) - \partial_\nu \left( A_\mu + \partial_\mu \alpha \right),
        \intertext{where as partial derivatives commute, yields}
                   F'_{\mu \nu} &= \partial_\mu A_\nu - \partial_\nu A_\mu \\
                   F'_{\mu \nu} &= F_{\mu \nu}
    .\end{align}
    As the equations of motion, or equivalently the electric and magnetic fields can be written in terms of $F_{\mu \nu}$, the field configuration $A_{\mu} + \partial_\mu \alpha$ has identical physical implications and dynamics to $A_{\mu}$.
\end{proof}

More generally we notice that the action itself is invariant under this transformation (as it can be written purely in terms of $F_{\mu \nu}$ as $S = -\frac{1}{4}F_{\mu \nu} F^{\mu \nu}$) such that
\begin{align}
    S\left[ A_{\mu} \right] =  S\left[ A_{\mu} + \partial_\mu \alpha \right] 
.\end{align}

This is referred to as a \textbf{gauge degree of freedom}, and is remedied by \textit{fixing the gauge} which means we only count physically distinct states.

\section{Faddeev-Popov Gauge Fixing}

The cleanest way to achieve this gauge fixing in a path integral approach is through a method pioneered by Faddeev and Popov \cite{Faddeev}. 

\begin{definition}
    A Lagrangian $\mathcal{L}\left[ A \right] $ has \textbf{local gauge symmetry} if it is invariant under a gauge transformation
    \begin{align}
        A_{\mu}\left( x \right) \to A_{\mu} + \partial_\mu \alpha \left( x \right) 
    ,\end{align}
    where $\alpha \left( x \right) $ is again an arbitrary function. Local here refers to the spacetime dependence of $\alpha \left( x \right) $. If it were constant $\alpha \left( x \right) = C \in \R$, it would be a \textbf{global symmetry}.
\end{definition}

To fix the gauge, we define a gauge fixing function $G\left( A \right) $ that is zero for only one of every physical/gauge inequivalent state. It can be chosen to take the form
\begin{align}
    G\left( A \right) \equiv \partial_{\mu} A^{\mu} - \omega\left( x \right) 
,\end{align}
such that $G\left( A \right) = 0$ for only $\partial_\mu A^{\mu} = \omega\left( x \right) $. If the functional integral contained $\delta \left( G\left( A \right)  \right) $, then this would select only unique physical states.

\begin{note}
    Under composition by a function $g : \R^{n} \to \R^{n}$, the delta function satisfies
    \begin{align}
        \int \dd{^{n}x} \delta^{n} \left( g\left( x \right)  \right) f\left( g\left( x \right)  \right) \det \left( \partial_j g_i \right) = \int \dd{^{n}x}\delta^{n} \left( x \right) f\left( x \right) 
    .\end{align}
    If we integrate over a space of functions $\mathcal{D}\alpha$ (which is infinite dimensional) rather than $\R^{n}$, the analogous identity is
    \begin{align}
        \int \mathcal{D}\alpha \delta \left( g \left( \alpha \right)  \right) f\left( g\left( \alpha \right)  \right) \det \left( \fdv{g\left( \alpha \right)}{\alpha} \right)  &= \int \mathcal{D}\alpha \delta\left( \alpha \right) f\left( \alpha \right)
    .\end{align}
    Choosing $f\left( \alpha \right) = 1$, this reduces to
    \begin{align}
        \int \mathcal{D}\alpha \delta \left( g \left( \alpha \right)  \right)  \det \left( \fdv{g\left( \alpha \right)}{\alpha} \right)  &= 1 \label{eq:func_indentity}
    .\end{align}
\end{note}

This identity appears promising and indeed, we can insert it into the functional integral by multiplying by 1, and the delta function will select only physical states. The only cost is the introduction of the determinant. As $g$ must be a function of $\alpha$ we choose 
\begin{align}
    g\left( \alpha  \right) &\equiv G\left( A^{\mu} + \partial^\mu \alpha \left( x \right)  \right)\\
    &= \partial_\mu \left( A^{\mu} + \partial^\mu \alpha \left( x \right)  \right) - \omega \left( x \right)
,\end{align}
and insert the left hand side of \cref{eq:func_indentity} into the functional integral yielding
\begin{align}
    Z\left[ 0 \right] &= \int \mathcal{D}A \exp \left( i S\left[ A \right]  \right) \left( \int \mathcal{D} \alpha \delta \left( g\left( \alpha \right)  \right) \det \left( \fdv{g\left( \alpha \right)}{\alpha} \right)  \right).
    \intertext{From the definition of $g\left( \alpha \right) $, one can see that $\displaystyle\fdv{g\left( \alpha \right) }{\alpha} = \partial_\mu \partial^{\mu}$ is independent of $\alpha$ and thus can be factored out as a constant. We then see}
    Z\left[ 0 \right] &= \det \left( \partial^2 \right) \int \mathcal{D}A \exp \left( i S\left[ A \right]  \right) \left( \int \mathcal{D} \alpha \delta \left( G\left( A^{\mu} + \partial^{\mu}\alpha \right)  \right)   \right).
    \intertext{As $S\left[ A_{\mu} \right] = S\left[ A_{\mu} + \partial_{\mu} \alpha \right] $ since it is gauge invariant, and $\mathcal{D}A_{\mu} = \mathcal{D}\left( A_{\mu} + \partial_{\mu} \alpha \right) $ because the space of functions is similarly invariant, we notice that we can write $Z\left[ 0 \right] $ purely in terms of $A_{\mu} + \partial_\mu \alpha$. However, as it is an integration variable, we can substitute back $A_{\mu} + \partial_\mu \alpha \to A_{\mu}$, removing all explicit $\alpha$ dependence to find}
    Z\left[ 0 \right] &= \det \left( \partial^2 \right) \left( \int \mathcal{D} \alpha    \right) \int \mathcal{D}A \exp \left( i S\left[ A \right]  \right) \delta \left( G\left( A \right)  \right),
    \intertext{namely, that the $\mathcal{D}\alpha$ integral factors out, amounting to an infinite constant (without physical implications), and that we have obtained the $\delta \left( G\left( A \right)  \right) = \delta \left( \partial_\mu A^{\mu} - \omega\left( x \right)  \right)  $ desired to select only physical states. As this expression holds for any fixed function $\omega \left( x \right) $, it holds identically for any normalized linear sum of $\omega\left( x \right) $'s. Faddeev and Popov's essential insight was to integrate over a normalized Gaussian weighted envelope in the function space of $\omega\left( x \right)$ with variance $\xi$ such that}
    Z\left[ 0 \right] &= N\left( \xi \right) \int \mathcal{D} \omega \exp \left( -i \int \dd{^{4}x} \frac{\omega^2}{2\xi} \right) \int \mathcal{D}A \exp \left( i S\left[ A \right]  \right) \delta \left( \partial_\mu A^{\mu} - \omega  \right),
    \intertext{where $N\left( \xi \right) $ ensures the Gaussian is normalized, and absorbs the other constant factors for brevity. Evaluating the $\mathcal{D}\omega$ integral absorbs the delta function enforcing $\partial_\mu A^{\mu} = \omega$ and thus yields}
    Z\left[ 0 \right] &= N\left( \xi \right) \int \mathcal{D}A \exp \left( -i \int \dd{^{4}x} \frac{\left( \partial_\mu A^{\mu} \right) ^2}{2\xi} \right) \exp \left( i S\left[ A \right]  \right)
,\end{align}
where we see that the exponential term modifies the Lagrangian with the addition of a term of the form
\begin{align}
    \mathcal{L}_{\text{gauge fixed}} &= \mathcal{L} - \frac{1}{2\xi} \partial_{\mu} A^{\mu} \partial_\nu A^{\nu} \label{eq:gauge_fixed}
.\end{align}
where $\xi \in \R$ is referred to as the \textbf{gauge parameter} and can be fixed to any desired number. It can be shown that observable quantities will always be independent of your choice of $\xi$; however some choices more significantly simplify calculations.

This arduous derivation of gauge fixing applies not only to QED, but rather any abelian gauge theory, including ModMax. Namely, given an abelian gauge theory with Lagrangian $\mathcal{L}$, subtracting the gauge fixing term in \cref{eq:gauge_fixed} fixes the gauge, leading to well defined observables and behaviour.

Returning to the momentum space action in \cref{eq:momentum_maxwell_action}, we see that after gauge fixing, the action can be written as
\begin{align}
    S\left[ A \right] = \frac{1}{2} \int \dd{^{4}k} A_{\mu}\left( k \right) \left( -k^2 g^{\mu \nu} + \left( 1 - \frac{1}{\xi} \right) k^{\mu} k^{\nu} \right) A_{\nu}\left( -k \right) 
,\end{align}
which no longer vanishes for $A_\mu = \partial_\mu \alpha $, and thus leads to a well defined generating functional.

\section{The Effective Action}

Further drawing on the analogy with statistical mechanics, recall that the free energy $F \left( B \right) $ of a system dependent on a magnetic field $B$ can be obtained from the partition function $Z\left[ B \right] $ by
\begin{align}
    F \equiv - T\ln Z\left[ B \right] 
.\end{align}
Taking the derivative of $F$ with respect to $B$ then yields the magnetization $M$ of the system, 
\begin{align}
    M \equiv -\pdv{F}{B}
,\end{align}
from which the Gibbs free energy $G$ can be found by Legendre transforming $F$ such that
\begin{align}
    G \equiv F - BM
.\end{align}

Each of these thermodynamic quantities has an analogue within quantum field theory. Namely, the generating functional (which takes the place of the partition function) is defined in terms of an external source $J\left( x \right) $, which is the generalization of the external magnetic field $B$. As such, we can define the analogue of the free energy
\begin{align}
    E\left[ J \right] \equiv -i \fdv{J\left( x \right) } \ln Z\left[ J \right]
,\end{align}
for which a further derivative provides the analogue of the magnetization: the expectation value of the field $ C_{\mu} \equiv \left<A_{\mu}\left( x \right)  \right>$ such that
\begin{align}
    \fdv{J\left( x \right) }E\left[ J \right] = -\left<A_{\mu}\left( x \right)  \right> = -C_{\mu}
.\end{align}
Notice that we have taken $C_{\mu}$ to be independent of $x^{\mu}$ which follows from the assumption of translational invariance. This makes the expectation value of the field $C_{\mu}$ a global property (like magnetisation) which characterises the whole system.
\begin{definition}
    Lastly, Legendre transforming $E\left[ J \right] $, we obtain the \textbf{effective action}
\begin{align}
    \Gamma \left[ C \right] \equiv E\left[ J \right] - \int \dd{^{4}x} J^{\mu}\left( x \right) C_{\mu}\left( x \right)
,\end{align}
which is a functional depending only on $C_{\mu}$, the expectation value of the field. This action provides an \textit{effective} classical description of the full quantum theory, and thus is an invaluable tool to obtain observable quantities from otherwise unsolvable theories.
\end{definition}

Taking the derivative of the effective action with respect to $C_{\mu}$, it can be shown that
\begin{align}
    \fdv{C_{\mu}} \Gamma \left[ C_{\mu} \right] = -J\left( x \right)
,\end{align}
and thus in the sourceless case where $J\left( x \right) = 0$, we find
\begin{align}
    \fdv{C_{\mu}} \Gamma \left[ C_{\mu} \right]\bigg|_{J(x)=0} &= 0
.\end{align}
Namely, $C_{\mu} = \left< A_{\mu}\left( x \right)  \right>$ that solves this equation extremizes the action and thus corresponds to a stable solution $A_{\mu}\left( x \right) $ of the original theory. The effective action therefore allows us to study the large scale \textit{effective} behaviour when quantum effects are cumulatively taken into account.

\section{Propagators and Correlation Functions}

While we do not prove it here, the effective action can equivalently be obtained by evaluating all Feynman diagrams with classical external vertices constructable within a theory. One constructs and evaluates diagrams by obtaining the \textit{Feynman rules} of the theory: namely, the factors to include in the calculation for each possible line and vertex that make up a diagram. As we will see here, each line within a diagram represents the propagation of a particle, and each vertex an interaction.

\begin{definition}
    Given a field $A_{\mu}\left( x \right) $, the \textbf{propagator} of that field satisfies
    \begin{align}
        A_{\mu}\left( x \right)  &= \int \dd{^{4}y} D_{\mu \nu}\left( x-y \right)  A^{\nu}\left( y \right) 
    ,\end{align}
    in that it \textit{propagates} the field $A^{\nu}\left( y \right) $ to $x$ through all possible paths. 

    When we draw a Feynman diagram, each internal line represents a propagator corresponding to that field
    \begin{align}
        \vcenter{\hbox{\includegraphics{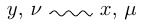}}}
        = D_{\mu \nu}\left( x-y \right)
    ,\end{align}
    propagating it from $y$ to $x$.
\end{definition}

With the above gauge fixing, the propagator for the electromagnetic field can now be found to satisfy
\begin{align}
    \left( -k^2 g^{\mu \nu} + \left( 1- \frac{1}{\xi} \right) k^{\mu} k^{\nu} \right) D_{\nu \rho}\left( k \right) &= i\delta^{\mu}_\rho
,\end{align}
which has solution
\begin{align}
    D_{\nu \rho}\left( k \right) &= \frac{-i}{k^2} \left( g_{\nu \rho} - \left( 1 - \xi \right) \frac{k_{\nu} k_{\rho}}{k^2} \right)
.\end{align}

Setting $\xi = 1$ here, as we are free to do so without affecting observable quantities, is referred to as \textit{Feynman gauge}, and simplifies the form of the propagator\footnote{Formally, to prevent a pole at $k^2 = 0$, one replaces the denominator with $k^2 + i \epsilon$ such that the pole is shifted into the complex plane, and the limit of $\epsilon \to 0$ is taken after contour integration.} greatly to
\begin{align}
    D_{\nu \rho}\left( k \right) &= \frac{-ig_{\nu \rho}}{k^2} 
.\end{align}

We will proceed with this choice of gauge for simplicity.

\section{QED Diagrams}

With the propagator for the photon obtained, one can now look at introducing electrons as described in QED. The full Lagrangian is of the form
\begin{align}
    \mathcal{L}_{\text{QED}} &= \underbrace{-\frac{1}{4} F_{\mu \nu} F^{\mu \nu} + \overline{\psi} \left( i \partial_{\mu} \gamma^\mu - m \right) \psi}_{\mathcal{L}_0} - e \overline{\psi} \gamma^{\mu} A_{\mu} \psi
,\end{align}
where $\psi \left( x \right) $ is a spinor field that describes electrons/positrons, $e$ is the electric charge, and $\gamma^{\mu}$ are the Dirac matrices that satisfy $\left\{ \gamma^{\mu}, \gamma^{\nu} \right\} = 2g^{\mu \nu}$.

The first two terms in this Lagrangian (together forming $\mathcal{L}_0$) describes how photons and electrons freely evolve, and the third determines their interaction.  Like most interacting quantum field theories, this Lagrangian is unsolvable in its exact form and thus we move to using perturbation theory, where we assume the interaction of the electrons and photons is relatively small. This is valid as the electric charge, $e \ll 1$, here determines the strength of this interaction. This allows us to Taylor expand the interaction term such that
\begin{align}
    \exp \left( i \int \dd{^{4}x} \mathcal{L} \right) &= \exp \left( i \int \dd{^{4}x}\mathcal{L}_0\right) \exp \left( i \int \dd{^{4}x} \left( - e \overline{\psi} \gamma^{\mu} A_{\mu} \psi \right)    \right),
    \intertext{with $\exp x \sim 1 + x + \mathcal{O}\left( x^2 \right) $,}
    \exp \left( i \int \dd{^{4}x} \mathcal{L} \right) &= \exp \left( i \int \dd{^{4}x}\mathcal{L}_0\right) \left( 1 - ie \int \dd{^{4}x}    \overline{\psi} \gamma^{\mu} A_{\mu} \psi  + \mathcal{O}\left( e^2 \right) \right) 
,\end{align}
where the $1$ term corresponds to the free evolution of particles, and the nontrivial term corresponds to an interaction vertex involving two fermions and a photon. The presence of a $\overline{\psi}$ and a $\psi$ indicates that a fermion enters and leaves this interaction, and the $A_{\mu}$ field indicates that a photon is involved. In fact, with the identification of these fields, the remaining factors in this term yield, in the language of Feynman diagrams, the form of the QED vertex

\begin{align}
\raisebox{-0.5\height}
{\includegraphics[width=0.2\linewidth]{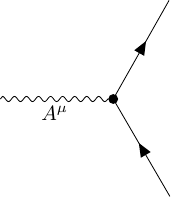}}
= i e \gamma^{\mu}
,\end{align}
where we draw photons as wavy lines and fermions as straight lines with arrows indicating the flow of charge. This factor is to be included in the evaluation of Feynman diagrams whenever this vertex appears.

If we assume that the photon field is an external classical field $C_\mu \left( x \right)$, that is the line exits the diagram, we instead obtain the vertex rule
\begin{align}
\raisebox{-0.5\height}
{\includegraphics[width=0.2\linewidth]{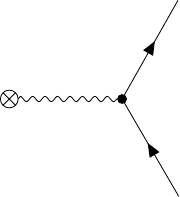}}
= i e \gamma^{\mu} \int \dd{^{4}x} C_{\mu}\left( x \right) 
.\end{align}

To obtain the effective action $\Gamma$, and thus the form of the largest quantum corrections, we need to construct all possible Feynman diagrams formed out of this vertex that contain one loop. Higher numbers of loops contribute less significantly. Observe that with classical action $S\left[ C_{\mu} \right]$, the one-loop corrections consists of the following infinite series of diagrams:
\begin{align}
    \Gamma = S[C_\mu] + 
    \raisebox{-0.5\height}{\includegraphics{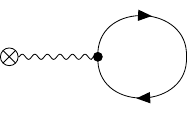}} ~ + ~
    \raisebox{-0.5\height}{\includegraphics{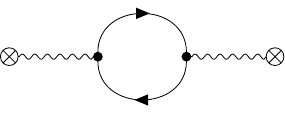}} ~ + ~
    \raisebox{-0.5\height}{\includegraphics{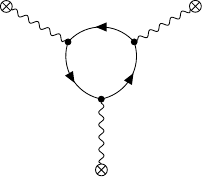}} ~ +  \cdots
.\end{align}

Evaluating this series of diagrams using the QED Feynman rules for the vertex and photon propagator (as derived above) yields an effective action, $\Gamma \left[ C_{\mu}\left( x \right)  \right]$, that can be used to derive physical observables. This action is referred to as \textit{effective} in two senses. First, this is a quantum effective action in that it is an effective classical description of the quantum theory as defined above, such that we have captured the nonlinearity of photons induced by quantum effects. Secondly, it is also a low energy effective action in that one integrates out the dependence on electrons/positrons, and thus this action breaks down for energies much greater than the electron mass (at which interactions such as pair production would occur). In what follows in this thesis, we use effective action in the former sense. This effective action was first derived by Julian Schwinger \cite{Schwinger1951} in 1951. This result was of particular interest as Schwinger also showed that this effective action predicts that at sufficiently high electric field strengths, the vacuum will `decay' by pair producing electrons and positrons. While the field strength required for pair production to occur is unreachable by modern experimentalist techniques, it is widely accepted as a valid prediction of QED.

It is an aim of this project to derive the analogous Feynman rules for ModMax, and to use them to evaluate a similar infinite series of diagrams exactly to find the effective action, $\Gamma\left[ C_{\mu}\left( x \right)  \right] $.

\chapter{ModMax in the Background Field Method}

With the quantization of QED detailed, we seek to quantize ModMax. However, as well will see, ModMax is not amenable to traditional quantization techniques due to the nonlinear and nonanalytic form of the self-interaction. Therefore, I quantize ModMax within the background field method. I show that the one loop effective action for backgrounds with constant field strength exactly vanishes using dimensional regularization as well as the two loop effective action to order $\gamma^2$. I further find that allowing the background to vary, logarithmic divergences emerge.

\section{Background Field Method}

Recall that the ModMax Lagrangian \cite{Sorokin_2022} is given by
\begin{align}
    \mathcal{L} &=  S \cosh \gamma + \sqrt{S^2 + P^2}  \sinh \gamma
,\end{align}
where we recall the respectively scalar and pseudoscalar invariants $S = -\frac{1}{4}F_{\mu \nu} F^{\mu \nu}$ and $ P = -\frac{1}{4} F_{\mu \nu} \widetilde{F}^{\mu \nu}$.

As $\gamma = 0$ recovers Maxwell's Lagrangian, $\mathcal{L} =  S$, the most natural interpretation of the ModMax Lagrangian is that the $S \cosh \gamma$ term provides a Maxwell-like free propagation of the photon, and the $\sqrt{S^2 + P^2} \sinh \gamma$ term is an interaction of the photon field with itself. This interaction is small relative to the free evolution, as $\sinh \gamma \overset{\mathcal{O}\left( \gamma \right)}{\sim}  \gamma \ll 1 \overset{\mathcal{O}\left( \gamma \right)}{\sim}  \cosh \gamma$.

Thus, as this is a self-interacting theory, one may desire to proceed using canonical quantization techniques as applied to QED. However, the nonlinear form of the interaction is not compatible with the familiar perturbative Feynman diagram expansion, where we require positive integer powers of the fields to proceed. The natural impulse is then to Taylor expand the interaction, assuming that the essential physics can be captured at low powers of the field, or equivalently by weak fields (as higher powers would be comparatively negligible). However, ModMax does not reduce to Maxwell's equations in the weak field limit, only in the $\gamma \to 0$ limit, and thus such an approach is ill-suited. Clearly an alternative approach is needed. 

As such, we begin by employing the \textit{background field method} in which we consider a fixed non-zero classical background with quantum fluctuations about this background \cite{Abbott} (See \cref{fig:BFM-splitting}). Mathematically, this is performed by taking the photon field $A_{\mu}$ and decomposing it into a classical field $C_{\mu}$ and a quantum field $a_{\mu}$ such that the quantum oscillations on the classical background are equivalent to the original field with
\begin{align}
    A_{\mu} &= C_{\mu} + a_{\mu}.
\end{align}
\begin{figure}[h]
    \centering
    \includegraphics[width=\figwidth]{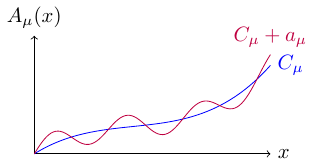}
    \caption{Pictorial depiction of background field method splitting of an arbitrary field $A_{\mu}\left( x \right) $ into a classical background field and a quantum oscillation about the background. The classical field here is varying spatially, but can also be taken to be constant.}
    \label{fig:BFM-splitting}
\end{figure}

This linear splitting of the field leads to a linear splitting in the field strength of
    \begin{align}
    F_{\mu \nu} &= \partial_\mu \left( C_\nu + a_\nu \right) - \partial_\nu \left( C_\mu + a_\mu \right)
    \intertext{into classical and quantum field strengths given by}
   F_{\mu \nu} &= \underbrace{\left( \partial_\mu C_\nu - \partial_\nu C_\mu \right)}_{C_{\mu \nu}} + \underbrace{\left( \partial_\mu a_{\nu} - \partial_\nu a_{\mu} \right)}_{f_{\mu \nu}}   \\
   F_{\mu \nu} &\equiv C_{\mu \nu} + f_{\mu \nu}
,\end{align}
where $C_{\mu \nu}$ is the field strength tensor for the classical field $C_{\mu}$ and $f_{\mu \nu}$ is the field strength tensor for the quantum field $a_{\mu}$. This splitting now allows us to tackle the nonlinear terms in the Lagrangian perturbatively, as we can expand in powers of the quantum terms around a fixed classical background field term.

Additionally, to simplify the problem further, we can assume that the classical field is stationary both in space and time. This reduces the predictive power of the calculations, but simplifies the process dramatically.  This assumption is equivalent to treating $\partial_\mu C_{\nu \rho}$ as negligible in the effective action, $\Gamma \left[ C_{\mu} \right] $. Schwinger \cite{Schwinger1951} made this assumption when calculating the effective action for QED. We will first investigate the case where we consider the background stationary before returning to generalize to the more difficult case.

\section{Taylor Expansion}

We notice that the invariants $S$ and $P$ can thus be decomposed into background and quantum field strength tensors as
\begin{align}
    S&= -\frac{1}{4}F_{\mu \nu} F^{\mu \nu} \nonumber\\
     &= \underbrace{-\frac{1}{4}C_{\mu \nu} C^{\mu \nu}}_{S_C} -\frac{1}{2}C_{\mu \nu} f^{\mu \nu}  \underbrace{- \frac{1}{4}f_{\mu \nu} f^{\mu \nu}}_{S_a} \nonumber \\
    &= S_C -\frac{1}{2}C_{\mu \nu} f^{\mu \nu} + S_a, \label{eq:S}
    \intertext{where we desire to make use of the constant classical field strength condition $\partial_\mu C_{\nu \rho} = 0$. This in fact implies that the cross term vanishes as}
    C_{\mu \nu} \left( \partial^{\mu} a^{\nu}  \right) &= \partial^{\mu} \left( C_{\mu \nu} a^{\nu} \right) - \left( \partial^{\mu} C_{\mu \nu} \right) a^{\nu},
    \intertext{where the first term is a total derivative that amounts to a boundary term in the action (that does not influence the physics) and the second is a derivative of the classical field strength. Therefore, we have $C_{\mu \nu} f^{\mu \nu} = 0$ and can write}
    S = S_C + S_a,
    \intertext{and identically we have}
    P = P_C + P_a
,\end{align}
where $P_C \equiv -\frac{1}{4} C_{\mu \nu} \widetilde{C}^{\mu \nu}$ and $P_a \equiv -\frac{1}{4} f_{\mu \nu} \widetilde{f}^{\mu \nu}$ analogously.

Note that the cross terms do not vanish in $S^2$ and $P^2$ as they are no longer a total derivative and we instead have
\begin{align}
    S^2 &= S_C^2 - \underbrace{S_C C_{\mu \nu} f^{\mu \nu}}_{\mathcal{O}\left( a \right) } + 2 S_C S_a  + \frac{1}{4} C_{\mu \nu} C_{\rho \tau} f^{\mu \nu} f^{\rho \tau}  - \underbrace{S_a C_{\mu \nu} f^{\mu \nu}}_{\mathcal{O}\left( a^3 \right) }+ \underbrace{S_a^2}_{\mathcal{O}\left( a^{4} \right) }.
    \intertext{Note that terms that are linear in the quantum field $\mathcal{O}\left( a \right) $ cumulatively vanish due to the classical equations of motion. As such, these terms and total derivative terms can be ignored. Neglecting terms of order $\mathcal{O}\left( a^{3} \right) $ and greater as they do not contribute at one loop, we are left with}
    S^2 &= S_C^2 -\underbrace{ S_C C_{\mu \nu} f^{\mu \nu}}_{\mathcal{O}\left( a \right) } + 2S_C S_a + \frac{1}{4}C_{\mu \nu} C_{\rho \tau} f^{\mu \nu} f^{\rho \tau}, \\
    P^2 &= P_C^2 - \underbrace{P_C \widetilde{C}_{\mu \nu} f^{\mu \nu}}_{\mathcal{O}\left( a \right)} + \underbrace{2P_C P_a}_{\text{total deriv.}} + \frac{1}{4}\widetilde{C}_{\mu \nu} \widetilde{C}_{\rho \tau} f^{\mu \nu} f^{\rho \tau},
    \intertext{where discarding total derivative and linear terms, we arrive at}
    S^2 + P^2 &= S_C^2 + P_C^2 + 2 S_C S_a +
    \frac{1}{4}\left( C_{\mu \nu} C_{\rho \tau} + \widetilde{C}_{\mu \nu} \widetilde{C}_{\rho \tau} \right) f^{\mu \nu} f^{\rho \tau}
.\end{align}

We see that the first two terms, $S_C^2 + P_C^2$, are purely background dependent and thus serve as a nontrivial classical point to Taylor expand with respect to. Namely, Taylor expanding $\sqrt{S^2 + P^2}$ about $S_C^2 + P_C^2$ we see that
\begin{align}
    \sqrt{S^2 + P^2} &\equiv \sqrt{S_C^2 + P_C^2 + \mathcal{Q}} = \sqrt{S_C^2 + P_C^2}  + \frac{\mathcal{Q}}{2 \sqrt{S_C^2 + P_C^2} } - \frac{\mathcal{Q}^2}{8\left( S_C^2 + P_C^2 \right)^{\frac{3}{2}}} + \mathcal{O}\left( \mathcal{Q}^3 \right),
    \intertext{where the quantum terms are grouped in the object $\mathcal{Q}$ such that}
\mathcal{Q} &\equiv 2 S_C S_a + \underbrace{2 P_C P_a}_{\text{total deriv.}} - \underbrace{\left( S_C C_{\mu \nu} + P_C \widetilde{C}_{\mu \nu} \right) f^{\mu \nu}}_{\mathcal{O}\left( a \right) } + \frac{1}{4}\left( C_{\mu \nu} C_{\rho \tau} + \widetilde{C}_{\mu \nu} \widetilde{C}_{\rho \tau} \right) f^{\mu \nu} f^{\rho \tau},
    \intertext{where the total derivative and linear terms are included as they contribute to higher powers of $\mathcal{Q}$. Namely, up to terms quadratic in the quantum field we have}
    \mathcal{Q}^2 &= \left( S_C^2 C_{\mu \nu} C_{\rho \tau} +  S_C P_C \left(\widetilde{C}_{\mu \nu} C_{\rho \tau} + C_{\mu \nu}\widetilde{C}_{\rho \tau} \right)  + P_C^2 \widetilde{C}_{\mu \nu} \widetilde{C}_{\rho \tau} \right) f^{\mu \nu} f^{\rho \tau} + \mathcal{O}\left( a^3 \right) 
.\end{align}

Therefore with the classical field Lagrangian defined by
\begin{align}
    \mathcal{L}_C &\equiv S_C \cosh \gamma + \sqrt{S_C^2 + P_C^2} \sinh \gamma,
    \intertext{we can thus write the full Lagrangian as}
    \mathcal{L} &= \mathcal{L}_{C} + S_a \cosh \gamma +  \left( \frac{S_C S_a}{\sqrt{S_C^2 + P_C^2}} + B_{\mu \nu \rho \tau} f^{\mu \nu} f^{\rho \tau} \right) \sinh \gamma + \mathcal{O}\left( a^3 \right)
,\end{align}
where I further define
\begin{align}
    B_{\mu \nu \rho \tau} &\equiv \frac{C_{\mu \nu} C_{\rho \tau} + \widetilde{C}_{\mu \nu} \widetilde{C}_{\rho \tau}}{8 \sqrt{S_C^2 + P_C^2}} - \frac{S_C^2 C_{\mu \nu} C_{\rho \tau} + 2S_C P_C C_{\mu \nu}\widetilde{C}_{\rho \tau} + P_C^2 \widetilde{C}_{\mu \nu} \widetilde{C}_{\rho \tau}}{8 \left( S_C^2 + P_C^2 \right)^{\frac{3}{2}} }
,\end{align}
to capture the classical field dependence in the interaction. This tensor coincidentally has the same symmetries as the Riemann curvature tensor $B_{\mu \nu \rho \tau} = B_{\rho \tau \mu \nu}$ and $B_{\mu \nu \rho \tau} = - B_{\nu \mu \rho \tau} = - B_{\mu \nu \tau \rho}$. Here it is constant as it is purely a function of the classical field strength.

The Lagrangian for the quantum field thus suggests the quantum field has a Maxwell-like propagation $S_a \cosh \gamma$ and an interaction vertex quadratic in the quantum field $a_{\mu}$.

As with the QED Lagrangian, we have that we can rearrange $S_a$ into a more useful form with
\begin{align}
    S_a = -\frac{1}{4} f_{\mu \nu} f^{\mu \nu} &= \frac{1}{2} a_{\nu} \left( g^{\mu \nu}\partial_\rho \partial^{\rho}  - \partial^{\mu} \partial^{\nu} \right) a_{\mu},
    \intertext{and using the symmetry of $B_{\mu \nu \rho \tau}$, we have}
    B^{\mu \nu \rho \tau} f_{\mu \nu} f_{\rho \tau} &= 4a_{\nu} \left( B^{\mu \nu \rho \tau} \partial_{\mu} \partial_{\rho} \right) a_{\tau}
,\end{align}
so that the Lagrangian can be expressed as
\begin{align}
    \mathcal{L} &= \mathcal{L}_C +\frac{\cosh \left( \gamma \right) }{2} a_{\nu} \left( g^{\mu \nu}\partial_\rho \partial^{\rho} - \partial^{\mu} \partial^{\nu}\right) a_{\mu} \nonumber\\
    &\phantom{=}~+ \sinh \left( \gamma \right)  a_{\nu} \left( S_C \frac{g^{\mu \nu}\partial_\rho \partial^{\rho} - \partial^{\mu} \partial^{\nu} }{\sqrt{S_C^2 + P_C^2}} + 4B^{\alpha \nu \beta \mu} \partial_{\alpha} \partial_{\beta} \right) a_{\mu}
.\end{align}

As ModMax is an abelian gauge theory, it must be gauge fixed for this Lagrangian to give meaningful results. Applying the Faddeev-Popov procedure results in the addition of the gauge fixing term $\xi a_{\nu} \partial^{\nu} \partial^{\mu} a_{\mu}$ where $\xi$ here can be background dependent. Taking the Feynman gauge equivalent of $\xi = \frac{1}{2}\cosh \gamma - \frac{S_C}{\sqrt{S_C^2 + P_C^2}} \sinh \gamma$ the Lagrangian simplifies to
\begin{align}
    \mathcal{L} &= \mathcal{L}_C +\frac{\cosh \left( \gamma \right) }{2} a_{\nu} g^{\mu \nu}\partial_\rho \partial^{\rho} a_{\mu} + \sinh \left( \gamma \right)  a_{\nu} \left( \frac{S_C}{\sqrt{S_C^2 + P_C^2}}\partial_\rho \partial^{\rho} g^{\mu \nu} + 4B^{\alpha \nu \beta \mu} \partial_{\alpha} \partial_{\beta} \right) a_{\mu}
.\end{align}

With $\partial_\mu \to ik_\mu$, this Lagrangian can be expressed in momentum space as
\begin{align}
    \mathcal{L} &= \mathcal{L}_C -\frac{\cosh \left( \gamma \right) }{2} a_{\nu} k^2 g^{\mu \nu} a_{\mu} - \sinh \left( \gamma \right)  a_{\nu} \left( \frac{S_C}{\sqrt{S_C^2 + P_C^2}} k^2g^{\mu \nu} + 4B^{\alpha \nu \beta \mu} k_{\alpha} k_{\beta} \right) a_{\mu}
.\end{align}

\begin{note}
    We consider the last term in this Lagrangian as an interaction term despite the classical field strength tensors that appear being non-dynamical. Namely, as we have $\partial_\mu C_{\nu \rho} = 0$, the classical field strengths are independent of $x_{\mu}$ and thus can be factored out of the action integral as constants. However this term in the Lagrangian still represents an interaction of the quantum field with a fixed background source.
\end{note}

With the $\sinh \gamma$ term interpreted as an interaction vertex between the classical and quantum fields, the momentum space propagator for the quantum field is then given by
\begin{align}
    -\cosh \gamma k^2 g_{\mu \nu} D^{\nu \rho} &=  i \delta_{\mu}^{\rho} \\
    \implies D^{\nu \rho}  &= \frac{1}{\cosh \gamma} \frac{-i g^{\nu \rho}}{k^2} \label{eq:MMpropagator}
,\end{align}
which is entirely analogous to the propagator for the quantum field in QED. See \cref{sec:propagator} for a derivation of this propagator.

\section{One Loop Effective Action}

With the propagator obtained, we are nearly able to obtain the effective action $\Gamma$ describing quantum corrections to the classical ModMax Lagrangian. Namely, we have that the effective action is equal to the classical action $S\left[ C_{\mu} \right] $ plus quantum corrections given by
\begin{align}
    \Gamma = \underbrace{\int \dd{^{d}x} \mathcal{L}_C}_{S[C_\mu]} + \frac{i}{2} \Tr \left[ \log \left( \frac{\delta^2 \mathcal{L}}{\delta a_{\mu} \delta a_\nu} \right)  \right] + \cdots 
,\end{align}
where we have truncated the expansion at the first term corresponding to one-loop diagrams. In this thesis I compute the one-loop effective action $\Gamma$ using diagrammatic techniques. However, note that heat kernel techniques can also be used and prove effective in the constant background case \cite{Pinelli}.

Evaluating the trace-log of this operator is equivalent to evaluating all Feynman diagrams that are constructable out of the quantum field propagator and the interaction vertex which contain at most one loop. This corresponds to a single infinite series of diagrams given by
\begin{align}
    \Gamma &= S[C_\mu] + \mathcal{D}_1 + \mathcal{D}_2 + \mathcal{D}_3 + \cdots  \nonumber\\
    &= S[C_\mu] +
    \raisebox{-0.5\height}{\includegraphics{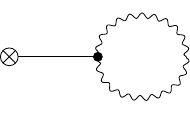}} ~ + ~ 
    \raisebox{-0.5\height}{\includegraphics{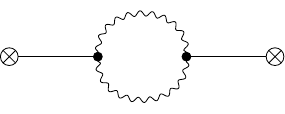}} ~+ ~
    \raisebox{-0.5\height}{\includegraphics{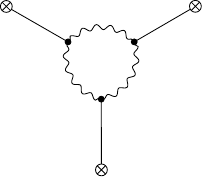}} ~ +  \cdots
,\end{align}
where we draw quantum fields as wavy internal lines and the classical background field as an unlabelled solid external line.

To evaluate such diagrams, we need to derive the vertex factor for ModMax, which together with the propagator forms the Feynman rules for the theory. Reading off the Lagrangian, we see that the vertex factor takes the form
\begin{align} \label{eq:vertex}
    \vcenter{\hbox{
                  \includegraphics{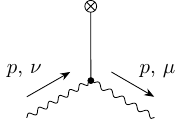}
          }} &= -2\sinh \left( \gamma \right) \left( \frac{S_C}{\sqrt{S_C^2 + P_C^2}} p^2 g^{\mu \nu} + 4 B^{\alpha \nu \beta \mu} p_{\alpha} p_{\beta} \right) 
,\end{align}
where notice that no momentum can flow through the classical fields in the interaction as $\partial_\mu C_{\nu \rho} = 0$.

With this vertex factor obtained, the first diagram in this infinite series is then given by
\begin{align}
    \mathcal{D}_1 = \vcenter{\hbox{
            \includegraphics{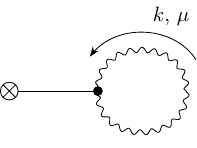}
}} &= \int \frac{\dd{^{d}k}}{\left( 2\pi \right)^{d}} \sinh \left( \gamma \right) \left( \frac{2S_C}{\sqrt{S_C^2 + P_C^2}}  k^2 g^{\mu \nu} + 8 B^{\alpha \nu \beta \mu} k_{\alpha} k_{\beta} \right) D_{\nu \mu} 
,\end{align}
    where the symmetric replacement under the integral of $k_{\alpha } k_{\beta} \to \frac{k^2}{d} g_{\alpha \beta}$\footnote{Note that this replacement is only valid as the only symmetric tensor present is $g^{\alpha \beta}$, and the constant $\frac{1}{d}$ can be seen by contracting both sides with $g_{\alpha \beta}$.} and the propagator derived above yield
\begin{align}
    \mathcal{D}_1 &= -i\tanh \left( \gamma \right)\int \frac{\dd{^{d}k}}{\left( 2\pi \right)^{d}}  \left( \frac{2S_C}{\sqrt{S_C^2 + P_C^2}} k^2 g^{\mu \nu} + \frac{8}{d} B^{\alpha \nu \beta \mu} g_{\alpha \beta} k^2 \right) \frac{g_{\nu \mu}}{k^2} \label{eq:sym_rep_used}     \\
    &= -i\tanh \left( \gamma \right) \left(  2d\frac{S_C}{\sqrt{S_C^2 + P_C^2}}  +  \frac{8}{d}\tensor{B}{^{\alpha \nu}_{\alpha \nu}} \right) \int \frac{\dd{^{d}k}}{\left( 2\pi \right)^{d}}  1
.\end{align}
Notice the lack of $k$ dependence in this integral. Proceeding regardless, this contraction simplifies greatly with the use of the identity
\begin{align}
    -\frac{1}{4}\widetilde{C}_{\mu \nu} \widetilde{C}^{\mu \nu} &= -S_C,
    \intertext{to}
   \tensor{B}{^{\alpha \nu}_{\alpha \nu}} &= -\frac{S_C^3 - 2S_C P_C^2 + S_C P_C^2}{2 \left( S_C^2 + P_C^2 \right)^{\frac{3}{2}}} \nonumber \\
 &= -\frac{S_C^3-S_C P_C^2}{2 \left( S_C^2 + P_C^2 \right)^{\frac{3}{2}}}
.\end{align}

However, as the integral is independent of $k$, it is scaleless and thus while $\Lambda^{d}$-divergent with a naive cutoff, using dimensional regularization we can show it is exactly zero.

\section{Dimensional Regularization}

Divergent quantities that one would naively expect to be physical are a common occurrence in quantum field theory. Such divergences usually arise at some number of loops, and require \textit{renormalization}, the process of recovering finite physical results from such theories by redefining constants in the Lagrangian \cite{Schroeder,Renormalization,LeBellac}. Such redefinitions absorb the divergences that arise.

However, first one must characterize the divergence, the process of which is referred to as \textit{regularization}. This process is not unique and there are many \textit{regulators} one can make use of. For example, for the above integral, the naive method is called \textit{cutoff regularization} where one introduces a maximum momentum scale $k^2 \leq \Lambda^2$ such that our integral now reads,
\begin{align}
    \int \frac{\dd{^{d}k}}{\left( 2\pi \right)^{d}} 1 \too \int_{-\Lambda}^{\Lambda} \frac{\dd{^{d}k}}{\left( 2\pi \right)^{d}} 1
.\end{align}
One would then have $\Lambda$ dependence in the Lagrangian, with the intention of taking the limit of $\Lambda \to \infty$ to recover the original theory.

\begin{note}
    The final result of regularization is expected to be independent of the regulator used. If two regulators lead to different observable quantities, often a symmetry is being broken by one or both regulators.
\end{note}

While cutoff regularization is approachable, it is not the most elegant method. This is largely caused by the introduction of a characteristic length scale $\Lambda$ in our otherwise scaleless theory. 

Instead, we will make use of \textit{dimensional regularization} where we consider the dimension to be a free parameter $d \in \R \setminus \N$. The natural numbers $\N$ are excluded from the domain here as they lead to divergences and thus undefined values. However, the limit in which $d \to n \in \N$ (usually $d = 4$) is well defined. We then appeal to \textit{analytic continuation}, where as a given calculation is an analytic function of the dimension $d$, we define the value at $d=4$ to agree with the limit $d \to 4$. This is a valid regulator, and introduces no characteristic scale. We use dimensional regularization throughout this thesis as it causes a large number of otherwise difficult integrals to vanish.

\begin{claim}
    In fact, returning to the integral at hand, using dimensional regularization, we can show that
    \begin{align}
        \int \frac{\dd{^{d}k}}{\left( 2\pi \right)^{d}} 1 &\to 0
    ,\end{align}
    as $d \to n \in \N$.
\end{claim}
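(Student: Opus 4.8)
The statement is the standard fact that a scaleless integral vanishes in dimensional regularization, and the plan is to make this precise by temporarily introducing an auxiliary mass $M$ that splits the integral into an ultraviolet piece and an infrared piece whose analytic continuations cancel. Heuristically the result is already forced by scaling: rescaling $k \to \lambda k$ sends $\dd{^{d}k} \to \lambda^{d}\dd{^{d}k}$, so if $I \equiv \int \dd{^{d}k}/(2\pi)^{d}$ is a well-defined number it must satisfy $I = \lambda^{d} I$ for every $\lambda > 0$, forcing $I = 0$ whenever $d \neq 0$. To turn this into a genuine dimensional-regularization computation I would instead proceed constructively.

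First I would Wick rotate to Euclidean momentum and record the one nontrivial input, the master integral
\begin{align}
\int \frac{\dd{^{d}k_E}}{(2\pi)^{d}} \frac{\left( k_E^2 \right)^{a}}{\left( k_E^2 + M^2 \right)^{b}} &= \frac{1}{(4\pi)^{d/2}} \frac{\Gamma\!\left( a + \tfrac{d}{2} \right) \Gamma\!\left( b - a - \tfrac{d}{2} \right)}{\Gamma\!\left( \tfrac{d}{2} \right) \Gamma\!\left( b \right)} \left( M^2 \right)^{d/2 + a - b},
\end{align}
obtained by performing the angular integral over the $(d-1)$-sphere and recognising the radial integral as a Beta function; this expression is meromorphic in $d$ and defines each side by analytic continuation. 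I would then insert the algebraic identity $1 = \frac{k_E^2}{k_E^2 + M^2} + \frac{M^2}{k_E^2 + M^2}$ under the integral sign, so that
\begin{align}
\int \frac{\dd{^{d}k_E}}{(2\pi)^{d}} 1 &= \int \frac{\dd{^{d}k_E}}{(2\pi)^{d}} \frac{k_E^2}{k_E^2 + M^2} + \int \frac{\dd{^{d}k_E}}{(2\pi)^{d}} \frac{M^2}{k_E^2 + M^2}.
\end{align}
Applying the master formula with $(a,b) = (1,1)$ to the first term and (after extracting the explicit $M^2$ prefactor) with $(a,b) = (0,1)$ to the second, and using $\Gamma(1 + \tfrac{d}{2}) = \tfrac{d}{2}\Gamma(\tfrac{d}{2})$ together with $\Gamma(-\tfrac{d}{2}) = \Gamma(1-\tfrac{d}{2})/(-\tfrac{d}{2})$ to simplify, each term reduces to $\pm \frac{1}{(4\pi)^{d/2}}\Gamma(1 - \tfrac{d}{2})(M^2)^{d/2}$ with opposite signs, so the two cancel identically in $d$ and in $M$. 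Since the right-hand side is thus the zero function of $d$, its limit as $d \to n \in \N$ is zero, which is the claim.

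The main obstacle to emphasise is that the original integral converges for no value of $d$: it diverges in the ultraviolet for $\operatorname{Re} d \geq 0$ and in the infrared for $\operatorname{Re} d \leq 0$, so there is no strip of the complex $d$-plane on which it is literally defined. The content of the splitting is that the two pieces do possess convergence strips — the $M^2/(k_E^2+M^2)$ term for $0 < \operatorname{Re} d < 2$ and the $k_E^2/(k_E^2+M^2)$ term for $-2 < \operatorname{Re} d < 0$ — which are disjoint. Dimensional regularization assigns to each piece its unique meromorphic continuation out of its own strip, and the nontrivial point is that these two continuations, carrying respectively the infrared and ultraviolet sensitivity to the auxiliary scale $M$, come out exactly equal and opposite. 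Verifying that the $M$-dependence and the $\Gamma$-function prefactors match precisely, rather than merely cancelling at leading order, is the one place the computation must be done with care; everything else is bookkeeping.
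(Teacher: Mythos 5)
Your proposal is correct and follows essentially the same route as the paper: the same splitting $1 = \frac{k^2}{k^2+M^2} + \frac{M^2}{k^2+M^2}$ with an auxiliary mass, the same master integral applied with $(\alpha,\beta)=(1,1)$ and $(1,0)$, and the same gamma-function cancellation followed by analytic continuation to integer $d$. Your added remarks on the disjoint convergence strips of the two pieces and the scaling heuristic are a nice clarification of \emph{why} the cancellation is the defining content of the scaleless-integral prescription, but they do not change the argument.
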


\begin{proof}
    We begin by considering $d \in \mathbb{R} \setminus \mathbb{N}$ dimensional Euclidean space (rather than Minkowski) for convenience. Taking the integral
    \begin{align}
        \int \frac{\dd{^{d}k}}{\left( 2\pi \right)^{d}} 1 &=  \int \frac{\dd{^{d}k}}{\left( 2\pi \right)^{d}} \frac{k^2}{k^2 + m^2} + \int \frac{\dd{^{d}k}}{\left( 2\pi \right)^{d}} \frac{m^2}{k^2 + m^2} 
    .\end{align}
    These are both specific cases of a known integral \cite{Renormalization}, with general form given by
    \begin{align}
        \int \frac{\dd{^{d}k}}{\left( 2\pi \right)^{d}} \frac{k^{2\beta}}{\left( k^2 + m^2 \right)^{\alpha}} &= \frac{\Gamma \left( \beta + \frac{d}{2} \right) \Gamma \left( \alpha -\beta - \frac{d}{2} \right) }{\left( 4\pi \right)^{\frac{d}{2}} \Gamma \left( \alpha \right) \Gamma \left( \frac{d}{2} \right)  } m^{2 \left( \frac{d}{2} - \alpha + \beta \right) }
    ,\end{align}
    where $\Gamma \left( n + 1 \right) = n!$ is the \textit{Gamma function}, the generalization of the factorial to the real numbers using analytic continuation.
    With $\alpha = \beta = 1$ we see that
    \begin{align}
        \int \frac{\dd{^{d}k}}{\left( 2\pi \right)^{d}} \frac{k^2}{k^2 + m^2} &= \frac{\Gamma \left( 1 + \frac{d}{2} \right) \Gamma\left( -\frac{d}{2} \right)  }{\left( 4\pi \right)^{\frac{d}{2}} \Gamma \left( \frac{d}{2}\right) } m^{d},
        \intertext{where $\Gamma \left( x + 1 \right) = x\Gamma \left( x \right) $ implies}
     \int \frac{\dd{^{d}k}}{\left( 2\pi \right)^{d}} \frac{k^2}{k^2 + m^2} &= \left( \frac{d}{2} \right) \frac{\Gamma \left( \frac{d}{2} \right) \Gamma\left( -\frac{d}{2} \right)  }{\left( 4\pi \right)^{\frac{d}{2}} \Gamma \left( \frac{d}{2}\right) } m^{d},
        \intertext{and for $\beta = 0$ we find}
        m^2 \int \frac{\dd{^{d}k}}{\left( 2\pi \right)^{d}} \frac{1}{k^2 + m^2} &= m^2\frac{\Gamma\left( \frac{d}{2} \right) \Gamma \left( 1 - \frac{d}{2} \right)  }{\left( 4\pi \right)^{\frac{d}{2}} \Gamma\left( \frac{d}{2} \right) } m^{d-2},
        \intertext{where $\Gamma \left( x + 1 \right) = x\Gamma \left( x \right) $ similarly implies}
        m^2 \int \frac{\dd{^{d}k}}{\left( 2\pi \right)^{d}} \frac{1}{k^2 + m^2} &= \left( -\frac{d}{2} \right) \frac{\Gamma\left( \frac{d}{2} \right) \Gamma \left( - \frac{d}{2} \right)  }{\left( 4\pi \right)^{\frac{d}{2}} \Gamma\left( \frac{d}{2} \right) } m^{d}
    ,\end{align}
    which is the negative of the previous integral, and thus their sum vanishes as desired for any $d \in \R \setminus \N$. By appealing to the analytic continuation of this result as $d \to 4$, it holds identically for $d=4$. \\
\end{proof}

In fact, the above argument is generalizable to show that any integral of the form
\begin{align*}
    \int \frac{\dd{^{d}k}}{\left( 2\pi \right)^{d}} k^{2 \alpha}
,\end{align*}
vanishes in dimensional regularization for $\alpha \in \Z$. Intuitively, this is because the integral has no characteristic external scale dependence which is central in dimensional regularization \cite{Smirnov}.

\section{Generalization to $n$th order diagrams}

Notice that the above result relies only on the momentum dependence of the integral.

For a general $n$th order one loop diagram however, all insertions of additional vertices do not change the momentum dependence of the integral. Namely, as we will have $n$ propagators (\cref{eq:MMpropagator}) $D_{\mu \nu} \propto \frac{1}{k^2}$ and $n$ vertices (\cref{eq:vertex}) $\propto k^2$ which are all equal by momentum conservation (as we assumed no momentum flow through the classical fields), the integral will be momentum independent as was the case for the 1 vertex diagram, $\mathcal{D}_1$. 

In fact, as the vertex factor contains $k_{\alpha} k_{\beta}$ rather than $k^2$ we will first have to perform the symmetrization of $2n$ momenta that leads to the replacement
\begin{align}
    \prod_{i=1}^{n} k^{\alpha_{2i-1}} k^{\alpha_{2i}} \to \frac{k^{2n} \left( d -2 \right)!! (2n-1)!!}{\left( d-2 +2n \right)!! } g^{(\alpha_{1}\alpha_2} \cdots g^{\alpha_{2n} \alpha_{2n+1})} \label{eq:symmetrization}
,\end{align}
where $g^{(\mu \nu} g^{\rho \tau)} = \frac{1}{3} \left( g^{\mu \nu} g^{\rho \tau} + g^{\mu \rho} g^{\nu \tau} + g^{\mu \tau} g^{\nu \rho} \right)$ is symmetrization with normalization, and $n!! = n \cdot (n-2) \ldots 4 \cdot 2$ is the double factorial.

Therefore all diagrams $\mathcal{D}_n$ have an integrand independent of momentum in the same fashion as $\mathcal{D}_1$, and thus vanish in dimensional regularization by the same argument.

We therefore conclude that the perturbative one-loop effective action $\Gamma$, with constant background field strength $C_{\mu\nu}$, vanishes through dimensional regularization. This implies that there are no 1-loop corrections to the classical theory under these assumptions.

\section{Two Loops}

The only two loop diagram to order $\gamma$ (i.e. having one vertex as each carries $\sinh \gamma \overset{\mathcal{O}\left( \gamma \right) }{=} \gamma$) is quartic in the quantum field such that
\begin{align*}
    \mathcal{D}_{\text{2 loops}} &= 
    \vcenter{\hbox{
            \includegraphics{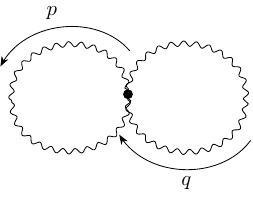}
}}
.\end{align*}
The classical field line has been suppressed for clarity (and due to the fact that it carries no momenta). To evaluate this diagram, we need to return to the expansion of the square root and consider terms up to quartic order such that
\begin{align}
    \sqrt{S_C^2 + P_C^2 + \mathcal{Q}}  &= \sqrt{S_C^2 + P_C^2}  + \frac{\mathcal{Q}}{2 \sqrt{S_C^2 + P_C^2}} - \frac{\mathcal{Q}^2}{8 \left( S_C^2 + P_C^2 \right)^{\frac{3}{2}}}\nonumber\\
    &\phantom{=}~+ \frac{\mathcal{Q}^3}{16 \left( S_C^2 + P_C^2 \right)^{\frac{5}{2}}} - \frac{5\mathcal{Q}^{4}}{128 \left( S_C^2 + P_C^2 \right)^{\frac{7}{2}}} + \mathcal{O}\left( \mathcal{Q}^{5} \right) 
.\end{align}

From $\mathcal{Q}$ and $a_\mu$ power counting, we have that the $\mathcal{O}\left( a^{4} \right) $ term in the Lagrangian is
\begin{align}
    \frac{\mathcal{L}_{a^4}}{\sinh \gamma} &= \frac{1}{2\sqrt{S_C^2 + P_C^2} }S_a^2 \nonumber\\
                      &\phantom{=}~- \frac{1}{8\left( S_C^2 + P_C^2 \right)^{\frac{3}{2}}}\bigg[ \left( 2S_C S_a + 2P_C P_a + \frac{1}{4}\left( C_{\mu \nu} C_{\rho \tau} + \widetilde{C}_{\mu \nu} \widetilde{C}_{\rho \tau} \right) f^{\mu \nu} f^{\rho \tau} \right)^2\nonumber\\
                      &\phantom{=}~+ 2\left( \left( S_C C_{\mu \nu} + P_{C}\widetilde{C}_{\mu \nu} \right) f^{\mu \nu} \right)  \left( \left( S_a C_{\alpha \beta}  + P_a \widetilde{C}_{\alpha \beta} \right) f^{\alpha \beta}  \right) \bigg]  \nonumber\\
                      &\phantom{=}~+\frac{3}{16\left( S_C^2 + P_C^2 \right)^{\frac{5}{2}}} \left( \left( S_C C_{\alpha \beta} + P_C \widetilde{C}_{\alpha \beta} \right) f^{\alpha \beta} \right)^2\nonumber\\
                      &\phantom{=}~\phantom{+}\times\left( 2S_C S_a + 2P_C P_a + \frac{1}{4}\left( C_{\mu \nu} C_{\rho \tau} + \widetilde{C}_{\mu \nu} \widetilde{C}_{\rho \tau} \right) f^{\mu \nu} f^{\rho \tau}  \right) \nonumber \\
                      &\phantom{=}~-\frac{5}{128 \left( S_C^2 + P_C^2 \right)^{\frac{7}{2}}}\left( \left( S_C C_{\mu \nu} + P_C \widetilde{C}_{\mu \nu} \right) f^{\mu \nu}  \right)^{4} 
.\end{align}

It is unapproachable to ascertain the exact form of the background dependence. However, as the quantum fields only appear in the field strengths, we notice that we can write the vertex
\begin{align}
    \mathcal{L}_{a^{4}} &= \sinh \left( \gamma \right) A_{\mu \nu \rho \tau \alpha \beta \sigma \kappa} f^{\mu \nu} f^{\rho \tau} f^{\alpha \beta} f^{\sigma \kappa},
    \intertext{where $A$ is a background dependent, momentum independent tensor that captures the structure of the vertex. Its exact form is not important as we will see. Observe that this can be written, for some tensor $A'$, as}
    \mathcal{L}_{a^{4}} &= \sinh \left( \gamma \right) A'_{\mu \nu \rho \tau \alpha \beta \sigma \kappa} \partial^{\mu} a^{\nu} \partial^{\rho} a^{\tau} \partial^{\alpha} a^{\beta} \partial^{\sigma} a^{\kappa}
.\end{align}

As each field has a derivative acting on it, we can express this vertex as
\begin{align}
    \vcenter{\hbox{
            \includegraphics{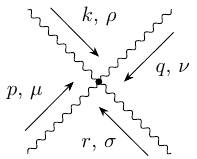}
}} &= \sinh \left( \gamma \right)  \tensor{B}{^{\alpha}_{\mu}^{\beta}_{\nu}^{\kappa}_{\rho}^{\tau}_{\sigma}} p_{\alpha} q_{\beta} k_{\kappa} r_{\tau} 
,\end{align}
where $\tensor{B}{^{\alpha}_{\mu}^{\beta}_{\nu}^{\kappa}_{\rho}^{\tau}_{\sigma}}$ is also independent of the momenta and captures both the background dependence (from the $A'$ tensor) and the symmetrization of momenta from different contractions. The explicit form of $B$ is also not important. Therefore the double loop diagram can be written with symmetry factor $S = 8$ \cite{Schroeder} as
\begin{align}
    \mathcal{D}_{\text{2 loops}} &= \frac{\sinh \gamma}{8} \tensor{B}{^{\alpha}_{\mu}^{\beta}_{\nu}^{\kappa}_{\rho}^{\tau}_{\sigma}} \int \frac{\dd{^{d}p} \dd{^{d}q}}{\left( 2\pi \right)^{2d}} p_{\alpha} p_{\beta} q_{\kappa} q_{\tau} D^{\mu \rho} D^{\nu \sigma},
    \intertext{where with the propagator from \cref{eq:MMpropagator} given by $D^{\mu \rho} = \frac{1}{\cosh \gamma} \frac{-i g^{\mu \rho}}{k^2}$, we have}
    \mathcal{D}_{\text{2 loops}} &= \frac{\coth \gamma}{8\cosh \gamma}\tensor{B}{^{\alpha}_{\mu}^{\beta}_{\nu}^{\kappa}_{\rho}^{\tau}_{\sigma}} \int \frac{\dd{^{d}p} \dd{^{d}q}}{\left( 2\pi \right)^{2d}} p_{\alpha} p_{\beta} q_{\kappa} q_{\tau} \frac{g^{\mu \rho}g^{\nu \sigma}}{p^2 q^2},
     \intertext{where applying the metric tensors yields}
    \mathcal{D}_{\text{2 loops}} &= \frac{\coth \gamma}{8\cosh \gamma}\tensor{B}{^{\alpha}_{\mu}^{\beta}_{\nu}^{\kappa}^{\mu}^{\tau}^{\nu}}\int \frac{\dd{^{d}p} \dd{^{d}q}}{\left( 2\pi \right)^{2d}} \frac{p_{\alpha} p_{\beta} q_{\kappa} q_{\tau} }{p^2 q^2},
     \intertext{where as before symmetry implies we can take $p_{\alpha} p_{\beta} \to \frac{p^2}{4} g_{\alpha \beta}$ and identically for $q$ yielding}
    \mathcal{D}_{\text{2 loops}} &= \frac{\coth \gamma}{128\cosh \gamma}\tensor{B}{_{\beta}_{\mu}^{\beta}_{\nu}_{\tau}^{\mu}^{\tau}^{\nu}}  \int \frac{\dd{^{d}p} \dd{^{d}q}}{\left( 2\pi \right)^{2d}} 1
,\end{align}
for which, as we found above, cutoff regularization suggests a polynomial divergence of the form $\Lambda^{2d}$, however the dimensional regularization result derived above allows us to conclude it vanishes.

Notice that we can insert a vertex along either of these loops which will not change the momentum structure of the diagram and get us to $\gamma^2$ order
\begin{align*}
\vcenter{\hbox{
        \includegraphics{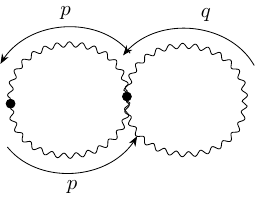}
}}
,\end{align*}
and thus we conclude that this diagram will also vanish by similar arguments.

The only other diagram arising at order $\gamma^2$ has two cubic vertices, which will come from a term in the Lagrangian of the form
\begin{align}
   \mathcal{L}_{a^3} &= A'_{\mu \nu \rho \tau \alpha \beta} \partial_{\mu} a_{\nu} \partial_{\rho} a_{\tau} \partial_{\alpha} a_{\beta}
,\end{align}
and leads to a vertex of the form
\begin{align}
\vcenter{\hbox{
        \includegraphics{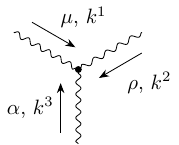}
}} =
\sinh \left( \gamma \right) \tensor{B}{^{\mu}_{\nu}^{\rho}_{\tau}^{\alpha}_{\beta}} k^1_{\mu} k^2_{\rho} k^3_{\alpha}
.\end{align}

This diagram does not immediately vanish and the full calculation using dimensional regularization is shown in \cref{sec:2_loop_2_vertex_calc}. The diagram evaluates to

\begin{align}
    \mathcal{D}_{\text{Cubic}} &=  
\vcenter{\hbox{
    \includegraphics{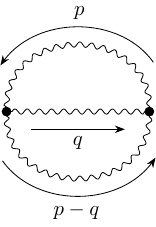}
}} \nonumber\\
                               &=  \left( -i \right)^3 \frac{\sinh^2 \gamma}{2 \cosh^3 \gamma} \tensor{B}{^{\mu_1}_{\nu_1}^{\rho_1}_{\tau_1}^{\alpha_1}_{\beta_1}} B^{\mu_2 \nu_1 \rho_2 \tau_1 \alpha_2 \beta_1} \times \nonumber \\
&\phantom{=}\int \frac{\dd{^{d}q}}{\left( 2\pi \right)^{d}} \frac{q_{\rho_1} q_{\rho_2} q_{\alpha_1} q_{\alpha_2}}{q^2} \frac{i\Gamma\left( 1 - \frac{d}{2} \right)}{\left( 4\pi \right)^{\frac{d}{2}} \left( d-1 \right)  }\left[ q_{\mu_1} q_{\mu_2} q^{d-4} \left( 1 - \frac{d}{2} \right) - \frac{dg_{\mu_1 \mu_2}}{8}  q^{d-2}  \right]
.\end{align}

Ignoring the prefactors to focus on the structure of the divergence, we take $d = 4 + 2\epsilon$ with the intention of taking the limit $\epsilon \to 0$ such that $d \to 4$. Using the expansion of
\begin{align}
    \Gamma \left( -1-\epsilon \right) = \frac{1}{\epsilon} - \gamma + 1
,\end{align}
we find that the divergent part of this integral is
\begin{align}
      \mathcal{D}_{\text{Cubic}} &\propto \frac{1}{\epsilon}\int \frac{\dd{^{4}q}}{\left( 2\pi \right)^{4}} \frac{q_{\rho_1} q_{\rho_2} q_{\alpha_1} q_{\alpha_2}}{q^2} \frac{i}{\left( 4\pi \right)^{2} 6}\left[ 2q_{\mu_1} q_{\mu_2} + g_{\mu_1 \mu_2}q^{2}  \right]
.\end{align}

This term is \textit{logarithmically divergent}, and similarly to before, due to the presence of the prefactor tensors, does not resemble the original Lagrangian. Notice that when $\epsilon > 0$ so that $d \neq 4$, we are left with a symmetrizable integral over $q$ that will vanish identically. By analytic continuation, in this regularization scheme we therefore conclude that the integrals also vanish at $d = 4$.

Therefore the two loop effective action also vanishes at minimum to order $\gamma^2$. Proceeding further in this manner is impractical but we expect that a generalization can be made to suggest that the effective action should vanish at all orders in $\gamma$ and at all loops. Allowing derivatives to act on the field strength such that $\partial_{\mu} C_{\nu \rho} \neq 0$ contrary to what was assumed here, will lead to logarithmic divergences and even constant diagrams as we will see below that will not vanish in dimensional regularization. 

\section{Varying Backgrounds}

Proceeding in a similar manner to before, we perform the background field method splitting, only now without discarding terms of the form $\partial_\mu C_{\nu \rho}$ (as well as higher derivatives). This leads to an entirely analogous Lagrangian
\begin{align}
    \mathcal{L} &= \mathcal{L}_C + S \cosh \gamma + \sinh \gamma a_{\mu} P^{\mu \nu} a_{\nu},
    \intertext{where the background field dependence within $P^{\mu\nu}$ has become more complex with}
    P^{\mu \nu} &=  \left( S_C \partial^{\rho} + \partial^{\rho} S_C \right)  \partial_\rho g^{\mu \nu} - \partial^{\nu} S_C \partial^{\mu} - \partial_\alpha P_C \epsilon^{\alpha \nu \rho \mu} \partial_\rho + 2 \tensor{B}{^\tau^\mu^\rho^\nu} \partial_{\tau} \partial_{\rho} + 2 \partial_{\tau} B^{\tau \mu\rho \nu} \partial_{\rho}
,\end{align}
where trailing partial derivatives act on the quantum field $a_{\nu}$.

When we held $\partial_\mu C_{\nu \rho} = 0$ we were able to factor the classical field dependence out of the integral as it was necessarily independent of $x^{\mu}$. However in the general varying background case, this is no longer possible. Nonetheless, if we consider $P^{\mu \nu}\left( x \right) $ to represent the cumulative effect of the background rather than just a composite operator, then we can obtain Feynman rules for this theory. Namely, we see in the interaction vertex (with the prefactor $\sinh \gamma$), that we have two factors of the photon field $a_{\mu}$ and one $P_{\mu \nu}$. As such, our interaction vertex has two photons (represented by wavy lines) and one cumulative classical background field $P_{\mu \nu}$ (represented by a coiled line) which leads to a vertex factor
\begin{align}
    \vcenter{\hbox{
            \includegraphics{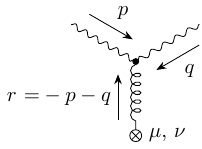}
}} &= -i\sinh \gamma \int \dd{^{4}x}P_{\mu \nu} \left( x \right)
,\end{align}
entirely analogously to the QED case. With this vertex factor, we proceed to evaluate the leading corrections in the infinite series of diagrams for a generic classical background field.

The only one-vertex diagram constructable from this vertex vanishes by an identical argument as in the constant background case, and thus we move to the one loop two vertex diagram, which we call $\mathcal{D}_2$ and draw
\begin{align}
    \mathcal{D}_2 = \vcenter{\hbox{
            \includegraphics{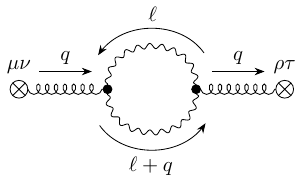}
}}
.\end{align}

Applying the derived Feynman rules (i.e. the vertex factor and propagator) in momentum space we see that this diagram yields
\begin{align}
     \mathcal{D}_2 &= \frac{\sinh^2 \gamma}{\cosh^2 \gamma} \int \frac{\dd{^{d}\ell} \dd{^{d}q}}{\left( 2\pi \right)^{2d}} P_{\mu \nu}\left( q,\ell \right) \frac{g^{\mu\rho} g^{\nu \tau}}{\ell^2 \left( \ell^2 + q^2 \right) } P_{\rho \tau} \left( -q,\ell \right).
    \intertext{Using dimensional regularization, we find by a similar argument as above that the leading order (in external momentum $q$) divergent contribution is of the form}
\mathcal{D}_2 &\propto \Gamma \left( -\frac{d}{2} \right) g_{(\mu \rho} g_{\alpha \beta)}  \int \frac{\dd{^{d}q}}{\left( 2\pi \right)^{d}} B^{\mu \nu \rho \tau} \tensor{B}{^\alpha_\nu^{\beta}_\tau} q^{d},
    \intertext{where the remaining free indices are contracted with symmetrized metric tensors. With $d = 4+2\epsilon$ and taking $\epsilon \to 0$ such that $d \to 4$, we see}
    \mathcal{D}_2 &\propto \left( \frac{1}{\epsilon} \right) g_{(\mu \rho} g_{\alpha \beta)} \int \frac{\dd{^{4}q}}{\left( 2\pi \right)^{4}} B^{\mu \nu \rho \tau} \tensor{B}{^\alpha_\nu^{\beta}_\tau} q^{4}
,\end{align}
which is not only divergent as $\epsilon \to 0$, but is also not of the form of the original Lagrangian. If this result was of the form of the original Lagrangian, then we can interpret such a diagram as suggesting a redefinition of a constant (such as $\gamma$) within the original Lagrangian, to absorb this divergence. However, as this is not the case, this suggests that ModMax is likely not a physical theory on the quantum domain, without modification. Note that it is possible that there is a structure hidden in such divergences which is of the form of the original Lagrangian, which may only appear when evaluated to all loops or all orders in $\gamma$. As this is infeasible to ascertain directly due to the increased complexity with order, we move to study ModMax's two dimensional analogue which may elucidate further insight.

Nonetheless, we have obtained the effective action for ModMax in both the static and varying background case, thus characterizing the predictions of the theory on the quantum level.

\chapter{Scalar Field Analogue in Two Dimensions}

With the quantization of ModMax characterized through the two-loop effective action, I study the scalar analogue of ModMax in $d = 2$ and show that the methods applied to ModMax generalized appropriately. In particular, I focus on generalizing the varying background case. Two dimensional conformal field theory is a highly active area of research, as conformal symmetry is extremely restrictive and thus powerful in $1+1$ spacetime dimensions. We expect nontrivial results to emerge at a lower order than in $d=4$ and thus such a toy model can provide insight into the underlying physics.

Similarly, upon successfully quantizing a conformal field theory, the corrections to the classical theory that arise can lead to the discovery of novel conformal field theories at the classical level. This phenomenon can be referred to as a classical conformal field theory \textit{generated} quantum mechanically. However, if the conformal symmetry is broken at the quantum level, referred to as an \textit{anomaly}, this is equally of interest as this often leads to observable predictions of the theory. The corrections I obtain appear to respect conformal symmetry, however further investigation is required into the properties of such generated classical conformal field theories.

Through the generalization of the approach I developed above, I find that the one loop effective action for backgrounds with constant field strength identically vanishes as in the original theory. By allowing the background to vary, I find that the first corrections to the classical theory arise at the order $\mathcal{O}\left( \gamma^2 \right) $, and resemble a QED-like operator dependent on the classical field. Motivated by the experimental bound of $\gamma \leq 3 \times 10^{-22}$, I further calculated the contribution from all diagrams up to order $\mathcal{O}\left( \gamma^2 \right)$, namely the infinite series of diagrams containing up to 2 vertices. The results of this section appear in \cite{paper}.

\section{Background Field Method}

Recall that in $d = 4$, the ModMax Lagrangian \cite{Sorokin_2022,Lechner_2022} is given by
\begin{align}
    \mathcal{L} &=  S \cosh \gamma + \sqrt{S^2 + P^2}  \sinh \gamma
.\end{align}

By dimensional reduction from ModMax in $d = 4$, one can define in $d = 2$ for $N \geq 2$ scalar boson fields $\phi^{i}$, (where $i \in \{1,\cdots,N\} $ labels the bosons\footnote{All $i,j \in \{1,\cdots,N\} $ indices are exclusively superscripts here to distinguish them from 3-vector indices used in Chapter 1.}) the Lagrangian \cite{Ferko_2022_2,Conti_2022,Barbashov1966}
\begin{align}
    \mathcal{L} = \frac{\cosh \gamma}{2} \partial_{\mu} \phi^{i}\partial^{\mu} \phi^{i} + \frac{\sinh \gamma}{2} \sqrt{2 \left( \partial_\mu \phi^{i} \partial^{\nu} \phi^{i} \right) \left( \partial_\nu \phi^{j} \partial^{\mu} \phi^{j} \right) - \left( \partial_\mu \phi^{i} \partial^{\mu} \phi^{i} \right)^2   } 
,\end{align}
where we have implicit summation over $i,j = 1, \ldots, N$ which label the bosons. If $N = 1$, the theory becomes trivial and reduces to $\mathcal{L} = e^{\gamma} \partial_\mu \phi \partial^{\mu} \phi$. The analogue of the field strength for this theory, $\forall N \in \N$, is
\begin{align}
    \tensor{\phi}{_{\mu}^{i}} &\equiv \partial_\mu \phi^{i}
,\end{align}
which lets us write the Lagrangian as
\begin{align}
    \mathcal{L} = \frac{\cosh \gamma}{2} \tensor{\phi}{_{\mu}^{i}} \tensor{\phi}{^{\mu}^{i}}+ \frac{\sinh \gamma}{2} \sqrt{2 \left( \tensor{\phi}{_{\mu}^{i}}\tensor{\phi}{^{\nu}^{i}} \right) \left( \tensor{\phi}{_{\nu}^{j}} \tensor{\phi}{^{\mu}^{j}} \right) - \left( \tensor{\phi}{_{\mu}^{i}} \tensor{\phi}{^{\mu i}} \right)^2  } 
,\end{align}
where the ModMax-like structure is more apparent.

Employing the background field method \cite{Abbott}, we split the field $\phi^{i}$ into a classical (background) field $C^{i}$ and a quantum field $Q^{i}$ such that
\begin{align}
    \phi^{i} &= C^{i} + Q^{i} \\
    \implies \tensor{\phi}{_{\mu}^{i}} &= \partial_\mu \left( C^i + Q^i \right) \\
    &\equiv \tensor{C}{_{\mu}^{i}} + \tensor{Q}{_{\mu}^{i}}
,\end{align}
where $\tensor{C}{_{\mu}^{i}}$ is the field strength tensor for the classical field $C^{i}$ and $\tensor{Q}{_{\mu}^{i}}$ is the field strength tensor for the quantum field $Q^{i}$.

We notice that one can choose analogues of the invariants $S$ and $P$ which can similarly be decomposed as
\begin{align}
    S&\equiv \tensor{\phi}{_{\mu}^{i}} \tensor{\phi}{^{\mu i}} \\
    &= \underbrace{\tensor{C}{_{\mu}^{i}} \tensor{C}{^{\mu i}}}_{S_C} + 2 \tensor{C}{_{\mu}^{i}} \tensor{Q}{^{\mu i}} + \underbrace{\tensor{Q}{_{\mu}^{i}} \tensor{Q}{^{\mu i}}}_{S_Q},\\
    \tensor{P}{_\mu^\nu} &\equiv \tensor{\phi}{_{\mu}^{i}} \tensor{\phi}{^{\nu i}} \\
     &= \underbrace{\tensor{C}{_{\mu}^{i}} \tensor{C}{^{\nu i}}}_{P_C} + \tensor{Q}{_{\mu}^{i}} \tensor{C}{^{\nu i}} + \tensor{C}{_{\mu}^{i}} \tensor{Q}{^{\nu i}} + \underbrace{\tensor{Q}{_{\mu}^{i}} \tensor{Q}{^{\nu i}}}_{P_Q}
,\end{align}
where the subscript on these invariants denotes the field they correspond to.
\begin{note}
    In this chapter, we do not assume $\partial_\mu \tensor{C}{_{\nu}^{i}} = 0$. Nonetheless, terms linear in $Q^{i}$ will still vanish in the computation of the effective action, as the adjoining classical fields collectively satisfy the equations of motion \cite{Schroeder}.
\end{note}

Observe that $S^2$ and $P^2$ can be written as
\begin{align}
    S^2 &= S_C^2 + \underbrace{4S_C \tensor{C}{_{\mu}^{i}} \tensor{Q}{^{\mu i}}}_{\mathcal{O}\left( Q \right) } + 2 S_C S_Q  + 4\tensor{C}{_{\mu}^{i}}Q^{\mu i} \tensor{C}{_\nu^{j}} Q^{\nu j}  + \underbrace{4S_Q \tensor{C}{_{\mu}^{i}} Q^{\mu i}}_{\mathcal{O}\left( Q^3 \right) }+ \underbrace{S_Q^2}_{\mathcal{O}\left( Q^{4} \right) },
    \intertext{where neglecting higher order terms we are left with}
    S^2 &= S_C^2 + \underbrace{4S_C \tensor{C}{_{\mu}^{i}} \tensor{Q}{^{\mu i}}}_{\mathcal{O}\left( Q \right) } + 2 S_C S_Q  + 4\tensor{C}{_{\mu}^{i}}Q^{\mu i} \tensor{C}{_\nu^{j}} Q^{\nu j}, \\
    P^2 &= \tensor{C}{_{\mu}^{i}} \tensor{C}{^{\mu j}} \tensor{C}{_{\nu}^{i}} \tensor{C}{^{\nu j}} + \underbrace{4\tensor{C}{_{\mu}^{i}} \tensor{C}{^{\mu}^{j}} \tensor{C}{^{\nu}^{i}} \tensor{Q}{_{\nu}^{j}}}_{\mathcal{O}\left( Q \right) } + 2\tensor{C}{_{\mu}^{i}} C^{\nu i} \tensor{Q}{_{\nu}^{j}} \tensor{Q}{^{\mu j}}\nonumber \\
    &\phantom{=}~+ 2\tensor{Q}{_{\mu}^{i}} C^{\nu i} \tensor{Q}{_{\nu}^{j}} C^{\mu j} + 2\tensor{Q}{_{\mu}^{i}} C^{\nu i} \tensor{C}{_{\nu}^{j}} Q^{\mu j} + \mathcal{O}\left( Q^3 \right).
    \intertext{Combining these terms, we see that the term underneath the square root is given by}
    \implies 2P^2 - S^2 &= 2P_C^2 -S_C^2 +\underbrace{ 8\tensor{C}{_{\mu}^{i}} \tensor{C}{^{\mu}^{j}} \tensor{C}{^{\nu}^{i}} \tensor{Q}{_{\nu}^{j}}-4S_C \tensor{C}{_{\mu}^{i}} \tensor{Q}{^{\mu i}} }_{\mathcal{O}\left( Q \right) } - 2 S_C S_Q  - 4\tensor{C}{_{\mu}^{i}}Q^{\mu i} \tensor{C}{_\nu^{j}} Q^{\nu j}  \nonumber\\
    &\phantom{=}~+ 4\tensor{C}{_{\mu}^{i}} C^{\nu i} \tensor{Q}{_{\nu}^{j}} \tensor{Q}{^{\mu j}} + 4\tensor{Q}{_{\mu}^{i}} C^{\nu i} \tensor{Q}{_{\nu}^{j}} C^{\mu j} + 4\tensor{Q}{_{\mu}^{i}} C^{\nu i} \tensor{C}{_{\nu}^{j}} Q^{\mu j}
.\end{align}
Taylor expanding the square root about the solely background dependent terms $2P_C^2 - S_C^2$ we see that
\begin{align}
    \sqrt{2P^2 - S^2} &\equiv \sqrt{2P_C^2 - S_C^2 + \mathcal{Q}} = \sqrt{2P_C^2 - S_C^2}  + \frac{\mathcal{Q}}{2 \sqrt{S_C^2 + P_C^2} } - \frac{\mathcal{Q}^2}{8\left( S_C^2 + P_C^2 \right)^{\frac{3}{2}}} + \mathcal{O}\left( \mathcal{Q}^3 \right),
    \intertext{where $\mathcal{Q}$ is not to be confused with the quantum field $Q^{i}$ and is defined such that}
    \mathcal{Q} &= \underbrace{ 8\tensor{C}{_{\mu}^{i}} \tensor{C}{^{\mu}^{j}} \tensor{C}{^{\nu}^{i}} \tensor{Q}{_{\nu}^{j}}-4S_C \tensor{C}{_{\mu}^{i}} \tensor{Q}{^{\mu i}} }_{\mathcal{O}\left( Q \right) } - 2 S_C S_Q  - 4\tensor{C}{_{\mu}^{i}}Q^{\mu i} \tensor{C}{_\nu^{j}} Q^{\nu j}  \nonumber\\
    &\phantom{=}~+ 4\tensor{C}{_{\mu}^{i}} C^{\nu i} \tensor{Q}{_{\nu}^{j}} \tensor{Q}{^{\mu j}} + 4\tensor{Q}{_{\mu}^{i}} C^{\nu i} \tensor{Q}{_{\nu}^{j}} C^{\mu j} + 4\tensor{Q}{_{\mu}^{i}} C^{\nu i} \tensor{C}{_{\nu}^{j}} Q^{\mu j},
    \intertext{and up to terms quadratic in the quantum field we have}
    \mathcal{Q}^2 &=  16S_C^2 \tensor{C}{_{\mu}^{i}}Q^{\mu i} \tensor{C}{_{\nu}^{j}} Q^{\nu j} - 32 S_C \tensor{C}{_{\mu}^{i}}Q^{\mu i} \tensor{C}{_{\nu}^{j}} C^{\nu k} C^{\rho j} \tensor{Q}{_{\rho}^{k}}\nonumber \\
                  &\phantom{=}~+ 64 \left( \tensor{C}{_{\nu}^{j}} C^{\nu k} C^{\rho j} \tensor{Q}{_{\rho}^{k}} \right)^2 + \mathcal{O}\left( Q^3 \right) 
.\end{align}

Therefore with the classical field Lagrangian defined by
\begin{align}
    \mathcal{L}_C &\equiv \frac{\cosh \gamma}{2}S_C  + \frac{\sinh \gamma}{2}\sqrt{2P_C^2 - S_C^2},
    \intertext{we can thus write the full Lagrangian as}
    \mathcal{L} &= \mathcal{L}_{C} + \mathcal{L}_Q\nonumber \\
    &= \mathcal{L}_{C} + \frac{\cosh \gamma}{2}S_Q   +  \frac{\sinh \gamma}{2} \left( \frac{\mathcal{Q}}{2 \sqrt{2P_C^2 - S_C^2} } - \frac{\mathcal{Q}^2}{8 \left( 2P_C^2 - S_C^2 \right)^{\frac{3}{2}} } \right) 
.\end{align}

The Lagrangian for the quantum field thus suggests the quantum field has a free kinetic term, $S_Q \cosh \gamma$, as well as an interaction vertex obtained perturbatively.

We notice that we can express the quantum Lagrangian in the form 
\begin{align}
    \mathcal{L}_Q &= Q^{\mu i}\left( \frac{\cosh\gamma}{2}g_{\mu \nu} \delta^{ij} +  \tensor{P}{_{\mu \nu}}^{ij}\right)Q^{\nu j},
    \intertext{or equivalently, using integration by parts to move the derivative in $Q^{\mu i} = \partial^\mu Q^i$, as}
                 \mathcal{L}_Q &=  - Q^{i} \left( \frac{\cosh \gamma}{2} \delta^{ij} \partial^2 +  \sinh \gamma \partial^{\mu} \tensor{P}{_{\mu \nu}^{ij}}\partial^\nu + \sinh \gamma \tensor{P}{_{\mu \nu}^{ij}} \partial^\mu \partial^\nu\right)  Q^{j}, \label{eq:SFQuantum_Lagrangian} 
                  \intertext{where}
    \tensor{P}{_{\mu \nu}^{ij}} &=  - \bigg( \frac{-2S_C g_{\mu \nu} \delta^{ij} - 4 \tensor{C}{_{\mu}^{i}} \tensor{C}{_{\nu}^{j}} + 4\tensor{C}{_{\mu}^{k}} \tensor{C}{_\nu^{k}} \delta^{ij} + 4 \tensor{C}{_{\mu}^{j}} \tensor{C}{_{\nu}^{i}} + 4 \tensor{C}{_{\rho}^{i}}C^{\rho j} g_{\mu \nu}}{4 \sqrt{2P_C^2 - S_C^2} }\nonumber \\
                                &\phantom{=}\hspace{1.6cm} -\frac{16S_C^2 \tensor{C}{_{\mu}^{i}} \tensor{C}{_{\nu}^{j}} - 32 S_C \tensor{C}{_{\mu}^{i}} \tensor{C}{_{\rho}^{k}} \tensor{C}{^{\rho}^{j}}\tensor{C}{_{\nu}^{k}} + 64 \tensor{C}{_{\rho}^{k}} \tensor{C}{^{\rho i}}\tensor{C}{_{\mu}^{k}} \tensor{C}{_{\tau}^{m}} \tensor{C}{^{\tau j}} \tensor{C}{_{\nu}^{m}}} {16 \left( 2 P_C^2 - S_C^2 \right)^{\frac{3}{2}}} \bigg) 
,\end{align}
and we consider $\tensor{P}{_{\mu \nu}^{ij}}(x)$ to be a composite operator representing the cumulative effect of the classical field. We can interpret the first term in \cref{eq:SFQuantum_Lagrangian} as a kinetic term for the field $Q^{i}$ and the second and third terms encode the interactions of the classical (through $\tensor{P}{_{\mu \nu}^{ij}}$) and quantum fields $Q^{i}$ induced by the perturbative expansion.

\section{Feynman Rules}
The propagator for the quantum field is
\begin{align}
        \tensor{D}{^{ij}}  &= \frac{1}{\cosh \gamma} \frac{-i \delta^{ij}}{k^2 } \label{eq:SFpropagator}
.\end{align}

I will draw quantum fields as solid lines and the cumulative effect of the background fields as a single coiled line. 

Reading off the Lagrangian \cref{eq:SFQuantum_Lagrangian}, consider first contracting $Q^{i}$ with the incoming quantum field with momenta $p$. This means $Q^{j}$ will contract with the momenta $q$ field. The $ Q^{i}\tensor{P}{_{\mu \nu}^{ij}} \partial^{\mu} \partial^{\nu} Q^{j}$ term will thus carry $q^{\mu} q^{\nu}$ and the $Q^{i} \partial^{\mu} \tensor{P}{_{\mu \nu}^{ij}} \partial^{\nu} Q^{j}$ term will contribute $r^{\mu} q^{\nu}$ as the first derivative now acts on the classical field that has momenta $r$.

Performing this identically for the other possible contraction, we see that the interaction vertex, which we call $g^{ij} $, takes the form
\begin{align} \label{eq:SFvertex}
     g^{ij} &\equiv \vcenter{\hbox{
             \includegraphics{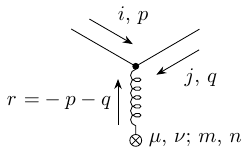}
}} \\
   g^{ij} &=i \tensor{P}{_{\mu \nu}^{mn}} \left( \delta^{im} \delta^{jn}q^{\mu} q^{\nu} + \delta^{in} \delta^{jm} p^{\mu} p^{\nu} \right) \nonumber \\
    &\phantom{=}~ + i \tensor{P}{_{\mu \nu}^{mn}}\left( \delta^{im} \delta^{jn}r^{\mu} q^{\nu}  + \delta^{in} \delta^{jm} r^{\mu} p^{\nu} \right) \nonumber \\
    &= i\sinh \gamma \tensor{P}{_{\mu \nu}^{mn}}\left( \delta^{im} \delta^{jn}\left( q^{\mu} + r^{\mu}  \right) q^{\nu}  + \delta^{in} \delta^{jm} \left( p^{\mu} + r^{\mu} \right)  p^{\nu} \right) \nonumber
    \intertext{where using $r^{\mu} = -p^{\mu} - q^{\mu}$ yields}
   g^{ij} &= -i\sinh \gamma \tensor{P}{_{\mu \nu}^{mn}}\left( \delta^{im} \delta^{jn} p^{\mu}q^{\nu}  + \delta^{in} \delta^{jm} q^{\mu} p^{\nu} \right) \\
    &= -i \sinh \gamma \left( \tensor{P}{_{\mu \nu}^{ij}} p^{\mu} q^{\nu} + \tensor{P}{_{\mu \nu}^{ji}} q^{\mu} p^{\nu} \right) 
    \intertext{and as $\tensor{P}{_{\mu \nu}^{ij}} = \tensor{P}{_{\nu \mu}^{ji}}$, we can equivalently write this as}
   g^{ij} &= -2i \sinh \gamma \tensor{P}{_{\mu \nu}^{ij}}  p^{\mu} q^{\nu}
.\end{align}

\begin{note}
    Formally, one should also integrate over the classical field momenta $r^{\mu}$ in this vertex factor, however for clarity and concision it is omitted. Similarly, the $\tensor{P}{_{\mu \nu}^{mn}}$ factor will often be excluded from diagrammatic evaluation for brevity such that we are using an uncontracted vertex factor $\left( \widetilde{g}_{ij}^{mn} \right)^{\mu \nu} $ defined such that formally
    \begin{align}
        g^{ij} &= \int \dd{^{d}r} \tensor{P}{_{\mu \nu}^{mn}}\left( r \right)  \left( \widetilde{g}_{mn}^{ij} \right)^{\mu \nu},
        \intertext{where}
\left( \widetilde{g}_{ij}^{mn} \right)^{\mu \nu} &= -2i \sinh \gamma \delta^{m}_i \delta^{n}_j p^{\mu} q^{\nu}
    .\end{align}
\end{note}

\section{Background-Varying One-Loop Diagrams}
As the Feynman rules are entirely analogous to the 4D ModMax case, solely with the addition of indices $i,j \in \{1,\cdots,N\} $ that sum over bosons, we can conclude immediately that if the background field is held constant, then the one loop effective action will vanish. As such, we move to the more general case, in which we do not impose, $\partial_\mu \tensor{C}{_{\nu}^{i}}= 0$, thus allowing the background field to arbitrarily vary.

The first diagram in the perturbative expansion is
\begin{align}
    \mathcal{D}_1 &= \vcenter{\hbox{
            \includegraphics{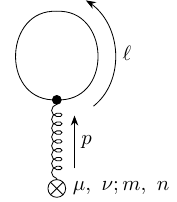}
}} \nonumber \\
   \mathcal{D}_1 &= \sinh \gamma \delta^{im} \delta^{jn} \int \frac{\dd{^{d}\ell}}{\left( 2\pi \right)^{d}} \left( 2i \ell^{\mu} \ell^{\nu} \right) D^{ij},
    \intertext{where the replacement $\ell^{\mu} \ell^{\nu} \to \frac{\ell^2}{d} g^{\mu \nu}$ and the propagator derived above yield}
   \mathcal{D}_1 &= -\frac{2\tanh \gamma}{d} \delta^{im} \delta^{in} \int \frac{\dd{^{d}\ell}}{\left( 2\pi \right)^{d}} g^{\mu \nu}
,\end{align}
which vanishes in dimensional regularization. The first nontrivial diagram is
\begin{align} \label{D2_diagram}
    \mathcal{D}_2 \; = \vcenter{\hbox{
            \includegraphics{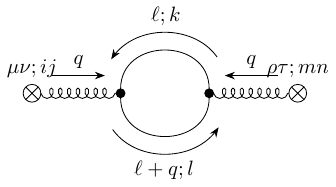}
}} \; ,
\end{align}
which we can express in momentum space as
\begin{align}
    \mathcal{D}_2 =-4\tanh^2 \gamma \int \frac{\dd{^{d}q}}{\left( 2\pi \right)^{d}} \tensor{P}{_{\mu \nu}^{ij}}\left( -q \right) \mathcal{I}^{\mu \nu \rho \tau}_2  \tensor{P}{_{\rho \tau}^{ij}}\left( q \right) \label{I2_to_D2}
,\end{align}
where all the loop dependence is contained within the simpler integral
\begin{align}
    \mathcal{I}^{\mu \nu \rho \tau}_2 = \int \frac{\dd{^{d}\ell}}{\left( 2\pi \right)^{d}} \frac{\left( \ell + q \right)^{\nu}\ell^{\mu} \left( \ell + q \right)^{\tau} \ell^{\rho}}{\ell^2 \left( \ell + q \right)^2} \label{simpler_integral_body}
.\end{align}

To study the structure of the divergence of $\mathcal{D}_2$ it suffices to evaluate $\mathcal{I}^{\mu \nu \rho \tau}_2$. The divergent contribution with $d=2+2\epsilon$ is analogous to the ModMax case with
\begin{align} 
    \mathcal{I}^{\mu \nu \rho \tau}_2 &= \left( \frac{1}{\epsilon} \right) \frac{-i}{24\left( 4\pi \right)} \bigg[\nonumber\\
    &\phantom{=}q^{2}  \left( g^{\mu \nu} g^{\rho \tau}  + g^{\mu \rho} g^{\nu \tau} + g^{\mu \tau} g^{\nu \rho} \right)\nonumber \\
    &\phantom{=}~+2 \left( q^{\nu} q^{\mu} g^{\tau \rho} + g^{\mu \tau} q^{\nu} q^{\rho}+ g^{\nu \mu} q^{\tau} q^{\rho} + g^{\nu \rho} q^{\mu} q^{\tau} \right) \nonumber \\
    &\phantom{=}~+4\left( g^{\mu \rho} q^{\nu} q^{\tau} + g^{\nu \tau} q^{\mu} q^{\rho} \right) \nonumber\bigg] \label{divergent_final_body}
.\end{align}

For the full dimensional regularization calculation of this diagram see \cref{sec:scalar_field_1_loop_2_vertex_calc}.

Therefore, with this integral evaluated, we see from \cref{I2_to_D2} that the power of the momentum dependence of $\mathcal{D}_2$ can be captured (ignoring index structure) with
    \begin{align}
        \mathcal{D}_2 \sim \left( \frac{1}{\epsilon} \right) \int \frac{\dd{^{d}q}}{\left( 2\pi \right)^{d}} \tensor{P}{_{\mu \nu}}^{ij}\left( -q \right) q^2 \tensor{P}{_{\rho \tau}^{ij}}\left( q \right) 
    .\end{align}

    However, there is no such $\sim q^2\left(\tensor{P}{_{\mu \nu}^{ij}} \right)^2$ term in the original Lagrangian, which bodes poorly for renormalization. Namely, as $\frac{1}{\epsilon}$ diverges in the limit $\epsilon \to 0$, to obtain finite predictions from this theory, one would introduce a counter term to the Lagrangian, which removes this divergence. When the corrections are of the form of the original Lagrangian, this is physically well motivated as it corresponds to a redefinition of constants in the Lagrangian. However, as the form of this correction is not present in the original Lagrangian, this interpretation does not apply as the corrections are novel.

    Nonetheless, I have obtained the divergent part of the one-loop effective action up to order $\mathcal{O}\left( \gamma^2 \right) $. Truncating at an arbitrary order in $\gamma$ is well motivated due to the small experimental bound of $\gamma \leq 3 \times 10^{-22}$, however for completeness we proceed with the generalization of the argument developed to an $n$-vertex diagram.

\section{\texorpdfstring{$n$}{n} Vertex Diagram}

%

With the two-vertex diagram evaluated, to complete the one-loop effective action, we seek to evaluate all remaining diagrams containing one loop. Fortunately, there is only one diagram, $\mathcal{D}_n$, for each number of vertices $n$. As such, we proceed with the generalization of the above method. 

For an $n$ vertex diagram, we label the external momenta as $q_{i}$ for $i \in \left\{ 0,\cdots,n-1 \right\}$ with momentum conservation implying
\begin{align}
    q_{n-1} = - \sum_{i=0}^{n-2} q_{i}
.\end{align}

As we did with $\mathcal{D}_2$ in \cref{simpler_integral_body}, we will strip off various factors of $\tensor{P}{_\mu_\nu^i^j}$ to write
\begin{align}
    \mathcal{D}_n &= \frac{\left( - 2 \tanh ( \gamma ) \right)^n}{n} \int \left( \prod_{j=0}^{n-2} \frac{\dd{^{d}q_{j}}}{\left( 2\pi \right)^{d}} \tensor{P}{_{\alpha_{2j} \alpha_{2j+1}}^{i_{j} i_{j+1}}}\left( q_j \right) \right) \tensor{P}{_{\alpha_{2n-2} \alpha_{2n-1}}^{i_{n-1} i_0}}\left( q_{n-1} \right) \left( {\mathcal{I}}_n \right)^{\alpha_0 \ldots \alpha_{2n-1}} \label{Dn_defn}
,\end{align}
where ${\mathcal{I}}_n$ is the simpler integral
\begin{align}
    \left( {\mathcal{I}}_n \right)^{\alpha_0 \ldots \alpha_{2n-1} } &= \int \frac{d^d \ell}{( 2\pi )^d} \, \left( \prod_{i = 0}^{n-1} \left( \ell + \sum_{j=0}^{i} q_{j}  \right)^{-2} \right) \left( \prod_{k=0}^{n-1} \left( \ell + \sum_{j=0}^{k-1} q_{j} \right)^{\alpha_{2k}} \left( \ell + \sum_{j=0}^{k-1} q_{j}  \right)^{\alpha_{2k+1}} \right) \label{In_defn}
.\end{align}
We will further break up $\mathcal{I}_n$ into pieces and evaluate each piece in turn. Let us write the integrand of \cref{In_defn} as a product of propagators and vertex factors such that
\begin{align}
    \left( {\mathcal{I}}_n \right)^{\alpha_0 \alpha_1 \ldots \alpha_{2n-1} } = \int \frac{d^d \ell}{( 2\pi )^d} \mathcal{P}_n \mathcal{V}^{\alpha_0 \alpha_1 \ldots \alpha_{2n-1} },  \label{In_PV}
\end{align}
where the propagators are captured by the symbol
\begin{align}
    \mathcal{P}_n &= \prod_{i = 0}^{n-1} \left( \ell + \sum_{j=0}^{i} q_{j}  \right)^{-2} ,
\end{align}
and the vertex factors are captured by $\mathcal{V}^{\alpha_0 \alpha_1 \ldots \alpha_{2n-1} }$.

We begin by simplifying the propagators. Observe that in general \cite{Renormalization, Schroeder}, we can write the product of $n$ propagators using a Feynman parameterization as
\begin{align}
    \prod_{i=0}^{n-1} A_{i}^{-1} &= \int_0^{1} \left( \prod_{i=0}^{n-1} \dd{x}_{i} \right) \delta \left( \sum_{i=0}^{n-1} x_{i} - 1  \right) \frac{\left( n - 1 \right)!}{\left[ \sum_{i}^{}x_{i} A_{i}  \right]^{n} }
.\end{align}

The product of propagators inside the loop can thus be expressed as
\begin{align}
    \mathcal{P}_n = \left( n - 1 \right)! \int_0^{1} \left( \prod_{i=0}^{n-1} \dd{x}_{i} \right) \delta \left( \sum_{i=0}^{n-1} x_{i} - 1  \right) \left[ \sum_{i}^{}x_{i} \left( \ell + \sum_{j=0}^{i} q_{j} \right)^2   \right]^{-n},
\end{align}
    where as $\sum_{i}^{} x_{i} = 1$, we can expand and reduce the square bracketed term to

\begin{align}
    &\left[ \sum_{i}^{}x_{i} \left( \ell + \sum_{j=0}^{i} q_{j} \right)^2   \right]^{-n} \nonumber \\
    &=  \left[ \ell^2  + \sum_{i}^{}x_{i} \left( 2\ell_{\mu} \sum_{j=0}^{i} q_{j}^{\mu}+ \left( \sum_{j=0}^{i} q_{j} \right)^2  \right)   \right]^{-n}  \nonumber\\
    &=  \left[ \left( \ell + \sum_{i}^{} \sum_{j=0}^{i} x_{i} q_{j}  \right)^2 - \left( \sum_{i}^{} \sum_{j=0}^{i} x_{i} q_{j} \right)^2  + \sum_{i}^{}x_{i}   \left( \sum_{j=0}^{i} q_{j} \right)^2  \right]^{-n},
\end{align}
    where we have completed the square. Further, under the translation $\ell^{\mu} \to \ell^{\mu} - \sum_{i}^{} \sum_{j=0}^{i} x_{i} q_{j}^{\mu}$ becomes
\begin{align}
  \left[ \sum_{i}^{}x_{i} \left( \ell + \sum_{j=0}^{i} q_{j} \right)^2   \right]^{-n}  &=  \left[ \ell^2 - \left( \sum_{i}^{} \sum_{j=0}^{i} x_{i} q_{j} \right)^2  + \sum_{i}^{}x_{i}   \left( \sum_{j=0}^{i} q_{j} \right)^2  \right]^{-n} \\
    &=  \left[ \ell^2 - \Delta^2  \right]^{-n},
    \intertext{where we have defined}
    \Delta^2 &\equiv \left( \sum_{i}^{} \sum_{j=0}^{i} x_{i} q_{j} \right)^2  - \sum_{i}^{}x_{i}   \left( \sum_{j=0}^{i} q_{j} \right)^2
.\end{align}

We now turn our attention to the vertex factors as captured by $\mathcal{V}^{\alpha_0 \alpha_1 \cdots \alpha_{2n-1}}$. Each vertex factor yields two factors of momenta in the numerator. As each external vertex around the loop adds $q_{j}$, and the vertex factor is the product of the momenta entering and leaving the vertex, in total we will have
\begin{align}
   \mathcal{V}^{\alpha_0 \alpha_1 \cdots \alpha_{2n-1}} &= \prod_{k=0}^{n-1} \left( \ell + \sum_{j=0}^{k} q_{j} \right)^{\alpha_{2k}} \left( \ell + \sum_{j=0}^{k} q_{j}  \right)^{\alpha_{2k+1}},
    \intertext{where the $k=0$ term gives us the $\ell$ terms. Under the translation identified for the denominator to be quadratic in $\ell$, this numerator is translated to}
   \mathcal{V}^{\alpha_0 \alpha_1 \cdots \alpha_{2n-1}} &= \prod_{k=0}^{n-1} \left( \ell - \sum_{i=0}^{n-1} \sum_{j=0}^{i} x_{i} q_{j}  + \sum_{j=0}^{k} q_{j} \right)^{\alpha_{2k}} \left( \ell - \sum_{i=0}^{n-1} \sum_{j=0}^{i} x_{i} q_{j}+ \sum_{j=0}^{k} q_{j}  \right)^{\alpha_{2k+1}}.
    \intertext{We expand this product in descending powers of $\ell$ as only powers $\ell^{2n}$ and $\ell^{2n-2}$ will lead to divergent terms. We have that}
   \mathcal{V}^{\alpha_0 \alpha_1 \cdots \alpha_{2n-1}} &= \prod_{k=0}^{n-1} \ell^{\alpha_{2k}} \ell^{\alpha_{2k+1}} + \sum_{a=0}^{2n-1} \sum_{b>a}^{2n-1} \left( \prod_{c \neq a,b}^{2n-1} \ell^{\alpha_c} \right)  f\left( x,q,a \right)^{\alpha_{a}} f\left( x,q,b  \right)^{\alpha_{b}} + \mathcal{O}\left( \ell^{2n-4} \right)
,\end{align}
where we have defined for brevity
\begin{align}
    f\left( x,q,a \right)^{\alpha_a} &\equiv \left( \sum_{i=0}^{n-1} \sum_{j=0}^{i} x_{i} q_{j}  + \sum_{j=0}^{\left\lfloor  \frac{a+1}{2}\right\rfloor} q_{j} \right)^{\alpha_{a}}
.\end{align}
Next we will replace products of uncontracted loop momenta using the generalized symmetrization rule of equation (\ref{eq:symmetrization}). To ease notation, let us write
\begin{align}
    g^{\alpha_0 \ldots \alpha_{2n-1}} = g^{(\alpha_0 \alpha_1 } \cdots g^{\alpha_{2n-2} \alpha_{2n-1})} \label{shorthand_metric}
,\end{align}
for the symmetrized combination of derivatives appearing in this expression. When no confusion is possible, we will also write $g^{\{ \alpha \}}$ for (\ref{shorthand_metric}), where $\{ \alpha \}$ is understood to refer to a multi-index $\{ \alpha \} = \alpha_0 \ldots \alpha_{2n-1}$. With this notation, the replacement rule becomes
\begin{align}
    \prod_{i=1}^{n} \ell^{\alpha_{2i-1}} \ell^{\alpha_{2i}} \to \frac{\ell^{2n} \left( d -2 \right)!! (2n-1)!!}{\left( d-2 + 2n \right)!!} g^{\alpha_{0}\ldots \alpha_{2n-1}}
.\end{align}
This transforms the vertex factors into
\begin{align}
    \mathcal{V}^{\alpha_0 \alpha_1 \ldots \alpha_{2n-1} } &=  \frac{\ell^{2n} \left( d - 2 \right)!!(2n-1)!!}{ \left( d  - 2 + 2n \right)!!} g^{\alpha_0 \cdots \alpha_{2n-1}} \nonumber\\
                                                          &\phantom{=}~+ \frac{\ell^{2n-2} \left( d - 2 \right)!! \left( 2n -3 \right)!!}{\left( d - 4 + 2n \right)!!}\sum_{a=0}^{2n-1} \sum_{b>a}^{2n-1} g^{\{ \alpha \neq \alpha_a, \alpha_b \}}   f^{\alpha_a} \left( x,q,a \right)  f^{\alpha_{b}} \left( x,q,b \right) \nonumber\\
                                                          &\phantom{=}~+ \mathcal{O}\left( \ell^{2n-4} \right) \label{vertex_intermediate}
,\end{align}
where $g^{\{ \alpha \neq \alpha_a, \alpha_b \}}$ refers to symmetrization over all possible index values except $\alpha_a$ and $\alpha_b$.

Therefore, recombining the pieces, we can express the divergent terms in $\mathcal{D}_n$ as
\begin{align}\hspace{-10pt}
    &\left( {\mathcal{I}}_{n} \right)^{\alpha_0 \alpha_1 \ldots \alpha_{2n-1}} \nonumber \\
    &\quad = \int \frac{d^d \ell}{( 2\pi )^d} \mathcal{P} \mathcal{V}^{\alpha_0 \alpha_1 \ldots \alpha_{2n-1}} \nonumber \\
    &\quad = \int \frac{d^d \ell}{( 2\pi )^d} \left( n - 1 \right)!\int_0^{1} \left( \prod_{i=0}^{n-1} \dd{x}_{i} \right) \delta \left( \sum_{i=0}^{n-1} x_{i} - 1  \right) \left( \ell^2 - \Delta^2  \right)^{-n}  \cdot \Bigg( \frac{ \ell^{2n} \left( d - 2 \right)!! \left( 2n -1 \right)!!}{\left( d  - 2 + 2n \right)!!} g^{\alpha_0 \ldots \alpha_{2n-1}} \nonumber \\
    &\quad \qquad + \frac{\ell^{2n-2} \left( d - 2 \right)!! \left( 2n-3 \right)!!}{\left( d - 4+ 2n  \right)!!}\sum_{a=0}^{2n-1} \sum_{b>a}^{2n-1} g^{\{ \alpha \neq \alpha_a, \alpha_b \}}   f^{\alpha_a} \left( x,q,a \right)  f^{\alpha_{b}} \left( x,q,b \right)  + \mathcal{O}\left( \ell^{2n-4} \right) \Bigg) \label{Dn_Itilde_intermediate}
,\end{align}
and evaluate them with the known integral
\begin{align}
    \int \frac{\dd{^{d}\ell}}{\left( 2\pi \right)^{d}} \frac{\ell^{2\beta}}{\left( \ell^2 - \Delta^2 \right)^{\alpha}} &= i \left( -1 \right)^{\alpha + \beta} \frac{\Gamma \left( \beta + \frac{d}{2} \right) \Gamma \left( \alpha - \beta - \frac{d}{2} \right) }{\left( 4\pi \right)^{\frac{d}{2}} \Gamma \left( \alpha \right) \Gamma \left( \frac{d}{2} \right)  } \Delta^{2 \left( \frac{d}{2} - \alpha + \beta \right) }
.\end{align}
We see that matching the powers of $\ell$ between \cref{Dn_Itilde_intermediate} and the known integral, the two numerator terms in \cref{Dn_Itilde_intermediate} have $\beta = n$ and $\beta = n - 1$ respectively. The denominator is shared and possesses $\alpha = n$. From this form it is clear to see that $\ell^{2n-4}$ terms and lower powers of $\ell$ are finite as $d \to 2$. 
\begin{proof}
    Such terms have $\beta = n -2$ which would contain $\Gamma \left( \alpha - \beta - \frac{d}{2} \right) $ terms of the form
    \begin{align}
    \Gamma \left( \left(n  \right) - \left( n - 2 \right) - \frac{d}{2} \right) &= \Gamma \left( 2 - \frac{d}{2} \right)  \\
    &= \left( 1 - \frac{d}{2} \right) \Gamma \left( 1 - \frac{d}{2} \right)  \\
    &= \left( 1 - \frac{d}{2} \right) \left( -\frac{d}{2} \right)  \Gamma \left( -\frac{d}{2} \right),
    \intertext{which using $d = 2 + 2\epsilon$ and the pole expansion of the $\Gamma$ function we see that}
    &= -\left( -\epsilon \right) \left( 1 + \epsilon \right)  \left( -\frac{1}{\epsilon} - \gamma + 1 \right) \\
    &= \left( 1 + \epsilon \right)  \left( -1 - \epsilon\gamma + \epsilon \right)\\
    &= -1 + \mathcal{O}\left( \epsilon \right) 
,\end{align}
which are finite and thus discarded. Lower powers of $\ell$ lead to larger $\alpha - \beta - \frac{d}{2}$ values which still contain the $1 - \frac{d}{2} \to -\epsilon$ factor which cancels the poles.\\
\end{proof}

As such focusing on the divergent terms, we have that upon applying the known integral, the $\ell^{2n}$ term with $\beta = n$ evaluates to
\begin{align}
    \vb{C}_{2n}^{\alpha_0 \ldots \alpha_{2n -1}} &\equiv \frac{i \left( d - 2 \right)!! \left( 2n -1 \right)!!}{ \left( d - 2 + 2n \right)!!} g^{\alpha_0 \cdots \alpha_{2n-1}}  \frac{\Gamma \left( n + \frac{d}{2} \right)  }{\left( 4\pi \right)^{\frac{d}{2}} \Gamma \left( n \right) \Gamma \left( \frac{d}{2} \right)} \Gamma \left( -\frac{d}{2} \right)\Delta^{d},
     \end{align}
and the $\ell^{2n-2}$ term has $\beta = n - 1$ and evaluates to
\begin{align}
    \vb{D}_{2n}^{\alpha_0 \ldots \alpha_{2n-1}} &\equiv \frac{i \left( d - 2 \right)!! \left( 2n-3 \right)!!}{\left( d - 4 + 2n \right)!!}\sum_{a=0}^{2n-1} \sum_{b>a}^{2n-1} g^{\{ \alpha \neq \alpha_a, \alpha_b \}}   f^{\alpha_a} \left( x,q,a \right)  f^{\alpha_{b}} \left( x,q,b \right)\nonumber \\
    &\qquad \cdot \frac{\Gamma \left( n-1+\frac{d}{2} \right) }{\left( 4\pi \right)^{\frac{d}{2}} \Gamma \left( n \right) \Gamma \left( \frac{d}{2} \right)  } \Gamma \left( 1 - \frac{d}{2} \right) \Delta^{d - 2}
,\end{align}
where we can thus quote the whole $n$ vertex diagram result as
\begin{align}
     \left( \mathcal{I}_n \right)^{\alpha_0 \ldots \alpha_{2n-1}} = \frac{\left( n - 1 \right)!}{n}\int_0^{1} \left( \prod_{i = 0}^{n-1} \dd{x}_{i} \right) \delta \left( \sum_{i=0}^{n - 1} x_{i} - 1 \right) \left[ \vb{C}_{2n}^{\alpha_0 \ldots \alpha_{2n-1}} + \vb{D}_{2n}^{\alpha_0 \ldots \alpha_{2n-1}} \right]
.\end{align}

In both $\vb{C}_{2n}$ and $\vb{D}_{2n}$ we have a polynomial in $x_{i}$ of order $d$ and a total power of $q^{d}$. In $\vb{C}_{2n}$, this $q^{d}$ momentum dependence is contained within $\Delta^{d}$, and for $\vb{D}_{2n}$, we have two uncontracted factors $q^{\alpha_a}$ and $q^{\alpha_b}$ on top of the $q^{d-2}$. Therefore, we conclude that in the limit of $d \to 2$, all such $n$ vertex diagrams have the same general structure as the $2$ vertex diagram with 
\begin{align}
    \vb{C}_{2n}^{\alpha_0 \ldots \alpha_{2n-1}} &\propto \frac{1}{\epsilon} q^2 g^{\alpha_0 \ldots \alpha_{2n-1}}  \\
    \vb{D}_{2n}^{\alpha_0 \ldots \alpha_{2n-1}} &\propto \frac{1}{\epsilon}\sum_{a=0}^{2n-1} \sum_{b>a}^{2n-1} q^{\alpha_a} q^{\alpha_b} g^{\{\alpha \neq \alpha_a, \alpha_b\} }
,\end{align}
a familiar $\left( q^2 g^{\mu \nu} - q^{\mu} q^{\nu} \right)$-like dependence, where the $\frac{1}{\epsilon}$ arises from the $\Gamma \left( -\frac{d}{2} \right) $ and $\Gamma \left( 1 - \frac{d}{2} \right) $ factors respectively. This momentum structure is shrouded in the index symmetrization and $x_{i}$ dependence in the full expression. As expected, while this momentum dependence is familiar, the classical field dependence that arises from the external vertices is not present in the original Lagrangian. As each $n$ vertex diagram is divergent and has a different external field dependence (i.e. an extra external field $\tensor{P}{_{\mu\nu}^{ij}}\left( x \right) $), one would need to add an infinite number of terms to the Lagrangian to subtract each of these divergences. This property of requiring an infinite number of quantities to be fixed to obtain finite results is referred to as \textit{non-renormalizability}. Such a theory cannot be predictive at all energy scales and is thus not fundamental.

Nonetheless, ModMax and its two dimensional analogue can still be considered on the quantum domain as \textit{effective field theories} where one necessarily identifies a maximum energy scale of applicability. We call such a theory effective, as it is a valid low energy description, but not the true fundamental theory of the system. For example, fermions have mass in the standard model, but this is in fact an effective low energy description where the Higgs interaction which generates such mass terms has been integrated out \cite{Schroeder}.

As such, we notice that all such $n$ vertex diagrams are proportional to $\tanh^{n} \gamma \sim  \gamma^{n}$. Motivated by the small experimental bound on $\gamma$, we seek to characterize all diagrams contributing up to order $\mathcal{O}\left( \gamma^2 \right)$, which is exactly all two vertex diagrams.

\section{Two Vertex $n$-loop Diagrams}

Until this point, we have only expanded perturbatively in increasing number of loops. However, it is equally valid to expand and truncate in powers of $\gamma$ which is equivalent to the number of vertices in the diagram.

If we want to truncate at order $\gamma^2$ (with all orders in loops), which corresponds to diagrams with two vertices, unfortunately there is still an infinite family of diagrams that satisfy this constraint. Namely, in the expansion of the square root, we have
\begin{align}
    \sqrt{2P_C^2 - S_C^2 + \mathcal{Q}} &= \sum_{n=0}^{\infty} \binom{\frac{1}{2}}{n} \frac{\mathcal{Q}^{n}}{\left( 2P_C^2 - S_C^2 \right)^{n - \frac{1}{2}}}
,\end{align}
where $\mathcal{Q}$ can contain up to quartic terms in the quantum field $Q_{i}$. As such we absorb the complexities of this expansion into an object, $\tensor{P}{^{j_{i} \cdots j_{n}}_{\mu_1 \cdots \mu_{n}}}$ which captures all combinations that lead to $n$ $Q_{i}$'s in a given term with
\begin{align}
    \sqrt{2P_C^2 - S_C^2 + \mathcal{Q}} &= \sum_{n=0}^{\infty} \tensor{P}{^{j_{i} \cdots j_{n}}_{\mu_1 \cdots \mu_{n}}} \prod_{i=1}^{n} \partial^{\mu_{i}} Q_{j_i}
,\end{align}
from this expression, we read off the Feynman rules, treating $\tensor{P}{^{j_{1} \cdots j_{n}}_{\mu_1 \cdots \mu_{n}}}$ as a composite object representing the entire effect of the classical field (as it contains only classical fields and derivatives). As such we find that this $(n+1)$-valent vertex has vertex factor

\begin{align} 
\vcenter{\hbox{
        \includegraphics{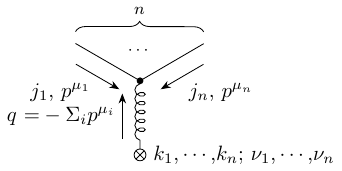}
}}
    &= \int \dd{^{d}q}\tensor{P}{^{k_{i} \cdots k_{n}}_{\nu_1 \cdots \nu_{n}}}\left( q \right) \prod_{i=1}^{n}  p^{\nu_i}
.\end{align}

Therefore, we see that the $(n-1)$ loop and two vertex diagram that we label $\mathcal{D}_{n-1,2}$, (with $n$ propagators) is given by
\begin{align}
    \mathcal{D}_{n-1,2} &= \vcenter{\hbox{
            \includegraphics{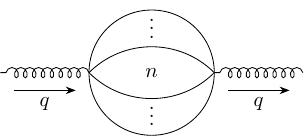}
        }}\label{diagram:2_vertex_n_propagators} \\
        \mathcal{D}_{n-1,2} &= \frac{\sinh^2 \gamma}{\cosh^{n} \gamma}\int \dd{^{d}q}\left( \prod_{i=1}^{n}  \dd{^{d}\ell_{i}} \right) \tensor{P}{^{k_{1} \cdots k_{n}}_{\nu_1 \cdots \nu_{n}}}\left( q \right) \left( \prod_{i=1}^{n}  \frac{p_{i}^{\nu_i} p_{i}^{\tau_i}}{p_{i}^2} \right) \tensor{P}{^{k_{1} \cdots k_n}_{\tau_{i} \cdots \tau_n}}\left( -q \right)  
,\end{align}
where $p_1 = q - \ell_1$, $p_{i} = \ell_{i-1} - \ell_{i}$ for $1 < i < n$ and $p_n = -\ell_{n-1}$ such that
\begin{align}
    \sum_{i= 1}^{n} p_{i} &= q
.\end{align}

\begin{note}
    Diagrams where a loop begins and ends on the same vertex do not contribute, as they vanish in dimensional regularization. We can see this as the momentum circulating around such a loop, $\ell$, will appear multiplicatively in the vertex factor in the form $\ell^{\mu} \ell^{\nu}$, and in the propagator in the form $\frac{1}{\ell^2}$. Factoring this dependence, we see the familiar
    \begin{align}
        \int \frac{\dd{^{d}\ell}}{\left( 2\pi \right)^{d}} \frac{\ell^{\mu} \ell^{\nu}}{\ell^2} &= \frac{g^{\mu \nu}}{d}\int \frac{\dd{^{d}\ell}}{\left( 2\pi \right)^{d}}1
    ,\end{align}
    which we have seen vanishes with dimensional regularization in the limit $d \to 2$, as desired.
\end{note}

Therefore it remains to consider the structure of the divergence of diagrams with $m$ loops between two vertices. To achieve this, we proceed with the evaluation of the diagram $\mathcal{D}_{m, 2}$ (i.e. one with $m+1$ propagators). It is convenient to isolate the part of the integrand which depends on the loop momenta and evaluate it separately. To do this, let us define
\begin{align}
    \left( \mathfrak{L}_{m, 2} \right)_{\{ i j \}}^{\{ \mu \nu \}} &= \int \, \left(  \prod_{i=1}^{m}  \frac{\dd{^d \ell_{i} }}{ ( 2 \pi  )^d } \right) \left( \frac{1}{\prod_{j=1}^{m + 1} p^\mu_j p_{\mu j} } \right) \left( \prod_{k=1}^{m + 1}  p_{i_k}^{\nu_k} p_{j_k}^{\mu_k} \right) \label{frak_L_defn}
.\end{align}
Here we use $\{ i j \}$ as a shorthand for the multi-index $\{ i_1 \ldots i_{m+1} j_1 \ldots j_{m+1} \}$ and $\{ \mu \nu \}$ for $\{ \mu_1 \ldots \mu_{m+1} \nu_1 \ldots \nu_{m+1} \}$. We will sometimes suppress these multi-indices in writing $\mathfrak{L}_{m, 2}$ for convenience. The quantity $\mathfrak{L}_{m, 2}$ determines the value of the diagram $\mathcal{D}_{m, 2}$ as
\begin{align}
    \mathcal{D}_{m, 2} &= \frac{\sinh^2 ( \gamma )}{\cosh^{m + 1} ( \gamma ) }\int \frac{\dd{^{d}q}}{( 2 \pi )^d} \tensor{P}{^{ i_1 }^{ \cdots }^{ i_{m + 1} }_{ \nu_1 }_{ \cdots }_{\nu_{m + 1}}}(q) \left( \mathfrak{L}_{m, 2} \right)_{\{ i j \}}^{\{ \mu \nu \}} \tensor{P}{^{j_1}^{\cdots}^{j_{m + 1}}_{\mu_1}_{\cdots}_{\mu_{m + 1}}} \left( -q \right)
,\end{align}
so to understand the divergences in $\mathcal{D}_{m, 2}$, it suffices to understand those in $\mathfrak{L}_{m, 2}$.

One can evaluate the divergence structure of these diagrams by evaluating each loop integral in succession, beginning with $\ell_{m}$ and proceeding backwards. See \cref{sec:scalar_field_n_loop_2_vertex_calc} for a full demonstration of this process which yields a result proportional to $\Gamma \left( -\frac{d}{2} \right) \ell^{d}_{m-1}$.

Therefore applying this argument recursively, each successive loop integral gains an additional $\ell^{d}_{i}$ factor, resulting after all $m$ integrals, in an external momenta $q$ dependence of
\begin{align}
    \left( \mathfrak{L}_{m, 2} \right)_{\{ i j \}}^{\{ \mu \nu\} } \propto \Gamma\left( -\frac{d}{2} \right)^{m} q^{dm} \label{final_n_loop_2_vertex_dimreg}
.\end{align}

However, as each integral yields $6$ different symmetrizations of the external indices, the exact form of an $m$ loop diagram contains $6^{m}$ different symmetrizations and is thus challenging to write explicitly in a general form. Regardless, we can comment on the structure of the divergences present, as they are the central object of interest.

While proceeding in dimensional regularization facilitated the characterization of the divergence, it is useful to quote the dependence of such divergences on an external characteristic momenta scale, $\Lambda$. Namely, one can show that the presence of a $\frac{1}{\epsilon}$ divergence in dimensional regularization is equivalent to a logarithmic divergence of the form
\begin{align}
    \frac{1}{\epsilon} \sim \log \left( \Lambda  \right)
.\end{align}
Therefore, as each $m$ loop diagram leads to a divergence of the form $\Gamma \left( -\frac{d}{2} \right)^{m}$, we observe logarithmic divergences of the form
\begin{align}
    \left( \frac{1}{\epsilon} \right)^{m} q^{2m} \sim \left( \log \left( \Lambda \right) \right) ^{m} 
.\end{align}
This is a different structure of divergence for each vertex as observed for the 1 loop $m$ vertex diagrams. However, if a pattern is hidden within these series of divergences which allows one to recover a finite number of terms by summing the series, then the theory would be more amenable to renormalization. While the background field method has proved effective in obtaining these divergences, any possible pattern is obscured by the complex index symmetrization arising from the Feynman rules in this scheme. Thus, having successfully characterized the effective action and its divergences, we look towards an alternative quantization approach, and evaluate the possibly of hidden structures.

\chapter{Auxiliary Fields}

In our approach so far to quantizing ModMax, we have relied heavily on the background field method formalism. The background field method is powerful and effective in that we are able to consider various additional constraints on the theory such as constant classical field strength with ease. It similarly is compatible with the Taylor expansion of the square root present in ModMax and facilitates the truncation at second order in the quantum field. However, as I have demonstrated in my analysis, expanding order by order quickly becomes unfeasible.

In tackling the nonlinearity present in ModMax, the only other approach is the introduction of \textit{auxiliary fields}. Such fields are not physical, but rather are defined in terms of the physical fields to capture some aspect of the nonlinearity. When one considers auxiliary fields, it is simple to show that such an alternative representation of the theory is equivalent to the original at the classical level. However, equivalence at the quantum level is highly nontrivial as any number of classical symmetries may be broken in either the auxiliary or original theory.

We introduce the method of auxiliary fields in the context of describing a relativistic point particle. We then outline an auxiliary field representation of ModMax, and detail the classical behaviour of this theory. We find that this alternative approach alleviates the need for a Taylor expansion of the square root, but necessitates explicitly breaking Lorentz symmetry to proceed in the quantization. This approach is not the focus of this thesis, but is included to demonstrate the comparative effectiveness of the background field method.

\section{Relativistic Point Particle Action}

Contrary to the field theory used throughout this thesis, we now consider the action of a relativistic point particle described by a position vector $x^{\mu} = \left( t, \vec{x} \right) $. Such an action should extremize the \textit{proper time} between events. It can be shown that the choice
\begin{align}
    S = -m\int_{\text{Worldline}} \dd{l}
,\end{align}
achieves this, where $m$ is the mass of the particle and $\dd{l}$ is the proper time between two infinitesimally separated events $x^{\mu}$ and $x^{\mu} + d x^{\mu}$. 

\begin{definition}
    A particle's \textbf{worldline} $X^{\mu}$ is a timelike path in spacetime which the particle follows. 
\end{definition}

\begin{figure}[h]
    \centering
    \includegraphics[width=\figwidth/2]{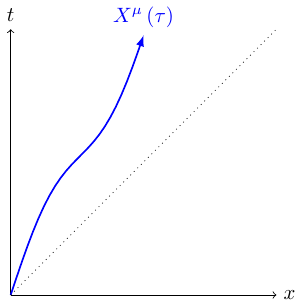}
    \caption{A worldline $X^{\mu}\left( \tau \right) $ parametrized by a variable $\tau$. Notice that it is a timelike curve as it is above the null diagonal line.}
    \label{fig:worldline}
\end{figure}

In Minkowski space, we have that
\begin{align}
    \dd{l}^2 &= -\eta_{\mu \nu} \dd{x^{\mu}} \dd{x^{\nu}} \nonumber\\
    &= \dd{t}^2 - \dd{\vec{x}}^2,
    \intertext{which with the parametrization $\vec{x}\left( t \right) $, we can write as}
   \dd{l}^2 &= \dd{t}^2 - \left( \dv{\vec{x}}{t} \right)^2 \dd{t}^2 \nonumber \\
    &= \dd{t}^2  \left( 1- \left( \dv{\vec{x}}{t} \right)^2 \right) \nonumber \\
   \dd{l} &= \dd{t} \sqrt{ 1- \left( \dv{\vec{x}}{t} \right)^2}, 
   \intertext{where we define $\displaystyle v^2 = \left( \dv{x}{t} \right)^2$ yielding}
\dd{l} &= \dd{t} \sqrt{ 1- v^2}
.\end{align}

This allows us to write the action as
\begin{align}
   S  &= -m \int_{\text{Worldline}} \dd{t} \sqrt{ 1- v^2}
.\end{align}

However, this action is not clearly Lorentz invariant as we desire for a relativistic point particle. Rather than parametrizing our particle's worldline by a time $t$, we consider the worldline to be parameterized by a variable $\tau$. Thus we can write the worldline of the particle as $X^{\mu}\left( \tau \right) $. Therefore on the worldline we can write
\begin{align}
    \dd{X}^{\mu} &= \dv{X^{\mu}}{\tau} \dd{\tau},
    \intertext{which leads to }
    \dd{l^2} &= -\eta_{\mu\nu} \dd{X^{\mu}} \dd{X^{\nu}} \nonumber \\
    &= -\eta_{\mu \nu} \dv{X^{\mu}}{\tau} \dv{X^{\nu}}{\tau} \dd{\tau}^2 \nonumber \\
   \dd{l} &=\dd{\tau} \sqrt{-\eta_{\mu \nu} \dv{X^{\mu}}{\tau} \dv{X^{\nu}}{\tau}}
,\end{align}
which we can insert into the action to obtain
\begin{align}
    S &= -m \int \dd{\tau} \sqrt{-\eta_{\mu \nu} \dv{X^{\mu}}{\tau} \dv{X^{\nu}}{\tau}} \label{eq:covariant_action}
.\end{align}
This is manifestly Lorentz invariant. This form of the action is nonlinear however, and defies traditional quantization techniques in the same fashion as ModMax. Therefore, we consider a representation in terms of the auxiliary field $e\left( \tau \right)$ referred to as the \textit{Einbein action},
\begin{align}
    S = \frac{1}{2} \int \dd{\tau}  \left( e\left( \tau \right)^{-1}\eta_{\mu \nu} \dv{X^{\mu}}{\tau} \dv{X^{\nu}}{\tau} - e\left( \tau \right) m^2 \right) \label{eq:einbein}
.\end{align}

For nonzero mass $m$, the equation of motion for $e\left( \tau \right) $ is
\begin{align}
    \pdv{\mathcal{L}}{e} &= 0 \\
    \implies e\left( \tau \right) &= \frac{1}{m} \sqrt{-\eta_{\mu\nu}\pdv{X^{\mu}}{\tau} \pdv{X^{\nu}}{\tau}} 
,\end{align}
which when reinserted into the Einbein action \cref{eq:einbein} recovers the manifestly Lorentz invariant action in \cref{eq:covariant_action}. This implies the theories are classically equivalent, but makes no predictions about their equivalence after quantization.

\begin{note}
    All such actions are invariant under Lorentz transformations, translations and spatial rotations, as well as reparametrization of $\tau \to \tilde{\tau}$. 
    While under a reparametrization of $\tau \to \tilde{\tau}$, $X^{\mu}\left( \tau \right) $ transforms to $\tilde{X}^{\mu}\left( \tilde{\tau} \right) $, the auxiliary field $e\left( \tau \right) $ transforms as a density such that
    \begin{align}
        \tilde{e}\left( \tilde{\tau} \right) = \left( \dv{\tilde{\tau}}{\tau} \right)^{-1} e\left( \tau \right) 
    .\end{align}
    Therefore, if we view this invariance as a gauge symmetry \cite{string_theory}, then picking a fixed value of $e\left( \tau \right) $ corresponds to fixing a gauge. Choosing $e\left( \tau \right) = 1$, the action becomes
    \begin{align}
        S = \frac{1}{2} \int \dd{\tau} \left( \eta_{\mu \nu} \dv{X^{\mu}}{\tau} \dv{X^{\nu}}{\tau} - m^2 \right) 
    ,\end{align}
    where the equation of motion for $e\left( \tau \right) $ can no longer be imposed, but instead becomes a constraint equation
    \begin{align}
        1 = e\left( \tau \right) &= -\frac{1}{m} \eta_{\mu \nu} \dv{X^{\mu}}{\tau} \dv{X^{\nu}}{\tau} \\
        \implies -m^2 &= \eta_{\mu \nu} \dv{X^{\mu}}{\tau} \dv{X^{\nu}}{\tau}
    ,\end{align}
    which with the identification of momenta $\displaystyle p^{\mu} = \pdv{X^{\mu}}{\tau}$, this constraint can be written as
    \begin{align}
        p^2 &= -m^2
    ,\end{align}
    the familiar mass-shell condition. This form of the action is much more approachable when considering the quantization of the theory, with the only difficulty arising in imposing the constraint.
\end{note}

\section{Auxiliary ModMax}

We seek to apply this approach of introducing auxiliary fields, and then removing them by imposing their equation of motion as a constraint instead.

The ModMax Lagrangian has an auxiliary field representation \cite{Lechner_2022} of
\begin{align}
    \mathcal{L} &= \cosh \gamma S + \sinh \gamma \left( S \phi_1 + P \phi_2 \right) - \frac{1}{2}\rho^2 \left( \phi_1^2 + \phi_2^2 - 1 \right) 
,\end{align}
which has equations of motion for the scalar fields,
\begin{align}
    \sinh \gamma S = \rho^2 \phi_1, && \sinh \gamma P = \rho^2 \phi_2, && \rho \left( \phi_1^2 + \phi_2^2 - 1 \right) = 0, \\
    S = \frac{\rho^2 \phi_1}{\sinh \gamma}, && P = \frac{\rho^2 \phi_2}{\sinh \gamma}. && 
\end{align}
Notice that if $\rho = 0$, then $\phi_1$ and $\phi_2$ are unconstrained and the equations of motion yield $S = P = 0$. The Lagrangian then reduces to the well studied Bialynicki-Birula theory \cite{bb1988, Lechner_2022}.

For finite $\rho$, substituting the equations of motion for $\phi_1$ and $\phi_2$ back into the Lagrangian, we can obtain
\begin{align}
    \mathcal{L} = \cosh \gamma S + \frac{\sinh^2\gamma}{2}\rho^{-2} \left( S^2 + P^2 \right) +\frac{1}{2}\rho^2 
,\end{align}
which has equation of motion for $\rho$
\begin{align}
    \rho^{4} &= \sinh^2 \gamma \left( S^2 + P^2 \right) \label{eq:rho_eom}
.\end{align}

\begin{note}
    Contrary to the relativistic point particle, this action does not have reparametrization invariance. Therefore, to remove the auxiliary fields we must make use of the gauge symmetry already present. However, as $S$ and $P$ are gauge invariants, one does not have freedom to fix the gauge to impose the equation of motion for $\rho$. Nonetheless, we notice that our Lagrangian is not a function of $\partial_0 A_0$ as
\begin{align}
    S &= -\frac{1}{2}\left( \partial_\mu A_{\nu} \partial^{\mu} A^{\nu} - \partial_\mu A_{\nu} \partial^{\nu} A^{\mu} \right) \nonumber\\
    &= -\frac{1}{2} \left( \partial_0 A_i \partial^{0} A^{i} + \partial_j A_i \partial^{j} A^{i} + \partial_i A_0 \partial^{i} A^{0} - \partial_0 A_i \partial^{i} A_0 - \partial_j A_i \partial^{i} A^{j} - \partial_i A_0 \partial^{t} A^{i}  \right),
    \intertext{and}
    P &= E^{i} B_i = -\left( \partial_0 A_i - \partial_i A_0 \right) \epsilon^{ijk} \partial_j A_k
,\end{align}
are functions of $A_0$ but not $\partial_0 A_0$. Therefore, the canonical conjugate momenta
\begin{align}
    \Pi^{0} = \pdv{\mathcal{L}}{\left( \partial_0 A_0 \right) } = 0
,\end{align}
vanishes, suggesting $A_0$ is an ill suited canonical variable. For Maxwell's theory, this is identically, the case, and can be treated by introducing Coulomb gauge, in which one fixes $A_0 = 0$, treating it as a non-dynamical variable.

We posit that there exists a generalized Coulomb gauge, fixing $A_0 = \omega \left( x \right) $ for some scalar function $\omega \left( x \right) $ that imposes the constraint
\begin{align}
    \sinh^2 \gamma \left( S^2 + P^2 \right) &= 1\label{eq:constraint}
,\end{align}
at the level of the equations of motion. Namely, this implies $\rho = 1$ such that $\rho$ becomes fixed and non-dynamical as well. Fixing $A_0$ in this fashion explicitly breaks Lorentz symmetry.
\end{note}

Thus, imposing the resulting constraint on the Lagrangian
\begin{align}
    \mathcal{L} = \cosh \gamma S + \frac{\sinh^2\gamma}{2} \left( S^2 + P^2 \right) + \frac{1}{2}
,\end{align}
is a classically equivalent form of ModMax. This constraint appears quite abstract, and unintuitive in comparison to the $p^2 = -m^2$ constraint that we saw for the Einbein action.

\section{Symmetry Preservation}
As ModMax has EM-duality at the level of the equations of motion, so should this equivalent formalism. Namely, in general, electromagnetic duality is expressed as
\begin{align}
   \mqty( -2\pdv{\mathcal{L}\left( F' \right) }{F'_{\mu\nu}} \\ \widetilde{F}'_{\mu \nu} ) &=  \mqty(	\cos \alpha & \sin \alpha \\	-\sin \alpha & \cos \alpha \\)\mqty( -2\pdv{\mathcal{L}\left( F' \right) }{F'_{\mu\nu}} \\ \widetilde{F}_{\mu \nu} )
,\end{align}
where
\begin{align}
    G^{\mu \nu} \equiv -2\pdv{\mathcal{L}}{F_{\mu\nu}} &= \cosh \gamma F^{\mu \nu} + \rho^{-2}\sinh^2 \gamma \left( S F^{\mu \nu} + P \widetilde{F}^{\mu\nu} \right)
.\end{align}

An equivalent statement of electromagnetic duality \cite{Lechner_2022} is if the theory satisfies
\begin{align}
    G_{\mu \nu} \widetilde{G}^{\mu \nu} &= F_{\mu \nu} \widetilde{F}^{\mu \nu}
.\end{align}

\begin{note}
    We have
    \begin{align}
        \widetilde{G}^{\mu \nu} = \cosh \gamma \widetilde{F}^{\mu \nu} + \rho^{-2}\sinh^2 \gamma \left( S \widetilde{F}^{\mu \nu} - P F^{\mu \nu}\right) 
    ,\end{align}
    which reveals that
    \begin{align}
        G_{\mu \nu} \widetilde{G}^{\mu \nu} &= \cosh^2 \gamma F_{\mu \nu} \widetilde{F}^{\mu \nu} + \rho^{-2}\cosh \gamma \sinh^2 \gamma \times \nonumber\\
        &\phantom{=}~\left[  \widetilde{F}_{\mu \nu} \left( S F^{\mu \nu} + P\widetilde{F}^{\mu \nu} \right) + F_{\mu \nu} \left( S \widetilde{F}^{\mu \nu} - P F^{\mu \nu} \right)  \right]\nonumber  \\
        &\phantom{=}+ \rho^{-4}\sinh^4 \gamma \left( S F_{\mu \nu} + P\widetilde{F}_{\mu \nu} \right) \left( S \widetilde{F}^{\mu \nu} - P F^{\mu \nu} \right) \nonumber \\
        &= -4\cosh^2 \gamma P - 4\rho^{-2}\cosh \gamma \sinh^2 \gamma \left[ 2SP - 2SP \right] \nonumber\\
        &\phantom{=}~+ 4\rho^{-4}\sinh^{4} \gamma \left[ -S^2 P + S^2 P + S^2P + P^3 \right]   \nonumber \\
        &= -4\cosh^2 \gamma P + 4\rho^{-4}\sinh^{4} \gamma P\left[ S^2 + P^2 \right],
        \intertext{where if we impose the equation of motion for $\rho$,}
        G_{\mu \nu} \widetilde{G}^{\mu \nu} &= -4\cosh^2 \gamma P + 4\sinh^{4} \gamma P\left[ \frac{1}{\sinh^2 \gamma} \right]    \nonumber\\
        &= 4\left( \sinh^2 \gamma - \cosh^2 \gamma \right)  P,
        \intertext{where the hyperbolic identity $\sinh^2 \gamma - \cosh^2 \gamma = -1$ provides}
       G_{\mu \nu} \widetilde{G}^{\mu \nu} &= -4P,
        \intertext{and as we have}
        F_{\mu \nu} \widetilde{F}^{\mu \nu} &= -4P\nonumber  \\
\implies G_{\mu \nu} \widetilde{G}^{\mu \nu} &= F_{\mu \nu} \widetilde{F}^{\mu \nu}
    ,\end{align}
    this theory is electromagnetically dual when the constraint is imposed as expected.
\end{note}

To check conformal invariance, we see that the stress energy tensor is given by
\begin{align}
    T_{\mu \nu} &= -2 \left( \pdv{\mathcal{L}}{S} \pdv{S}{g^{\mu \nu}} +  \pdv{\mathcal{L}}{P} \pdv{P}{g^{\mu \nu}}\right) + g_{\mu \nu} \mathcal{L}
,\end{align}
where
\begin{align}
    \pdv{\mathcal{L}}{S} = \cosh \gamma + \rho^{-2}\sinh^2 \gamma S, &&
    \pdv{\mathcal{L}}{P} = \rho^{-2}\sinh^2 \gamma P,
    \intertext{and}
    \pdv{S}{g^{\mu \nu}} = -\frac{1}{2} \tensor{F}{_{\mu}^{\rho}} F_{\nu \rho}, &&
    \pdv{P}{g^{\mu \nu}} = -\frac{1}{4} \left( \tensor{F}{_{\mu}^{\rho}}\widetilde{F}_{\mu \rho} + \tensor{F}{_{\nu}^{\rho}} \widetilde{F}_{\mu \rho} \right) 
,\end{align}
which lead to a trace of the form
\begin{align}
    \tensor{T}{^{\mu}_{\mu}} &= -4 \left( S \pdv{\mathcal{L}}{S} + P \pdv{\mathcal{L}}{P} - \mathcal{L} \right) \nonumber \\
    &= -4 \bigg( S \cosh \gamma + \rho^{-2}S^2\sinh^2 \gamma  + \rho^{-2}P^2 \sinh^2 \gamma - S \cosh \gamma \nonumber\\
    &\phantom{=}\quad \quad- \rho^{-2}\frac{\sinh^2 \gamma}{2} \left( S^2 + P^2 \right) - \frac{1}{2} \rho^2 \bigg)  \nonumber\\
    &= -4  \left( \rho^{-2}\frac{\sinh^2 \gamma}{2} \left( S^2 + P^2 \right) - \frac{1}{2} \rho^2  \right),
    \intertext{where we see the equation of motion \cref{eq:rho_eom} causes this term to exactly vanish}
    \tensor{T}{^{\mu}_{\mu}} &= 0
,\end{align}
as desired. 

Therefore, we have that this auxiliary field representation of ModMax maintains both of the notable symmetries of the original theory at the classical level. Note that this makes no comment on the equivalence at the quantum level, which is significantly inhibited by the breaking of Lorentz symmetry by fixing $A_0$. This digression suggests that one cannot integrate out $\rho$ without explicitly breaking Lorentz symmetry (or purely recovering ModMax itself). This contrasts with the effectiveness of the background field method, in which we were able to preserve Lorentz invariance in our quantization procedure. Alternative auxiliary field representations are a natural extension of this work, however are likely less elucidating of underlying structures than the background field method developed above.

\chapter{Conclusion}

The central aim of this project was to perturbatively quantize ModMax by obtaining the effective action. While ModMax's nonlinearity poses a great resistance to traditional quantization techniques, the background field method and dimensional regularization proved highly effective in quantizing this theory. As such, I achieved the central aim of this project by characterizing the effective action arising from quantum corrections in both a static and varying classical background field. I obtained the effective action by evaluating all one loop Feynman diagrams and all two loop diagrams containing up to two vertices.

This effective action provides corrections to the classical theory arising from the quantum domain. I showed that these corrections exactly vanish when the background field is static. This suggests that under this restriction, there are no quantum corrections to this theory, a novel result undiscovered in the literature. However, when the background field was allowed to vary, divergent corrections arose which were not of the form of the original Lagrangian. This was also a novel result. While these corrections appear to respect conformal symmetry, as ModMax is the unique nonlinear theory possessing conformal symmetry and electromagnetic duality, this suggests that these corrections must break electromagnetic duality. While this result hints towards the non-renormalizability of ModMax, it is still a valid effective field theory. Further, the discovery of novel conformal field theories is of great theoretical interest. The natural extension of this investigation would be to investigate the properties of the classical conformal theory generated by these quantum corrections.

This result motivated the study of the two dimensional analogue of ModMax, due to the increased predictive power of conformal symmetry in two dimensions. I applied the method I developed to quantize ModMax to its two dimensional analogue theory. As expected, I obtained corrections of an analogous form, also vanishing when the background field is held constant. Allowing the background to vary, I similarly obtained divergent quantum corrections to the classical theory that are not of the form of the initial Lagrangian. 

For both of the theories investigated, while the quantum corrections obtained were novel results, the divergence and new form of these corrections suggest that the theories do not admit well-behaved quantum versions in their current form. Nonetheless, they are valid as effective field theories, and the possibility of generating new conformal field theories through the corrections obtained is promising. In the landscape of nonlinear electrodynamics, this quantization of ModMax serves to demonstrate the possibility of translating such classical nonlinear theories to the quantum domain.

\appendix

\chapter{Propagator Derivations} \label{sec:propagator}
For QED, the momentum space photon propagator is given by
\begin{align*}
    D^{\nu \rho} &= \frac{-i}{k^2} \left( g^{\mu \nu} + \left( 1 - \xi \frac{k^{\mu} k^{\nu}}{k^2} \right)  \right)
.\end{align*}

This implies that it is the inverse of the quadratic term in the QED Lagrangian,
\begin{align}
    \mathcal{L} &=  A^{\mu} \underbrace{\left( -k^2 g_{\mu \nu} + \left( 1 - \frac{1}{\xi} \right)\partial_{\mu} \partial_{\nu}  \right)}_{\text{quadratic term}} A^{\nu}
.\end{align}

Namely, we have
\begin{align}
    \left( -k^2 g_{\mu \nu} + \left( 1 - \frac{1}{\xi} \right) \partial_{\mu} \partial_{\nu}  \right) D^{\nu \rho} &= i \delta_{\mu}^{\rho}
.\end{align}

\begin{proof}
    Observe that the propagator is the inverse of this term such that
    \begin{align*}
        &\left( -k^2 g_{\mu \nu} + \left( 1 - \frac{1}{\xi} \right) k_\mu k_\nu \right) \frac{-i}{k^2 } \left( g^{\mu \nu} + \left( 1 - \xi \right)\frac{k^{\mu} k^{\nu}}{k^2}   \right)  \\
       &= -\frac{i}{k^2} \left( -k^2 \delta_{\mu}^{\rho} - \left( 1 - \xi \right) k^2 g_{\mu \nu} \frac{k^{\nu} k^{\rho}}{k^2} + \left( 1 - \frac{1}{\xi} \right) k_\mu k_\nu g^{\nu \rho} + \left( 1 - \frac{1}{\xi} \right) \left( 1 - \xi \right) \frac{k_{\mu} k_{\nu} k^{\nu} k^{\rho}}{k^2}   \right)  \\
        &= -\frac{i}{k^2} \left( -k^2 \delta_{\mu}^{\rho} - \left( 1 - \xi \right) k_{\mu} k^{\rho}  + \left( 1- \frac{1}{\xi} \right) k_\mu k^\rho + \left( 1 - \frac{1}{\xi} \right) \left( 1 - \xi \right) k_{\mu} k^{\rho} \right)  \\
        &= -\frac{i}{k^2} \left( -k^2 \delta_{\mu}^{\rho} + \xi k_{\mu} k^{\rho}  - \frac{1}{\xi} k_\mu k^\rho + \left( 1 - \xi - \frac{1}{\xi} + 1 \right) k_{\mu} k^{\rho} \right)  \\
        &= -\frac{i}{k^2 } \left( -k^2 \delta_{\mu}^{\rho} + 2 k_{\mu} k^{\rho} \right)  \\
        &= -\frac{i}{k^2 } \left( -k^2 \delta^{\rho}_{\mu} \right)  \\
        &= i\delta_{\mu}^{\rho}
    ,\end{align*}
    as desired.
\end{proof}

For ModMax, as the term quadratic in the quantum fields is entirely analogous up to a $\cosh \gamma$ factor and a background dependent term that can be absorbed through gauge choice
\begin{align}
    \mathcal{L}_\text{ModMax} &= \frac{\cosh \gamma}{2}  a_{\nu} \left( -k^2 g^{\mu \nu} + \left( 1 - \frac{1}{\xi} - \tanh \gamma S_C \right) k^{\mu} k^{\nu}  \right) a_{\mu} + \text{~non-quadratic~terms}
,\end{align}
the propagator is identical, with the addition of division by the $\cosh \gamma$ factor with
\begin{align}
    D^{\nu \rho} &= \frac{1}{\cosh \gamma} \frac{-i}{k^2} \left( g^{\mu \nu} + \left( 1 - \left( \xi + \frac{1}{S_C \tanh \gamma} \right)  \frac{k^{\mu} k^{\nu}}{k^2} \right)  \right)
,\end{align}
where throughout my thesis I have chosen $\xi$ such that the propagator simplifies to
\begin{align}
    D^{\nu \rho} &= \frac{1}{\cosh \gamma} \frac{-i g^{\mu \nu}}{k^2}
.\end{align}

\chapter{ModMax 2-Loop Diagram Calculation} \label{sec:2_loop_2_vertex_calc}

We proceed in dimensional regularization with $d \neq 4$ in which the diagram thus evaluates to
\begin{align}
    \mathcal{D}_{\text{2 Loops}} &= \vcenter{\hbox{
            \includegraphics{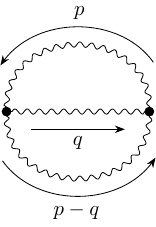}
}} \nonumber\\
&= \frac{\sinh^2 \gamma}{6} \tensor{B}{^{\mu_1}_{\nu_1}^{\rho_1}_{\tau_1}^{\alpha_1}_{\beta_1}} \tensor{B}{^{\mu_2}_{\nu_2}^{\rho_2}_{\tau_2}^{\alpha_2}_{\beta_2}} \times\nonumber \\
&\phantom{=}~\int \frac{\dd{^{d}p}\dd{^{d}q}}{\left( 2\pi \right)^{2d}} p_{\mu_1} q_{\rho_1} \left( p -q \right)_{\alpha_1} p_{\mu_2} q_{\rho_2} \left( p -q \right)_{\alpha_2} D^{\nu_1 \nu_2} D^{\tau_1 \tau_2} D^{\beta_1 \beta_2} \nonumber \\
&= \left( -i \right)^3\frac{\sinh^2 \gamma}{6 \cosh^3 \gamma} \tensor{B}{^{\mu_1}_{\nu_1}^{\rho_1}_{\tau_1}^{\alpha_1}_{\beta_1}} \tensor{B}{^{\mu_2}^{\nu_1}^{\rho_2}^{\tau_1}^{\alpha_2}^{\beta_1}} \times \nonumber\\
&\phantom{=}~\int \frac{\dd{^{d}p}\dd{^{d}q}}{\left( 2\pi \right)^{2d}} \frac{p_{\mu_1} q_{\rho_1} \left( p -q \right)_{\alpha_1} p_{\mu_2} q_{\rho_2} \left( p -q \right)_{\alpha_2}}{p^2 q^2 \left( p - q \right)^2}
.\end{align}

The numerator has terms with either $2,3$ or $4$ factors of $p$ in it (and symmetrically for $q$). Inspecting the $p_{\mu_1} p_{\mu_2}$ term of the integral, we have
\begin{align}
    \mathcal{D}_{\text{2 Loops}} \propto  \mathcal{I}_{\text{2 Loops}}&=\int \frac{\dd{^{d}p} \dd{^{d}q}}{\left( 2\pi \right)^{2d}} \frac{p_{\mu_1} p_{\mu_2}q_{\rho_1} q_{\rho_2} q_{\alpha_1} q_{\alpha_2}}{p^2 q^2 \left( p^2 -2p^{\mu}q_{\mu} + q^2 \right)},
    \intertext{where focusing on the $p$ integral, we write}
   \mathcal{I}_{\text{2 Loops}} &=\int \frac{\dd{^{d}q}}{\left( 2\pi \right)^{d}} \frac{q_{\rho_1} q_{\rho_2} q_{\alpha_1} q_{\alpha_2}}{q^2}\int \frac{\dd{^{d}p}}{\left( 2\pi \right)^{d}} \frac{p_{\mu_1} p_{\mu_2}}{p^2 \left( p^2 -2p^{\mu}q_{\mu} + q^2 \right)},
    \intertext{and introduce a Feynman integral over $x$}
   \mathcal{I}_{\text{2 Loops}} &= \int \frac{\dd{^{d}q}}{\left( 2\pi \right)^{d}} \frac{q_{\rho_1} q_{\rho_2} q_{\alpha_1} q_{\alpha_2}}{q^2}\int \frac{\dd{^{d}p }}{\left( 2\pi \right)^{d}} \int_0^1 \dd{x} \frac{p_{\mu_1} p_{\mu_2}}{\left[ p^2 x +  \left( p^2 -2p^{\mu}q_{\mu} + q^2 \right) \left( 1 - x \right) \right]^2} \nonumber \\
    &= \int \frac{\dd{^{d}q}}{\left( 2\pi \right)^{d}} \frac{q_{\rho_1} q_{\rho_2} q_{\alpha_1} q_{\alpha_2}}{q^2}\int \frac{\dd{^{d}p }}{\left( 2\pi \right)^{d}} \int_0^1 \dd{x} \frac{p_{\mu_1} p_{\mu_2}}{\left[ p^2+  \left( -2p^{\mu}q_{\mu} + q^2 \right) \left( 1 - x \right) \right]^2},
    \intertext{where we complete the square in the denominator}
   \mathcal{I}_{\text{2 Loops}} &= \int \frac{\dd{^{d}q}}{\left( 2\pi \right)^{d}} \frac{q_{\rho_1} q_{\rho_2} q_{\alpha_1} q_{\alpha_2}}{q^2}\int \frac{\dd{^{d}p}}{\left( 2\pi \right)^{d}} \int_0^1 \dd{x} \frac{p_{\mu_1} p_{\mu_2}}{\left[ \left( p_{\mu} - q_{\mu}\left( 1 -x \right)  \right)^2 - q^2\left( 1 - x \right)^2 \right]^2},
    \intertext{and make the translation $p_{\mu} \to p_{\mu} + q_{\mu}\left( 1- x \right) $,}
   \mathcal{I}_{\text{2 Loops}} &=  \int \frac{\dd{^{d}q}}{\left( 2\pi \right)^{d}} \frac{q_{\rho_1} q_{\rho_2} q_{\alpha_1} q_{\alpha_2}}{q^2}\int \frac{\dd{^{d}p }}{\left( 2\pi \right)^{d}} \int_0^1 \dd{x} \frac{\left( p_{\mu_1} + q_{\mu_1}\left( 1- x \right) \right)  \left( p_{\mu_2} + q_{\mu_2}\left( 1- x \right) \right)}{\left[ p^2 - q^2\left( 1 - x \right)^2 \right]^2},
    \intertext{where here the cross terms in the integrand with one $p$ will vanish, and thus we are left with}
   \mathcal{I}_{\text{2 Loops}} &=  \int \frac{\dd{^{d}q}}{\left( 2\pi \right)^{d}} \frac{q_{\rho_1} q_{\rho_2} q_{\alpha_1} q_{\alpha_2}}{q^2}\int \frac{\dd{^{d}p }}{\left( 2\pi \right)^{d}} \int_0^1 \dd{x} \frac{p_{\mu_1} p_{\mu_2} + q_{\mu_1} q_{\mu_2} \left( 1 - x \right)^2}{\left[ p^2 - q^2\left( 1 - x \right)^2 \right]^2},
    \intertext{where we make use of the known integral (A.4 in \cite{Renormalization})}
    &\int \frac{\dd{^{d}p}}{\left( 2\pi \right)^{d}} \frac{p^{2\beta}}{\left( p^2 - \Delta^2 \right)^{\alpha}} = i \left( -1 \right)^{\alpha + \beta} \frac{\Gamma \left( \beta + \frac{d}{2} \right) \Gamma \left( \alpha - \beta - \frac{d}{2} \right) }{\left( 4\pi \right)^{\frac{d}{2}} \Gamma \left( \alpha \right) \Gamma \left( \frac{d}{2} \right)  } \Delta^{2 \left( \frac{d}{2} - \alpha + \beta \right) } \label{known_integral},
    \intertext{with $\alpha = 2$, $\beta = 0,1$ from $p_{\mu_1}p_{\mu_2}\to \frac{p^2}{4}g_{\mu_1 \mu_2}$ and $\Delta^2 = q^2\left( 1 - x \right)^2$ to obtain}
   \mathcal{I}_{\text{2 Loops}} &=  \int \frac{\dd{^{d}q}}{\left( 2\pi \right)^{d}} \frac{q_{\rho_1} q_{\rho_2} q_{\alpha_1} q_{\alpha_2}}{q^2} \int_0^1 \dd{x}i\bigg[ q_{\mu_1} q_{\mu_2} \left( 1 - x \right)^2 \frac{\Gamma\left( \frac{d}{2} \right) \Gamma \left( 2-\frac{d}{2} \right)}{\left( 4\pi \right)^{\frac{d}{2}}\Gamma\left( \frac{d}{2} \right) } \Delta^{d-4} \nonumber\\
    &\phantom{=}\hspace{5cm}- \frac{g_{\mu_1 \mu_2}}{4} \frac{\Gamma\left( 1+\frac{d}{2} \right)  \Gamma\left( 1-\frac{d}{2} \right)}{\left( 4\pi \right)^{\frac{d}{2}} \Gamma\left( \frac{d}{2} \right) } \Delta^{d-2} \bigg],
    \intertext{where substituting in $\Delta^2$ and using $\Gamma\left( x+1 \right) = x \Gamma \left( x \right) \implies \Gamma \left( 1 + \frac{d}{2} \right) = \frac{d}{2} \Gamma \left( \frac{d}{2} \right)$ leads to}
   \mathcal{I}_{\text{2 Loops}} &=  \int \frac{\dd{^{d}q}}{\left( 2\pi \right)^{d}} \frac{q_{\rho_1} q_{\rho_2} q_{\alpha_1} q_{\alpha_2}}{q^2} \int_0^1 \dd{x}\frac{i}{\left( 4\pi \right)^{\frac{d}{2}}  }\bigg[ q_{\mu_1} q_{\mu_2} q^{d-4} \left( 1 - x \right)^{2+d-4} \Gamma\left( 2-\frac{d}{2} \right) \nonumber \\
    &\phantom{=} \hspace{5cm}- \frac{d}{2}\frac{g_{\mu_1 \mu_2}}{4} \Gamma\left( 1 - \frac{d}{2} \right) q^{d-2} \left( 1 - x \right)^{d-2} \bigg],
    \intertext{where evaluating the Feynman integral gives $\frac{1}{d-1}$ and thus}
   \mathcal{I}_{\text{2 Loops}} &=  \int \frac{\dd{^{d}q}}{\left( 2\pi \right)^{d}} \frac{q_{\rho_1} q_{\rho_2} q_{\alpha_1} q_{\alpha_2}}{q^2} \frac{i}{\left( 4\pi \right)^{\frac{d}{2}} \left( d-1 \right)  }\bigg[ q_{\mu_1} q_{\mu_2} q^{d-4}  \Gamma\left( 2-\frac{d}{2} \right) \nonumber \\
   &\phantom{=} \hspace{7cm}- \frac{dg_{\mu_1 \mu_2}}{8} \Gamma\left( 1 - \frac{d}{2} \right) q^{d-2}  \bigg],
    \intertext{where lastly with $\Gamma\left( 2 -\frac{d}{2} \right) = \left( 1 - \frac{d}{2} \right) \Gamma \left( 1 - \frac{d}{2} \right) $ we arrive at}
   \mathcal{I}_{\text{2 Loops}} &=  \int \frac{\dd{^{d}q}}{\left( 2\pi \right)^{d}} \frac{q_{\rho_1} q_{\rho_2} q_{\alpha_1} q_{\alpha_2}}{q^2} \frac{i\Gamma\left( 1 - \frac{d}{2} \right)}{\left( 4\pi \right)^{\frac{d}{2}} \left( d-1 \right)  }\left[ q_{\mu_1} q_{\mu_2} q^{d-4} \left( 1 - \frac{d}{2} \right) - \frac{dg_{\mu_1 \mu_2}}{8}  q^{d-2}  \right]
,\end{align}
where $\forall d \neq 4$, we are left with a symmetrizable integral over $q$ that will vanish identically. By analytic continuation, in this regularization scheme we conclude that the integrals also vanish at $d = 4$.

The other two possible numerators with $3$ and $4$ factors of $p$ follow similarly as the Feynman integral substitution in the denominator is independent of the momenta in the numerator. Namely, for $3$ factors of $p$, we have
\begin{align}
    \mathcal{D}_{\text{2 Loops}} \propto \mathcal{I}_{p^3\text{~term}} &= \int \frac{\dd{^{d}p} \dd{^{d}q}}{\left( 2\pi \right)^{2d}} \frac{p_{\mu_1} p_{\mu_2} p_{\alpha_1}q_{\rho_1} q_{\rho_2} q_{\alpha_2}}{p^2 q^2 \left( p^2 - 2p^{\mu} q_{\mu} \right) } + \alpha_1 \leftrightarrow \alpha_2
.\end{align}
Focusing on the $p$ integral, the same process and translation $p_{\mu} \to p_{\mu} + xq_{\mu}$ yields
\begin{align}
   \mathcal{I}_{p^3\text{~term}} &=\int \frac{\dd{^{d}q}}{\left( 2\pi \right)^{d}} \frac{q_{\rho_1} q_{\rho_2} q_{\alpha_2}}{q^2} \int \frac{\dd{^{d}p}}{\left( 2\pi \right)^{d}} \frac{\left( p_{\mu_1} + xq_{\mu_1} \right) \left( p_{\mu_2} + x q_{\mu_2} \right) \left( p_{\alpha_1} + x q_{\alpha_1} \right)  }{\left[ p^2 - q^2 x^2 \right]^2 } \nonumber \\
   \mathcal{I}_{p^3\text{~term}} &=\int \frac{\dd{^{d}q}}{\left( 2\pi \right)^{d}} \frac{q_{\rho_1} q_{\rho_2}  q_{\alpha_1}q_{\alpha_2}}{q^2} \int \frac{\dd{^{d}p}}{\left( 2\pi \right)^{d}} \frac{p_{\mu_1} p_{\mu_2} + x^2 q_{\mu_1} q_{\mu_2}}{\left[ p^2 - q^2 x^2 \right]^2 }
,\end{align}
which is in fact exactly as we saw for the $4$ factors of $p$ case. Likewise by symmetry, the $2$ factors will be identical to the $4$ factors under exchange of $p \leftrightarrow q$. Thus it is sufficient to multiply our result by $3$ to account for all terms.

\chapter{Scalar Field $\mathcal{D}_2$ Calculation} \label{sec:scalar_field_1_loop_2_vertex_calc}

Let us first focus on the divergence structure of the diagram $\mathcal{D}_2$ of \cref{D2_diagram}, which we repeat here for convenience:
\begin{align}\label{first_diagram}
    \mathcal{D}_2 \; = \vcenter{\hbox{
            \includegraphics{figures/Scalar_D2.pdf}
}}
.\end{align}
As we mentioned around \cref{simpler_integral_body}, the value of this diagram can be expressed in terms of the simpler quantity
\begin{align}
    {\mathcal{I}}^{\mu \nu \rho \tau}_2 = \int \frac{\dd{^{d}\ell}}{\left( 2\pi \right)^{d}} \frac{( \ell + q )^{\nu}\ell^{\mu} ( \ell + q )^{\tau} \ell^{\rho} }{\ell^2 \left( \ell + q \right)^2} .
\end{align}
All of the dependence on loop momenta is encoded within ${\mathcal{I}}^{\mu \nu \rho \tau}_2$. From the value of $\mathcal{I}^{\mu \nu \rho \tau}_2$, the original diagram $\mathcal{D}_2$ is recovered from the expression in \cref{I2_to_D2}, which only involves additional dependence on the classical background via the tensor $\tensor{P}{_\mu_\nu^i^j}$ and an additional integration over the momentum $q$. Therefore, in order to study the divergences arising from the loop, it suffices to perform dimensional regularization of the quantity $\mathcal{I}^{\mu \nu \rho \tau}_2$.

Expanding out the products and introducing a Feynman parameter $x$ in order to resolve the denominator, we find
\begin{align}
    \mathcal{I}^{\mu \nu \rho \tau}_2 &= \int \frac{\dd{^{d}\ell}}{\left( 2\pi \right)^{d}} \frac{\ell^{\nu}\ell^{\mu} \ell^{\tau} \ell^{\rho} + \ell^{\nu} \ell^{\mu} q^{\tau} \ell^{\rho} + q^{\nu} \ell^{\mu} \ell^{\tau} \ell^{\rho}  + q^{\nu} \ell^{\mu} q^{\tau} \ell^{\rho}}{\ell^2 \left( \ell + q \right)^2} \nonumber \\
    &=\int \frac{\dd{^{d}\ell}}{\left( 2\pi \right)^{d}} \int_0^{1} \dd{x}\frac{\ell^{\nu}\ell^{\mu} \ell^{\tau} \ell^{\rho} + \ell^{\nu} \ell^{\mu} q^{\tau} \ell^{\rho} + q^{\nu} \ell^{\mu} \ell^{\tau} \ell^{\rho}  + q^{\nu} \ell^{\mu} q^{\tau} \ell^{\rho}}{\left[ \ell^2 \left( 1 - x \right)  +  x\left( \ell + q \right)^2 \right]^2} \nonumber \\
    &= \int \frac{\dd{^{d}\ell}}{\left( 2\pi \right)^{d}} \int_0^{1} \dd{x}\frac{\ell^{\nu}\ell^{\mu} \ell^{\tau} \ell^{\rho} + \ell^{\nu} \ell^{\mu} q^{\tau} \ell^{\rho} + q^{\nu} \ell^{\mu} \ell^{\tau} \ell^{\rho}  + q^{\nu} \ell^{\mu} q^{\tau} \ell^{\rho}}{\left[ \ell^2  + x \left( 2\ell_{\mu} q^{\mu} + q^2 \right) \right]^2} \nonumber \\
    &= \int \frac{\dd{^{d}\ell}}{\left( 2\pi \right)^{d}} \int_0^{1} \dd{x}\frac{\ell^{\nu}\ell^{\mu} \ell^{\tau} \ell^{\rho} + \ell^{\nu} \ell^{\mu} q^{\tau} \ell^{\rho} + q^{\nu} \ell^{\mu} \ell^{\tau} \ell^{\rho}  + q^{\nu} \ell^{\mu} q^{\tau} \ell^{\rho}}{\left[ \left( \ell^{\mu} + xq^{\mu}  \right) ^2  +x \left( 1 - x \right)q^2  \right]^2} .
\end{align}
In the final step, we have completed the square in the denominator by adding and subtracting $q^2 x^2$. We now shift the integration variable from $\ell^\mu$ to
\begin{align}
    \ell^{\prime \mu} = \ell^\mu - x q^\mu ,
\end{align}
which causes the denominator to become even in $\ell^\prime$, and thus terms in the numerator which are odd in $\ell^{\prime \mu}$ will vanish by symmetry. We immediately drop the primes on $\ell^{\prime \mu}$ and write the surviving terms as
\begin{align}
    \mathcal{I}^{\mu \nu \rho \tau}_2 &= \int \frac{\dd{^{d}\ell}}{\left( 2\pi \right)^{d}} \int_0^{1} \dd{x} \Bigg[ \frac{\ell^{\nu} \ell^{\mu} \ell^{\tau} \ell^{\rho}}{\left( \ell^2  + q^2x\left( 1 - x \right)  \right)^2} + \frac{x^2  \ell^{\nu} q^{\mu} \ell^{\tau} q^{\rho} }{\left( \ell^2 + q^2 x \left( 1 - x \right)  \right)^2} + \frac{\left( x^2 - 2x + 1 \right) q^{\nu} \ell^{\mu} q^{\tau} \ell^{\rho}}{\left( \ell^2  + q^2x\left( 1 - x \right)  \right)^2} \nonumber \\
                                      &\phantom{=}~ +  \frac{\left( x^2 - x  \right) \left( q^{\nu} q^{\mu} \ell^{\tau} \ell^{\rho} + q^{\nu} \ell^{\mu} \ell^{\tau} q^{\rho} + \ell^{\nu} \ell^{\mu} q^{\tau} q^{\rho} + \ell^{\nu} q^{\mu} q^{\tau} \ell^{\rho} \right) }{\left[ \ell^2  + q^2x \left( 1 - x \right)  \right]^2}  + \frac{\left( x^{4} -2x^3 + x^2 \right)  q^{\nu} q^{\mu} q^{\tau} q^{\rho}}{\left( \ell^2  + q^2x\left( 1 - x \right)  \right)^2} \Bigg] \nonumber \\
    &= \int \frac{\dd{^{d}\ell}}{\left( 2\pi \right)^{d}} \int_0^{1} \dd{x} \Bigg[ \frac{\ell^{\nu} \ell^{\mu} \ell^{\tau} \ell^{\rho}}{\left( \ell^2  + q^2x\left( 1 - x \right)  \right)^2} + \frac{ x^2  \ell^{\nu} q^{\mu} \ell^{\tau} q^{\rho} }{\left( \ell^2 + q^2 x \left( 1 - x \right)  \right)^2} + \frac{\left( 1 - x \right)^2 q^{\nu} \ell^{\mu} q^{\tau} \ell^{\rho}}{\left[ \ell^2  + q^2x\left( 1 - x \right)  \right]^2}  \nonumber\\
    &\phantom{=}~ + \frac{ x\left( 1 -x \right) \left( q^{\nu} q^{\mu} \ell^{\tau} \ell^{\rho} + q^{\nu} \ell^{\mu} \ell^{\tau} q^{\rho} +  \ell^{\nu} \ell^{\mu} q^{\tau} q^{\rho} + \ell^{\nu} q^{\mu} q^{\tau} \ell^{\rho} \right) }{\left[ \ell^2  + q^2x \left( 1 - x \right)  \right]^2}   + \frac{\left[ x^2 \left( 1 - x \right)^2 \right]  q^{\nu} q^{\mu} q^{\tau} q^{\rho}}{\left[ \ell^2  + q^2x\left( 1 - x \right)  \right]^2} \Bigg] ,
\end{align}
where in the last expression we have factored various polynomials.

By a symmetry argument similar to the one discussed around \cref{eq:sym_rep_used} and \cref{eq:symmetrization}, within the integral we can replace products of loop momenta with symmetrized combinations of metric tensors:
\begin{align}\label{sym_replacements_app}
    \ell^{\mu} \ell^{\nu} &\to \frac{1}{d} \ell^2 g^{\mu \nu} , \nonumber \\
    \ell^{\mu} \ell^{\nu} \ell^{\rho} \ell^{\tau} &\to \frac{1}{d \left( d + 2 \right) } \ell^4 \left( g^{\mu \nu} g^{\rho \tau} + g^{\mu \rho} g^{\nu \tau} + g^{\mu \tau} g^{\nu \rho} \right) .
\end{align}
Applying the replacements \cref{sym_replacements_app}, the integral $\mathcal{I}^{\mu \nu \rho \tau}_2$ becomes
\begin{align}\label{app_A1_intermediate}
    \mathcal{I}^{\mu \nu \rho \tau}_2 &= \int \frac{\dd{^{d}\ell}}{\left( 2\pi \right)^{d}} \int_0^{1} \dd{x} \Bigg( \frac{\ell^{4}}{d \left( d + 2 \right) }\frac{g^{\mu \nu} g^{\rho \tau} + g^{\mu\tau} g^{\nu \rho} + g^{\mu \rho} g^{\nu \tau}  }{\left[ \ell^2  + q^2x\left( 1 - x \right)  \right]^2} + \frac{\ell^2}{d}\frac{x^2 g^{\nu \tau}q^{\mu} q^{\rho} }{\left[ \ell^2  + q^2x \left( 1 - x \right)  \right]^2} \nonumber\\
    &\qquad + \frac{\ell^2}{d}\frac{ \left( 1 - x \right)^2  g^{\mu \rho} q^{\nu} q^{\tau} }{\left[ \ell^2  + q^2x\left( 1 - x \right)  \right]^2} + \frac{\ell^2}{d}\frac{x\left( 1 -x \right) \left( q^{\nu} q^{\mu} g^{\tau\rho} + g^{\mu \tau} q^{\nu} q^{\rho} + g^{\nu \mu} q^{\tau} q^{\rho} + g^{\nu \rho}q^{\mu} q^{\tau} \right) }{\left[ \ell^2  + q^2x \left( 1 - x \right)  \right]^2} \nonumber\\
    &\qquad + \frac{ x^2 \left( 1 - x \right)^2  q^{\nu} q^{\mu} q^{\tau} q^{\rho}}{\left[ \ell^2  + q^2x\left( 1 - x \right)  \right]^2} \Bigg) .
\end{align}
It will be convenient to make use of the standard result
\begin{align}\label{first_anselmi}
    \int \frac{\dd{^{d}\ell}}{\left( 2\pi \right)^{d}} \frac{\ell^{2\beta}}{\left( \ell^2 - \Delta^2 \right)^{\alpha}} = i \left( -1 \right)^{\alpha + \beta} \frac{\Gamma \left( \beta + \frac{d}{2} \right) \Gamma \left( \alpha - \beta - \frac{d}{2} \right) }{\left( 4\pi \right)^{\frac{d}{2}} \Gamma \left( \alpha \right) \Gamma \left( \frac{d}{2} \right)  } \Delta^{2 \left( \frac{d}{2} - \alpha + \beta \right) } ,
\end{align}
which can be found, for instance, in equation (A.4) in \cite{Renormalization}. Using \cref{first_anselmi} with
\begin{align}\label{delta_sq_defn}
    \Delta^2 = -q^2 x\left( 1 - x \right) 
\end{align}
in \cref{app_A1_intermediate}, we find
\begin{align}
    \mathcal{I}^{\mu \nu \rho \tau}_2 &= \frac{i}{\left( 4\pi \right)^{\frac{d}{2}} \Gamma \left( 2 \right) \Gamma \left( \frac{d}{2} \right)  } \int_0^{1} \dd{x} \Bigg( \frac{\Delta^{d}}{d \left( d + 2 \right) }\Gamma \left( 2 + \frac{d}{2} \right) \Gamma \left( -\frac{d}{2} \right) \left( g^{\mu \nu} g^{\rho \tau} + g^{\mu \rho} g^{\nu \tau} + g^{\mu \tau} g^{\nu \rho} \right)  \nonumber\\
    &\quad +\frac{\Delta^{d - 2}}{d} \Gamma \left( 1 + \frac{d}{2} \right) \Gamma \left( 1 - \frac{d}{2} \right)  x \left( 1 - x \right)  \left( q^{\nu} q^{\mu} g^{\tau \rho} + g^{\mu \tau} q^{\nu} q^{\rho} +  g^{\nu \mu} q^{\tau} q^{\rho} + g^{\nu \rho} q^{\mu} q^{\tau} \right) \nonumber\\
    &\quad + \frac{\Delta^{d-2}}{d} \Gamma \left( 1 + \frac{d}{2} \right) \Gamma \left( 1 - \frac{d}{2} \right) x^2 g^{\nu \tau} q^{\mu} q^{\rho} +\frac{\Delta^{d-2}}{d} \Gamma \left( 1 + \frac{d}{2} \right) \Gamma \left( 1 - \frac{d}{2} \right) \left( 1 - x \right)^2  g^{\mu \rho} q^{\nu} q^{\tau}\nonumber \\
    &\quad +\Delta^{d-4}\Gamma \left( \frac{d}{2} \right) \Gamma \left( 2 - \frac{d}{2} \right)  \left[ x^2 \left( 1 - x \right)^2  \right] q^{\nu} q^{\mu} q^{\tau} q^{\rho} \Bigg) .
\end{align}
Using gamma function identities and some algebra, one can simplify this to
\begin{align}
    \mathcal{I}^{\mu \nu \rho \tau}_2 &= \frac{i\Gamma \left( -\frac{d}{2} \right)}{\left( 4\pi \right)^{\frac{d}{2}}} \int_0^{1} \dd{x} \Bigg( \frac{\Delta^{d}}{4} \left( g^{\mu \nu} g^{\rho \tau} +   g^{\mu \rho} g^{\nu \tau} + g^{\mu \tau} g^{\nu \rho} \right) -\frac{d\Delta^{d-2}}{4} x^2 g^{\nu \tau} q^{\mu} q^{\rho}  \nonumber \\
    &\quad -\frac{d\Delta^{d-2}}{4} \left( 1 - x \right)^2 g^{\mu \rho} q^{\nu} q^{\tau} - \frac{d\Delta^{d - 2}}{4} x \left( 1 - x \right) \left( q^{\nu} q^{\mu} g^{\tau \rho} + g^{\mu\tau} q^{\nu} q^{\rho} + g^{\nu \mu} q^{\tau} q^{\rho} + g^{\nu \rho} q^{\mu} q^{\tau} \right) \nonumber\\
    &\quad  + \frac{d\left( d - 2 \right) \Delta^{d-4}}{4} x^2 \left( 1 - x \right)^2 q^{\nu} q^{\mu} q^{\tau} q^{\rho} \Bigg) .
\end{align}
After substituting in for $\Delta^2$ using the definition \cref{delta_sq_defn}, we can now evaluate the resulting integrals using the formula
\begin{align}
    \int_0^{1} \dd{x} x^{\alpha - 1} \left( 1 - x \right)^{\beta - 1} &= \frac{\Gamma \left( \alpha \right) \Gamma \left( \beta \right) }{\Gamma \left( \alpha + \beta \right) } = \mathrm{B} ( \alpha, \beta ) , 
\end{align}
which we recognize as the definition of the beta function $\mathrm{B} ( \alpha , \beta )$. By doing this, we find
\begin{align}\label{appendix_intermediate_two}
    \mathcal{I}^{\mu \nu \rho \tau}_2 &= \frac{i\Gamma \left( -\frac{d}{2} \right)}{4\left( 4\pi \right)^{\frac{d}{2}}} \int_0^{1} \dd{x} \Bigg( \nonumber \\
                                      &\phantom{=+}~  d \left( d - 2 \right) q^{d-4}\left[ -x\left( 1 - x \right)   \right]^{\frac{d}{2}} q^{\nu} q^{\mu} q^{\tau} q^{\rho} - dq^{d-2}  \left( -x \right)^{\frac{d}{2}-1}\left( 1 - x \right)^{\frac{d}{2}+1} g^{\mu \rho} q^{\nu} q^{\tau} \nonumber\\
    &\quad + q^{d} \left[ -x\left( 1 - x \right) \right]^{\frac{d}{2}} \left( g^{\mu \nu} g^{\rho \tau} + g^{\mu \rho} g^{\nu \tau} + g^{\mu \tau} g^{\nu \rho} \right)    -dq^{d-2} \left[ \left( -x \right)^{\frac{d}{2}+1} \left( 1 - x \right)^{\frac{d}{2}-1}\right] g^{\nu \tau} q^{\mu} q^{\rho}  \nonumber \\
    &\quad +dq^{d - 2} \left[-x \left( 1 - x \right)  \right]^{\frac{d}{2}} \left( q^{\nu} q^{\mu} g^{\tau \rho} + g^{\mu \tau} q^{\nu} q^{\rho} + g^{\nu \mu} q^{\tau} q^{\rho} + g^{\nu \rho} q^{\mu} q^{\tau} \right)   \Bigg) \nonumber \\
    &= \frac{i \left( -1 \right)^{\frac{d}{2}}\Gamma \left( -\frac{d}{2} \right) }{4\left( 4\pi \right)^{\frac{d}{2}}} \Bigg( q^{d} \frac{\Gamma \left( \frac{d}{2} + 1 \right)^2}{\Gamma \left( d + 2 \right) } \left( g^{\mu \nu} g^{\rho \tau} + g^{\mu \rho} g^{\nu \tau} + g^{\mu \tau} g^{\nu \rho} \right)   \nonumber \\
    &\quad +dq^{d-2} \frac{\Gamma \left( \frac{d}{2} \right) \Gamma \left( \frac{d}{2} + 2 \right) }{\Gamma \left( d + 2 \right) } \left[ g^{\mu \rho} q^{\nu} q^{\tau} + g^{\nu \tau} q^{\mu} q^{\rho} \right] + d\left( d - 2 \right) q^{d - 4} \frac{\Gamma\left( \frac{d}{2} + 1 \right)^2}{\Gamma \left( d + 2 \right)} q^{\nu} q^{\mu} q^{\tau} q^{\rho} \nonumber \\
    &\quad +dq^{d-2} \frac{\Gamma \left( \frac{d}{2} + 1 \right)^2}{\Gamma \left( d + 2 \right) } \left( q^{\nu} q^{\mu} g^{\tau \rho} + g^{\mu \tau} q^{\nu}q^{\rho} +  g^{\nu \mu} q^{\tau} q^{\rho} + g^{\nu \rho} q^{\mu} q^{\tau} \right)  \Bigg) .
\end{align}
Note that each term in \cref{appendix_intermediate_two} scales as $q^d$, as expected. Factoring out the gamma functions, we have found
\begin{align}
    \mathcal{I}^{\mu \nu \rho \tau}_2 &= \frac{i \left( -1 \right)^{\frac{d}{2}}\Gamma \left( -\frac{d}{2} \right) }{4\left( 4\pi \right)^{\frac{d}{2}}} \frac{\Gamma \left( \frac{d}{2} + 1 \right)^2}{\Gamma \left( d + 2 \right) } \Bigg[ q^{d}  \left( g^{\mu \nu} g^{\rho \tau} + g^{\mu \rho} g^{\nu \tau} + g^{\mu \tau} g^{\nu \rho} \right) + d \left( d - 2 \right) q^{d - 4}q^{\nu} q^{\mu} q^{\tau} q^{\rho} \nonumber \\
    &\quad + dq^{d-2} \left( q^{\nu} q^{\mu} g^{\tau \rho} + g^{\mu \tau} q^{\nu} q^{\rho} + g^{\nu \mu} q^{\tau} q^{\rho} + g^{\nu \rho} q^{\mu} q^{\tau} \right)  +\left( d + 2 \right) q^{d-2} \left( g^{\mu \rho} q^{\nu} q^{\tau} + g^{\nu \tau} q^{\mu} q^{\rho} \right)  \Bigg] .
\end{align}
Finally, to perform dimensional regularization, we set the spacetime dimension to $d = 2 + 2 \epsilon$ and take $\epsilon \to 0$ using the limiting behavior
\begin{align}
    \Gamma \left( -1-\epsilon \right) = \frac{1}{\epsilon} - \gamma + 1 +  \mathcal{O}\left( \epsilon \right)
,\end{align}
for the gamma functions. Keeping only divergent terms, we arrive at the final expression
\begin{align}\label{two_vertex}
    \mathcal{I}^{\mu \nu \rho \tau}_2 &= \left( \frac{1}{\epsilon} \right) \frac{-i}{24\left( 4\pi \right)} \bigg[ q^{2}  \left( g^{\mu \nu} g^{\rho \tau}  + g^{\mu \rho} g^{\nu \tau} + g^{\mu \tau} g^{\nu \rho} \right) +2 \left( q^{\nu} q^{\mu} g^{\tau \rho} + g^{\mu \tau} q^{\nu} q^{\rho}+ g^{\nu \mu} q^{\tau} q^{\rho} + g^{\nu \rho} q^{\mu} q^{\tau} \right) \nonumber \\
    &\qquad \qquad \qquad \qquad + 4\left( g^{\mu \rho} q^{\nu} q^{\tau} + g^{\nu \tau} q^{\mu} q^{\rho} \right) \bigg] .
\end{align}
This completes the evaluation of the divergent contribution from $\mathcal{I}^{\mu \nu \rho \tau}_2$, which justifies the result \cref{divergent_final_body} which was quoted in the body of the paper.

\chapter{Scalar Field $m$-Loop 2-Vertex Calculation} \label{sec:scalar_field_n_loop_2_vertex_calc}

In this appendix, we will show how to evaluate the integral over one of the $m$ loop momenta $\ell_i$ which appear in the expression for the two-vertex, $m$-loop diagram of \cref{diagram:2_vertex_n_propagators}. It suffices to integrate over the final momentum $\ell_{m}$, since the result may then be iterated to evaluate the other $m-1$ integrals.

Notice that the momentum dependence of $\ell_m$ can be isolated such that only $\ell_{m-1}$ is involved in the expression. Specifically, we will compute the quantity

\begin{align}\label{one_integral_Lm2_defn}
    L_{m, 2} &= \int \dd{^{d}\ell_{m}} \frac{\ell_{m}^{\nu_{m+1}} \ell_{m}^{\tau_{m+1}} \left( \ell_{m} - \ell_{m-1} \right)^{\nu_{m}} \left( \ell_{m} - \ell_{m-1} \right)^{\tau_{m}}}{\ell_{m}^2 \left( \ell_{m} - \ell_{m-1} \right)^2}  .
\end{align}

This object $L_{m, 2}$ is proportional to the remaining integrand that one finds by performing the integral over $\ell_m$ in the definition of $\mathfrak{L}_{m, 2}$. As we will see, after obtaining an expression for $L_{m,2}$, this result can be used recursively to evaluate $\mathfrak{L}_{m, 2}$ itself.

We notice the integral in \cref{one_integral_Lm2_defn} is exactly of the form of the one appearing in the $1$-loop, $2$-vertex diagram which we evaluated in \cref{sec:scalar_field_1_loop_2_vertex_calc}. Proceeding in the same way, we introduce a Feynman parameter $x$ to write

\begin{align}
    L_{m, 2} &= \int \dd{^{d}\ell_{m}} \dd{x} \frac{\ell_{m}^{\nu_{m+1}} \ell_{m}^{\tau_{m+1}} \left( \ell_{m} - \ell_{m-1} \right)^{\nu_{m}} \left( \ell_{m} - \ell_{m-1} \right)^{\tau_{m}}}{\left[ \left( 1 - x \right) \ell_{m}^2  + x \left( \ell_{m} - \ell_{m-1} \right)^2 \right]^2} \nonumber \\
    &= \int \dd{^{d}\ell_{m}} \dd{x} \frac{\ell_{m}^{\nu_{m+1}} \ell_{m}^{\tau_{m+1}} \left( \ell_{m} - \ell_{m-1} \right)^{\nu_{m}} \left( \ell_{m} - \ell_{m-1} \right)^{\tau_{m}}}{\left[ \ell_{m}^2  + x \left(  2\ell_{m} \cdot \ell_{m-1}- \ell_{m-1}^2 \right) \right]^2} \nonumber \\
    &= \int \dd{^{d}\ell_{m}} \dd{x} \frac{\ell_{m}^{\nu_{m+1}} \ell_{m}^{\tau_{m+1}} \left( \ell_{m} - \ell_{m-1} \right)^{\nu_{m}} \left( \ell_{m} - \ell_{m-1} \right)^{\tau_{m}}}{\left[ \left( \ell_{m} + x \ell_{m-1} \right)^2 - x^2 \ell_{m-1}^2  + x \ell_{m-1}^2 \right]^2} ,
\end{align}
or after shifting the integration variable as $\ell_{m} \to \ell_{m} - x \ell_{m-1}$,
\begin{align}\hspace{-30pt}
    L_{m, 2} &= \int \dd{^{d}\ell_{m}} \dd{x} \frac{\left( \ell_{m} - x\ell_{m-1} \right)^{\nu_{m+1}} \left( \ell_{m} - x\ell_{m-1} \right) ^{\tau_{m+1}} \left( \ell_{m} - \left( 1 - x \right) \ell_{m-1} \right)^{\nu_{m}} \left( \ell_{m} - \left( 1 - x \right) \ell_{m-1} \right)^{\tau_{m}}}{\left[ \ell_{m}^2 + x \left( 1 - x \right)  \ell_{m-1}^2  \right]^2} .
\end{align}

We may keep only even powers of $\ell_{m}$ in the integrand, as odd powers vanish by symmetry:

\begin{align}
    L_{m, 2} &= \int \dd{^{d}\ell_{m}} \dd{x} \Bigg( \frac{  \ell_{m}^{\nu_{m+1}} \ell_{m}^{\tau_{m+1}} \ell_{m}^{\nu_{m}} \ell_{m}^{\tau_{m}}  + x^2 \ell_{m-1}^{\nu_{m+1}} \ell_{m-1}^{\tau_{m+1}} \ell_{m}^{\nu_{m}} \ell_{m}^{\tau_{m}} }{\left[ \ell_{m}^2 + x \left( 1 - x \right)  \ell_{m-1}^2  \right]^2} \nonumber \\
    &\phantom{=}~+ \frac{x \left( 1 - x \right)  \ell_{m}^{\nu_{m+1}} \ell_{m-1}^{\tau_{m+1}} \ell_{m}^{\nu_{m}} \ell_{m-1}^{\tau_{m}} + x \left( 1 - x \right)  \ell_{m-1}^{\nu_{m+1}} \ell_{m}^{\tau_{m+1}} \ell_{m-1}^{\nu_{m}} \ell_{m}^{\tau_{m}}}{\left[ \ell_{m}^2 + x \left( 1 - x \right)  \ell_{m-1}^2  \right]^2} \nonumber \\
    &\phantom{=}~+ \frac{ \left( 1 - x \right)^2 \ell_{m}^{\nu_{m+1}} \ell_{m}^{\tau_{m+1}} \ell_{m-1}^{\nu_{m}} \ell_{m-1}^{\tau_{m}} + x^2 \left( 1 - x \right)^2 \ell_{m-1}^{\nu_{m+1}} \ell_{m-1}^{\tau_{m+1}} \ell_{m-1}^{\nu_{m}} \ell_{m-1}^{\tau_{m}} }{\left[ \ell^2_{m} + x\left( 1 - x \right)\ell_{m-1}^2 \right]^2} \Bigg) .
\end{align}

We now replace products of $\ell_m^\mu$ with powers of $\ell_m$ and symmetrized metric factors, following the generalized symmetrization rule \cref{eq:symmetrization}, which yields

\begin{align}
    L_{m, 2} &= \int \dd{^{d}\ell_{m}} \dd{x} \Bigg( \frac{ \frac{\ell_{m}^{4}}{d \left( d + 2 \right) }  g^{(\nu_{m+1} \tau_{m+1}} g^{\nu_{m} \tau_{m} )}  + x^2 \frac{\ell_{m}^2}{d} g^{\mu_{m} \nu_{m}}\ell_{m-1}^{\nu_{m+1}} \ell_{m-1}^{\tau_{m+1}}  }{\left[ \ell_{m}^2 + x \left( 1 - x \right)  \ell_{m-1}^2  \right]^2} \nonumber \\
             &\phantom{=}~+ \frac{ x \left( 1 - x \right)  \frac{\ell_{m}^2}{d} g^{\nu_{m+1} \nu_{m}} \ell_{m-1}^{\tau_{m+1}}  \ell_{m-1}^{\tau_{m}} + x \left( 1 - x \right)  \frac{\ell_{m}^2}{d} g^{\tau_{m+1} \nu_{m}}\ell_{m-1}^{\nu_{m+1}} \ell_{m-1}^{\tau_{m}} }{\left[ \ell_{m}^2 + x \left( 1 - x \right)  \ell_{m-1}^2  \right]^2} \nonumber \\
    &\phantom{=}~+ \frac{\left( 1 - x \right)^2 \frac{\ell_{m}^2}{d}  g^{{\nu_{m+1}}{\tau_{m+1}}} \ell_{m-1}^{\nu_{m}} \ell_{m-1}^{\tau_{m}} + x^2 \left( 1 - x \right)^2 \ell_{m-1}^{\nu_{m+1}} \ell_{m-1}^{\tau_{m+1}} \ell_{m-1}^{\nu_{m}} \ell_{m-1}^{\tau_{m}} }{\left[ \ell^2_{m} + x\left( 1 - x \right)\ell_{m-1}^2 \right]^2} \Bigg)  .
\end{align}

Splitting the numerator up, we can once again apply the standard formula \cref{known_integral} with 

\begin{align}\label{deltasq_app_defn}
    \Delta^2 = - x \left(  1 - x \right) \ell_{m-1}^2 ,
\end{align}
which gives
\begin{align}
    L_{m, 2} &= \frac{i}{\left( 4\pi \right)^{\frac{d}{2}} \Gamma \left( \frac{d}{2} \right) }\int_{0}^{1} \dd{x} \Bigg( \frac{1}{d \left( d + 2 \right) } g^{(\nu_{m+1} \tau_{m+1}} g^{\nu_{m} \tau_{m})}  \Gamma \left( 2 + \frac{d}{2} \right) \Gamma \left( -\frac{d}{2} \right) \Delta^{d}  \nonumber \\
    &\qquad \qquad \qquad + \frac{x^2}{d} g^{\mu_{m} \nu_{m}} \ell_{m-1}^{\nu_{m+1}} \ell_{m-1}^{\tau_{m+1}} \Gamma \left( 1 + \frac{d}{2} \right) \Gamma \left( 1 - \frac{d}{2} \right) \Delta^{d-2}  \nonumber\\
    &\qquad \qquad \qquad + \frac{x\left( 1 - x \right) }{d} g^{\nu_{m+1} \nu_{m}} \ell_{m-1}^{\tau_{m+1}} \ell_{m-1}^{\tau_{m}} \Gamma \left( 1 + \frac{d}{2} \right) \Gamma \left( 1 - \frac{d}{2} \right) \Delta^{d-2}  \nonumber\\
    &\qquad \qquad \qquad + \frac{x\left( 1 - x \right) }{d} g^{\tau_{m+1} \nu_{m}} \ell_{m-1}^{\nu_{m+1}} \ell_{m-1}^{\tau_{m}} \Gamma \left( 1 + \frac{d}{2} \right) \Gamma \left( 1 - \frac{d}{2} \right) \Delta^{d-2}  \nonumber\\
    &\qquad \qquad \qquad + \frac{\left( 1 - x \right)^2 }{d} g^{\nu_{m+1} \tau_{m+1}} \ell_{m-1}^{\nu_{m}} \ell_{m-1}^{\tau_{m}} \Gamma \left( 1 + \frac{d}{2} \right) \Gamma \left( 1 - \frac{d}{2} \right) \Delta^{d-2}  \nonumber\\
    &\qquad \qquad \qquad + x^2 \left( 1 - x \right)^2  \ell_{m-1}^{\nu_{m+1}} \ell_{m-1}^{\tau_{m+1}} \ell_{m-1}^{\nu_{m}} \ell_{m-1}^{\tau_{m}} \Gamma \left( \frac{d}{2} \right) \Gamma \left( 2 - \frac{d}{2} \right) \Delta^{d-4} \Bigg) .
\end{align}

Replacing $\Delta$ using its definition in \cref{deltasq_app_defn}, we see that each term contains $d$ overall factors of loop momenta:

\begin{align}
    L_{m, 2} &= \frac{i}{\left( 4\pi \right)^{\frac{d}{2}} \Gamma \left( \frac{d}{2} \right) }\int_{0}^{1} \dd{x} \Bigg( \frac{1}{d \left( d + 2 \right) }  g^{(\nu_{m+1} \tau_{m+1}} g^{\nu_{m} \tau_{m})}  \Gamma \left( 2 + \frac{d}{2} \right) \Gamma \left( -\frac{d}{2} \right) \left( x\left( 1 - x \right) \right)^{\frac{d}{2}} \ell_{m-1}^{d}   \nonumber \\
    &\qquad \qquad \qquad + \frac{1}{d} g^{\mu_{m} \nu_{m}} \ell_{m-1}^{\nu_{m+1}} \ell_{m-1}^{\tau_{m+1}} \Gamma \left( 1 + \frac{d}{2} \right) \Gamma \left( 1 - \frac{d}{2} \right) x^{\frac{d}{2}+1}\left( 1 - x \right)^{\frac{d}{2} - 1} \ell_{m-1}^{d-2} \nonumber\\
    &\qquad \qquad \qquad + \frac{1}{d} g^{\nu_{m+1} \nu_{m}} \ell_{m-1}^{\tau_{m+1}} \ell_{m-1}^{\tau_{m}} \Gamma \left( 1 + \frac{d}{2} \right) \Gamma \left( 1 - \frac{d}{2} \right) \left( x\left( 1- x \right)  \right)^{\frac{d}{2}} \ell_{m-1}^{d-2} \nonumber\\
    &\qquad \qquad \qquad + \frac{1}{d} g^{\tau_{m+1} \nu_{m}}  \Gamma \left( 1 + \frac{d}{2} \right) \Gamma \left( 1 - \frac{d}{2} \right) \left( x\left( 1- x \right)  \right)^{\frac{d}{2}} \ell_{m-1}^{\nu_{m+1}} \ell_{m-1}^{\tau_{m}}\ell_{m-1}^{d-2} \nonumber\\
    &\qquad \qquad \qquad + \frac{1}{d} g^{\nu_{m+1} \tau_{m+1}}  \Gamma \left( 1 + \frac{d}{2} \right) \Gamma \left( 1 - \frac{d}{2} \right) x^{\frac{d}{2}-1}\left( 1- x \right)^{\frac{d}{2}+1}  \ell_{m-1}^{\nu_{m}} \ell_{m-1}^{\tau_{m}}\ell_{m-1}^{d-2} \nonumber\\
    &\qquad \qquad \qquad +  \Gamma \left( \frac{d}{2} \right) \Gamma \left( 2 - \frac{d}{2} \right) \left( x\left( 1- x \right)  \right)^{\frac{d}{2}} \ell_{m-1}^{\nu_{m+1}} \ell_{m-1}^{\tau_{m+1}} \ell_{m-1}^{\nu_{m}} \ell_{m-1}^{\tau_{m}}\ell_{m-1}^{d-4} \Bigg)  .
\end{align}

We now use the identity $\Gamma \left( 1 + x \right) = x \Gamma \left( x \right)$ to factor and cancel the gamma functions, and finally evaluate the Feynman integrals, giving the result

\begin{align}
    L_{m, 2} &= \frac{i\Gamma \left( -\frac{d}{2} \right)}{\left( 4\pi \right)^{\frac{d}{2}} \Gamma \left( d + 2 \right) } \Bigg( \frac{ \left( 1 + \frac{d}{2} \right) \frac{d}{2}}{d \left( d + 2 \right) }   g^{(\nu_{m+1} \tau_{m+1}} g^{\nu_{m} \tau_{m})}   \Gamma \left( \frac{d}{2}+1 \right)^2 \ell_{m-1}^{d}   \nonumber \\
    &\qquad \qquad \qquad + \frac{\left( \frac{d}{2} \right) \left( -\frac{d}{2} \right)  }{d} g^{\mu_{m} \nu_{m}} \ell_{m-1}^{\nu_{m+1}} \ell_{m-1}^{\tau_{m+1}}  \Gamma \left( \frac{d}{2} + 2 \right) \Gamma \left( \frac{d}{2} \right) \ell_{m-1}^{d-2} \nonumber\\
    &\qquad \qquad \qquad + \frac{\frac{d}{2}\left( -\frac{d}{2} \right) }{d} g^{\nu_{m+1} \nu_{m}} \Gamma \left( \frac{d}{2} + 1 \right)^2 \ell_{m-1}^{\tau_{m+1}} \ell_{m-1}^{\tau_{m}} \ell_{m-1}^{d-2} \nonumber\\
    &\qquad \qquad \qquad + \frac{\frac{d}{2}\left( -\frac{d}{2} \right) }{d} g^{\tau_{m+1} \nu_{m}}   \Gamma \left( \frac{d}{2} + 1 \right)^2 \ell_{m-1}^{\nu_{m+1}} \ell_{m-1}^{\tau_{m}}\ell_{m-1}^{d-2} \nonumber\\
    &\qquad \qquad \qquad + \frac{\frac{d}{2}\left( -\frac{d}{2} \right) }{d} g^{\nu_{m+1} \tau_{m+1}}   \Gamma \left( \frac{d}{2} + 2 \right) \Gamma \left( \frac{d}{2} \right)  \ell_{m-1}^{\nu_{m}} \ell_{m-1}^{\tau_{m}}\ell_{m-1}^{d-2} \nonumber\\
    &\qquad \qquad \qquad +  \left( 1-\frac{d}{2} \right) \left( -\frac{d}{2} \right)   \Gamma \left( \frac{d}{2} + 1 \right)^2 \ell_{m-1}^{\nu_{m+1}} \ell_{m-1}^{\tau_{m+1}} \ell_{m-1}^{\nu_{m}} \ell_{m-1}^{\tau_{m}}\ell_{m-1}^{d-4} \Bigg) .
\end{align}

We see that all terms scale as $\ell_{m-1}^{d}$ and the only divergent gamma function is $\Gamma \left( -\frac{d}{2} \right) $. This establishes the result used in the body of the paper, in the text above \cref{final_n_loop_2_vertex_dimreg}, which can then be applied iteratively to evaluate the remaining loop integrals.

\backmatter

\bibliography{thesis.bib}

\end{document}